\documentclass[11pt,fleqn]{article}
\usepackage{etex}
\usepackage{amssymb,latexsym,amsmath,amsfonts,graphicx, ebezier}
\usepackage{pictex,color}
\usepackage{pict2e}
\usepackage[textsize=footnotesize]{todonotes}
     \topmargin -15mm
    \textwidth 145 true mm
    \textheight 230 true mm
    \oddsidemargin 7mm
    \evensidemargin 7mm
    \marginparwidth 19mm
    \advance\textheight by \topskip

    \numberwithin{equation}{section}

\makeatletter
\def\eqalign#1{\null\vcenter{\def\\{\cr}\openup\jot\m@th
  \ialign{\strut$\displaystyle{##}$\hfil&$\displaystyle{{}##}$\hfil
      \crcr#1\crcr}}\,}
\makeatother

\def\beq{\begin{equation}}
\def\eeq{\end{equation}}

\newcommand{\be}{\begin{equation}}
\newcommand{\ee}{\end{equation}}

\def\Var{{\rm Var}}

    \def\Re{{\rm Re \,}}
    \def\Im{{\rm Im \,}}
    \def\Ai{{\rm Ai \,}}

    \def\bigO{{\cal O}}

    \def\P2n{{\rm P}_{{\rm II}}^{(n)}}

    \newtheorem{theorem}{Theorem}[section]
    
    \newtheorem{corollary}[theorem]{Corollary}
    \newtheorem{proposition}[theorem]{Proposition}
    
    \newtheorem{Definition}[theorem]{Definition}
    
    \newtheorem{Remark}[theorem]{Remark}
    \newenvironment{remark}{\begin{Remark}\rm}{\end{Remark}}
    \newtheorem{Example}[theorem]{Example}
    
    \newtheorem{Assumptions}[theorem]{Assumptions}

    \newenvironment{proof}%
    {\rm \trivlist \item[\hskip \labelsep{\bf Proof. }]}%
    {\hspace*{\fill}$\Box$\endtrivlist}
    {\rm \trivlist \item[\hskip \labelsep{\bf Proof}]}%
    {\hspace*{\fill}$\Box$\endtrivlist}

    \newcommand{\supp}{{\operatorname{supp}}}

\begin{document}
\title{Thinning and conditioning of the Circular Unitary Ensemble}
\author{Christophe Charlier\footnote{Institut de Recherche en Math\'ematique et Physique,  Universit\'e
catholique de Louvain, Chemin du Cyclotron 2, B-1348
Louvain-La-Neuve, BELGIUM}\ \ and Tom Claeys\footnotemark[\value{footnote}]}
\maketitle

\begin{abstract}
We apply the operation of random independent thinning on the eigenvalues of $n\times n$ Haar distributed unitary random matrices. We study gap probabilities for the thinned eigenvalues, and we study the
statistics of the eigenvalues of random unitary matrices which are conditioned such that 
there are no thinned eigenvalues on a given arc of the unit circle. Various probabilistic quantities can be expressed  in terms of Toeplitz determinants and orthogonal polynomials on the unit circle, and we use these expressions to obtain asymptotics as $n\to\infty$.  \end{abstract}

\section{Introduction}\label{sec: intro}

Randomly incomplete spectra of random matrices were introduced by Bohigas and Pato in \cite{BohigasPato1, BohigasPato2}, motivated by problems in nuclear physics.
Two natural questions arose in those works: (1) what can we say about the eigenvalues of the incomplete spectrum, and (2) to what extent does the incomplete spectrum give us information about the complete spectrum? We investigate these questions in detail for Haar distributed unitary matrices in the limit where the matrix size becomes large. 

Thinning is a classical operation in the theory of point processes \cite{Daley, Kallenberg, Renyi}. Given a locally finite configuration of points, it consists of removing points following a deterministic or probabilistic rule. Starting with a given point process, one can in this way create other, thinned, point processes.
If one thins a determinantal point process (see e.g.\ \cite{Johansson, Soshnikov} for an overview of the theory of determinantal point processes) by independently removing each particle with a certain probability, it is remarkable, but straightforward to see, that the resulting process is still determinantal \cite{LMR}.

In this paper, we start from the determinantal point process on the unit circle in the complex plane defined by the eigenvalues of a random Haar distributed unitary random matrix, and we apply a random uniform independent thinning operation to it. This means that we remove each eigenvalue independently with a given probability  which does not depend on the position of the eigenvalue. One can think of the remaining eigenvalues as observed particles and the removed eigenvalues as unobserved particles.

On one hand, we will study the asymptotic behaviour of gap probabilities for the thinned spectrum as the size of the matrices tends to infinity. On the other hand, we will study the asymptotic statistical behaviour of all eigenvalues, given some knowledge about the thinned (observed) eigenvalues. More precisely, we will assume that a certain arc of the unit circle is free of thinned particles, and we will investigate what this condition tells us about the other eigenvalues.

The eigenvalues of Haar distributed unitary matrices are of particular interest because of their remarkable connection to zeros with large imaginary part of the Riemann $\zeta$-function on the critical line \cite{KeatingSnaith}. Spacings between thinned eigenvalues in this model were compared recently \cite{ForresterMays} with thinned data sets of zeros of the Riemann $\zeta$-function, showing accurate agreement.

\medskip

We begin with a more precise definition of the models which we study.

\paragraph{Thinned Circular Unitary Ensemble.}

An $n\times n$ matrix $U$ from the Circular Unitary Ensemble (CUE) is a random Haar distributed unitary $n\times n$ matrix.
We denote $e^{i\theta_1},\ldots, e^{i\theta_n}$ for the eigenvalues of $U$, with $\theta_j\in[0,2\pi)$ for $j=1,\ldots, n$.
The joint probability distribution of the eigenvalues is given by
\begin{equation}\label{jpdf CUE}
\frac{1}{(2\pi)^{n}n!}\prod_{1\leq j<k\leq n}|e^{i\theta_j}-e^{i\theta_k}|^2 \prod_{j=1}^nd\theta_j,\qquad \theta_1,\ldots, \theta_n\in [0,2\pi).
\end{equation}
This is a determinantal point process on the unit circle with correlation kernel \cite{Mehta}
\begin{equation}\label{CUE kernel}
K_n(e^{i\theta},e^{i\mu})=\frac{1}{2\pi}\sum_{k=0}^{n-1}e^{ik(\theta-\mu)}.
\end{equation}

Consider the operation of thinning the eigenvalues of an $n\times n$ CUE matrix, which consists of removing each eigenvalue $e^{i\theta_{1}},...,e^{i\theta_{n}}$ independently  with probability $s=1-p \in (0,1)$. We call $s$ the {\em removal probability} and $p$ the {\em retention probability}. We denote $e^{i\phi_{1}},...,e^{i\phi_{m}}$, $0\leq m \leq n$ for the retained eigenvalues, where the number of particles $m$ is now itself a random variable following the binomial distribution $B(n,p)$. We write 
\[\Theta=\{e^{i\theta_1},\ldots, e^{i\theta_n}\},\qquad \Phi=\{e^{i\phi_1},\ldots, e^{i\phi_m}\}\]
for the complete and thinned spectra.
On the formal level, one can define the thinned CUE by associating to each eigenvalue $e^{i\theta_k}$ an independent Bernoulli random variable taking the value $0$ if the eigenvalue is removed and taking the value $1$ if it is kept.

Thinned point processes interpolate in general between the original point process (for $p=1$) and a Poisson process (as $p\to 0$, after a proper re-scaling) with uncorrelated points \cite{Kallenberg}.
Since the eigenvalues of a CUE matrix are intrinsically repulsive by \eqref{jpdf CUE}, in our situation the retention probability $p$ can be thought of as a parameter measuring the repulsion between particles. The repulsion is strongest for $p=1$, becomes weaker as $p$ decreases, and disappears as $p\to 0$.

On the level of correlation kernels, random independent thinning of a general determinantal point process boils down to  multiplying the correlation kernel with the retention probability $p$, as was observed in \cite[Appendix A]{LMR}.
The correlation kernel for the particles $e^{i\phi_{1}},...,e^{i\phi_{m}}$ is in other words equal to $pK_n(e^{i\theta},e^{i\mu})$, with $K_n$ as in \eqref{CUE kernel}.

It is worth noting that the CUE kernel $K_n$, seen as the kernel of an integral operator acting on $L^2(S^1)$ with $S^1$ the unit circle, defines an orthogonal projection operator of rank $n$ onto the space of polynomials of degree $\leq n-1$. The integral operator associated to the kernel $pK_n$ with $p\in (0,1)$ is still of rank $n$, but not a projection. By the general theory of determinantal point processes \cite{Soshnikov}, this is consistent with the fact that the number of particles in the original process is equal to $n$, whereas it is a random number $\leq n$ in the thinned process.

\medskip

Although eigenvalue correlations (and large $n$ asymptotics thereof) in the thinned CUE are obtained directly from the non-thinned eigenvalue correlations, the question of computing large $n$ asymptotics for gap probabilities and, related to that, eigenvalue spacings, is less trivial. For finite $n$, the probability $\mathbb P\left(\Phi\subset\gamma\right)$ that all thinned eigenvalues lie on a measurable part $\gamma\subset S^1$ is equal to the Fredholm determinant $\det\left(1-p\left.K_n\right|_{\gamma^c}\right)$, where $\gamma^c=S^1\setminus\gamma$ and $\left.pK_n\right|_{\gamma^c}$ is the integral operator with kernel $pK_n$ acting on $L^2(\gamma^c)$.
In the special case where $\gamma^c$ is an arc of the unit circle with arclength $4\pi y/n$ with $y>0$ fixed, the gap probability $\mathbb P\left(\Phi\subset\gamma\right)$ converges as $n\to\infty$ to the Fredholm determinant $\det\left(1-p\left.K^{\mathrm{sin}}\right|_{[-y,y]}\right)$ associated to the sine kernel operator with kernel $pK^{\mathrm{sin}}(u,v)=p\frac{\sin\pi(u-v)}{\pi(u-v)}$. Large gap asymptotics as $y\to\infty$ and $p\to 1$ for this determinant were studied in detail in \cite{BDIK}.

\medskip

For general measurable subsets $\gamma$ of $S^1$, we can also express $\mathbb P\left(\Phi\subset\gamma\right)$ as a Toeplitz determinant in the following elementary way.
We can partition $[0,2\pi)^{n}$ as
\begin{equation}\label{partition}
[0,2\pi)^{n} = \bigsqcup_{k=0}^{n} A_{k}, \qquad A_{k} = \{ (\theta_{1},...,\theta_{n}) \in [0,2\pi)^{n} : \# (\Theta\cap\gamma^c)=k \}.
\end{equation} In this way $A_k$ represents the event that exactly $k$ eigenvalues belong to $\gamma^c$.
A given eigenvalue configuration $\Theta$ leads to a thinned configuration $\Phi$ which lies entirely in $\gamma$ with probability $s^k$, where $k=\#(\Theta\cap\gamma^c)$.
Therefore,
\begin{equation}\label{gap probability}
\mathbb P\left(\Phi\subset\gamma\right)
=\sum_{k=0}^n s^k\mathbb P\left(A_k\right)=\frac{1}{(2\pi)^{n}n!} \sum_{k=0}^{n} s^{k} \int_{A_{k}}\prod_{1\leq j<k\leq n}|e^{i\theta_j}-e^{i\theta_k}|^2 \prod_{j=1}^n d\theta_j.
\end{equation}
Writing $f$ for the piece-wise constant function 
\begin{equation}\label{def symbol}
f(z)=\begin{cases}1& \mbox{ for }z\in\gamma,\\
s& \mbox{ for }z\in\gamma^c,\end{cases}
\end{equation}
we can express the gap probability as a Toeplitz determinant with symbol $f$:
\begin{multline}\label{gap prob Toeplitz}
\mathbb P\left(\Phi\subset\gamma\right)
=\frac{1}{(2\pi)^{n}n!}  \int_{[0,2\pi]^n}\prod_{1\leq j<k\leq n}|e^{i\theta_j}-e^{i\theta_k}|^2 \prod_{j=1}^n f(e^{i\theta_j})d\theta_j\\
=\det\left(\frac{1}{2\pi}\int_0^{2\pi}f(e^{i\theta})e^{-i(k-j)\theta}d\theta\right)_{j,k=1,\ldots, n}=:D_n(f),
\end{multline}
where we used the standard Heine integral representation for Toeplitz determinants.
If $\gamma$ is an arc of arclength $2L$, $f$ given by \eqref{def symbol} has two jump discontinuities and is a special case of a symbol with two Fisher-Hartwig singularities. Throughout the paper, we will write $D_n(s,L)$ for $D_n(f)$ in this case.  The large $n$ asymptotics of Toeplitz determinants for symbols with Fisher-Hartwig singularities have a long history, notable results in this context have been obtained  e.g.\ in \cite{B, BS, DIK, DIK2, Ehrhardt, FH, W}.

If $L\in (0,\pi)$ and $s\in(0,1)$ are independent of $n$, large $n$ asymptotics for $D_n(s,L)$ are known, and we have \cite{DIK, Ehrhardt}
\begin{multline}\label{as FH}
\mathbb P\left(\Phi\subset\gamma\right)=D_n(s,L)=s^{n(1 - \frac{L}{\pi})}n^{\frac{(\log s)^2}{2\pi^2}}(2\sin L)^{\frac{(\log s)^{2}}{2\pi^{2}}} \\
\times\  G\left(1+\frac{\log s}{2\pi i}\right)^2G\left(1-\frac{\log s}{2\pi i}\right)^2\left(1+o(1)\right),\qquad\mbox{as $n\to\infty$,}
\end{multline} 
where $G$ is Barnes' $G$-function. This implies that the probability of having all thinned eigenvalues on $\gamma$ decays exponentially as $n\to\infty$, which is  not surprising since there are, before thinning, on average $\left(1-\frac{L}{\pi}\right) n$ CUE eigenvalues outside $\gamma$. The factor $s^{n(1 - \frac{L}{\pi})}$ corresponds to the probability that $\left(1-\frac{L}{\pi}\right) n$ eigenvalues are removed.
If we allow $L$ and $s$ to depend on $n$, several situations are possible. For instance, one observes that the decay in \eqref{as FH} becomes weaker if we let the removal probability $s\to 1$: the decay is destroyed when $1-s=\bigO(1/n)$. Similarly, if we let, for fixed $s\in(0,1)$, the half arclength $L\to \pi$, the decay becomes weaker.
As $s\to 0$, the decay becomes faster instead. In Section \ref{section: statement of results}, we will give a detailed overview of known asymptotic results for the Toeplitz determinants $D_n(s,L)$, depending on how $L$ and $s$ vary with $n$, and we will state a new result on the asymptotic behaviour of $D_n(s,L)$ if $s\to 0$ at an appropriate speed as $n\to\infty$.

\paragraph{Conditioning on a gap of the thinned spectrum.}

Let us now adopt an opposite point of view.
Instead of studying statistics of the thinned spectrum, we assume that we have information about the incomplete thinned CUE spectrum, and we want to deduce statistical information about the complete non-thinned spectrum. In this context, it is natural to interpret the thinned eigenvalues as {\em observed particles}, and the others as {\em unobserved particles}. The general question which we want to investigate, is: {\em what do the observed particles tell us about the unobserved ones?}

We consider the situation where we observe no particles outside a measurable subset $\gamma$ of the unit circle, equivalently $\Phi\subset\gamma$. We now define a conditional CUE as the joint probability distribution of all $n$ eigenvalues in the complete spectrum $\Theta$, given the fact that $\Phi\subset\gamma$. In this conditional CUE, any eigenvalue configuration $\Theta$ can occur, but configurations with many eigenvalues on $\gamma^c$ are less likely because all of them have to be removed by the thinning procedure. We note that, for finite $n$, the event $\Phi\subset\gamma$ has non-zero probability, so the conditional probability is well-defined in the classical sense.

From the definition of conditional probability $\mathbb P(A|B)=\mathbb P(A\cap B)/\mathbb P(B)$, it is straightforward to see that the conditional joint probability distribution for the eigenvalues is given by
\begin{equation}\label{jpdf conditional CUE}
\frac{1}{Z_n}\prod_{1\leq j<k\leq n}|e^{i\theta_j}-e^{i\theta_k}|^2 \prod_{j=1}^nf(e^{i \theta_j})d\theta_j,\qquad \theta_1,\ldots, \theta_n \in [0,2\pi),
\end{equation}
where $Z_n=Z_n(s,\gamma)$ is a normalization constant, and $f$ is given by \eqref{def symbol}. We will refer to this probability measure as the {\em conditional CUE}.
This is again a determinantal point process; the conditional eigenvalue correlation kernel can be expressed as
\begin{equation}\label{kernel conditional CUE}
K_n(e^{i\theta},e^{i\mu})=\frac{1}{2\pi}\sum_{k=0}^{n-1} \frac{1}{h_{k}} \phi_k(e^{i\theta}) \overline{\phi_k(e^{i\mu})} \sqrt{f(e^{i\theta}){f(e^{i\mu})}},
\end{equation}
where $\phi_k, k=0,1,\ldots$ are the monic orthogonal polynomials on the unit circle with respect to the weight $f$, characterized by the orthogonality conditions
\begin{equation}\label{orthogonality}
\frac{1}{2\pi}\int_0^{2\pi} \phi_k(e^{i\theta})\overline{\phi_\ell(e^{i\theta})}f(e^{i\theta})d\theta=h_k\delta_{k\ell}.
\end{equation}
The polynomial $\phi_n$ is moreover the average characteristic polynomial of the conditional CUE. By the Christoffel-Darboux formula for orthogonal polynomials on the unit circle, see e.g.\ \cite{Simon}, we can express $K_n$ in terms of $\phi_n$ only as follows:
\begin{equation}\label{kernel conditional CUE CD}
K_n(e^{i\theta},e^{i\mu})=\frac{1}{2\pi h_{n}} \sqrt{f(e^{i\theta}){f(e^{i\mu})}}\frac{\phi_n^*(e^{i\theta})\overline{\phi_n^*(e^{i\mu})}-\phi_n(e^{i\theta})\overline{\phi_n(e^{i\mu})}
}{1-e^{i(\theta-\mu)}},
\end{equation}
where $\phi_n^*(z)=z^n\overline{\phi_n}(z^{-1})$ is the reverse polynomial. If $f$ has the symmetry $f(e^{-i\theta}) = f(e^{i\theta})$ then $\overline{\phi_{n}}(z) = \phi_{n}(z)$.

Conditional eigenvalue correlations and gap probabilities can be computed from the kernel $K_n$. The one-point function or mean eigenvalue density $\psi_{n,s,L}$ is given by
\begin{align}\label{onepointfunction}
\psi_{n,s,L}(e^{i\theta}) &= \frac{1}{n} K_n(e^{i\theta},e^{i\theta}) \nonumber \\&= \frac{1}{2\pi n h_{n}}f(e^{i\theta})e^{i\theta} \left( \phi_n'(e^{i\theta}) \overline{\phi_n(e^{i\theta})} - \left( \phi_n^* \right)' (e^{i\theta}) \overline{\phi_n^*(e^{i\theta})} \right).
\end{align} 
Formulas \eqref{kernel conditional CUE CD} and \eqref{onepointfunction} are particularly convenient for asymptotic analysis as $n\to\infty$, since they require only to know the large $n$ behaviour of the average characteristic polynomial $\phi_n$. Other interesting quantities are the conditional average and variance of the number of (unobserved) eigenvalues which lie on $\gamma^c$, given that no thinned (observed) eigenvalues lie on $\gamma^c$.
It is clear that, for $s=0$, this number
 is equal to $0$ with probability $1$ since all eigenvalues are observed. If $s=1$, no particles are observed, and then one shows easily that the average is $\left(1-\frac{L}{\pi}\right)n$ by rotational symmetry.

For a general removal probability $s\in(0,1)$, we have
\begin{align*}
\mathbb E\left(\#(\Theta\cap \gamma^c)|\Phi\subset\gamma\right)&=\sum_{k=0}^n k \mathbb P\left(\#(\Theta\cap \gamma^c)=k|\Phi\subset\gamma\right)\\&=\sum_{k=0}^n k \frac{\mathbb P\left(\Phi\subset\gamma \mbox{ and } \#(\Theta\cap \gamma^c)=k\right)}{\mathbb P\left(\Phi\subset\gamma\right)}.
\end{align*}
Using \eqref{gap probability} and \eqref{gap prob Toeplitz}, we obtain the identity
\begin{equation}\label{identity average Toeplitz}
\mathbb E\left(\#(\Theta\cap \gamma^c)|\Phi\subset\gamma\right)=\sum_{k=0}^n ks^k \frac{\mathbb P\left(\#(\Theta\cap \gamma^c)=k\right)}{\mathbb P\left(\Phi\subset\gamma\right)}=s\frac{\partial}{\partial s}\log D_n(f).
\end{equation}
Similarly, after a straightforward calculation,
\begin{align}
\Var\left(\#(\Theta\cap \gamma^c)|\Phi\subset\gamma\right)&=\mathbb E\left([\#(\Theta\cap \gamma^c)]^2|\Phi\subset\gamma\right)-\mathbb E\left(\#(\Theta\cap \gamma^c)|\Phi\subset\gamma\right)^2\nonumber\\
&=\sum_{k=0}^n k^2s^k \frac{\mathbb P\left(\#(\Theta\cap \gamma^c)=k\right)}{\mathbb P\left(\Phi\subset\gamma\right)}
-\left(s\frac{\partial}{\partial s}\log D_n(f)\right)^2\nonumber\\
&=s^2\frac{\partial^2}{\partial s^2}\log D_n(f)+ s\frac{\partial}{\partial s}\log D_n(f).\label{identity variance}
\end{align}
In the case where $s\in(0,1)$ is independent of $n$ and where $\gamma$ is an arc of arclength $2L$ independent of $n$, we obtain from \eqref{as FH}  that
\begin{align}&\label{average FH}
\mathbb E\left(\#(\Theta\cap \gamma^c)|\Phi\subset\gamma\right)=n \left( 1 - \frac{L}{\pi} \right)+\frac{\log s}{\pi^2}\log n+\bigO(1),\\ \label{variance FH}
&\Var\left(\#(\Theta\cap \gamma^c)|\Phi\subset\gamma\right)=\frac{\log n}{\pi^2}+\bigO(1),
\end{align}
as $n\to\infty$, where the $\bigO(1)$ term can be expressed in terms of Barnes' $G$ function. Remarkably, the leading order terms of the average and number variance are independent of $s$, which suggests that the conditioning does not have a major effect on the CUE eigenvalues, in other words the fact that no particles are observed outside $\gamma$ does not give us much additional information on all particles. As we will see below, this changes drastically if we allow $L$ and $s$ to depend on $n$.

\begin{remark}\label{remark: diff}
It is a priori not rigorously justified to take logarithmic derivatives of the asymptotic expansion \eqref{as FH} in order to obtain \eqref{average FH} and \eqref{variance FH}. However, this can be justified in two ways. First, the asymptotics for the Toeplitz determinants $D_n(s,L)$ in \cite{DIK2} were obtained precisely by integrating asymptotics for $\frac{\partial}{\partial s}\log D_n(s,L)$, see e.g.\ \cite[Section 3]{DIK2}. Secondly, the asymptotics \eqref{average FH} are valid for $s$ in a complex neighbourhood of $(0,1)$ with a $o(1)$ error term which is analytic in $s$ and uniform in compact subsets of this neighbourhood. From Cauchy's integral formula, it then follows that $s$-derivatives of the $o(1)$ term are also $o(1)$. This allows one to justify \eqref{average FH} and \eqref{variance FH}.
\end{remark}

We will study the large $n$ asymptotics for various quantities in the conditional CUE:
\begin{enumerate}
\item the zero counting measure $\nu_{n,s,L}$ of the conditional average characteristic polynomial $\phi_n$,
\item the conditional mean eigenvalue density
\[
\frac{d\mu_{n,s,L}(e^{i\theta})}{d\theta}=\psi_{n,s,L}(e^{i\theta})=\frac{1}{n}K_n(e^{i\theta},e^{i\theta}),
\]
\item the conditional eigenvalue correlation kernels, suitably scaled near different types of points on the unit circle,
\item the conditional average and variance of the number $\#(\Theta\cap \gamma^c)$, i.e.\ the number of particles on $\gamma^c$.
\end{enumerate}

Similarly to the gap probabilities of the thinned spectrum, the above quantities exhibit different types of asymptotic behaviour, depending on how $L$ and $s$ behave as $n\to\infty$. We will give an overview of known results in Section \ref{section: statement of results}, and we will in addition derive new results if $s\to 0$ and $n\to \infty$ simultaneously at an appropriate speed.

\section{Statement of results}\label{section: statement of results}

In what follows, we take $\gamma$ to be an arc of the unit circle with arclength $2L\in(0,2\pi)$. By rotational invariance, we can assume without loss of generality that $\gamma$ is an arc between $z_{0} = e^{ iL}$ and $\overline{z_{0}} = e^{ -iL}$, such that $1\in\gamma$. Note that in this case $\overline{\phi_{n}} = \phi_{n}$.

\subsection{Large $n$ asymptotics for gap probabilities in the thinned CUE}

We describe the asymptotics of the quantity $\mathbb P(\Phi\subset\gamma)=D_n(s,L)$ in $4$ different regimes depending on the behaviour of $s$ and $L$. Recall that
\begin{equation}\label{def Toeplitz}
D_{n}(s,L) = \det(f_{k-j})_{j,k=1,...,n},
\end{equation}
where the $f_k$'s are the Fourier coefficients of $f$ given in \eqref{def symbol}. They can be computed explicitly, we have
\begin{equation}\label{def fk}
f_{k} = \frac{1}{2\pi} \int_{0}^{2\pi} f(e^{i\theta})e^{-ik\theta}d\theta = \left\{ 
\begin{array}{l l}
\frac{L}{\pi} + \frac{\pi - L}{\pi}s, & \mbox{ if } k = 0, \\[2mm]
\frac{(1-s)\sin(kL)}{\pi k}, & \mbox{ if } k \neq 0. \\
\end{array} \right.
\end{equation}

\paragraph{Case I: $s$ and $L$ fixed.} The case where $L\in(0,\pi)$ and $s\in(0,1)$ are independent of $n$ was already discussed in the introduction: we have \eqref{as FH}, which implies that $\mathbb P(\Phi\subset\gamma)$ decays exponentially fast as $n\to\infty$. The precise rate of decay depends on the values of $s$ and $L$. The decay is slower for larger $s$ and larger $L$.

\paragraph{Case II: $s$ fixed, $L\to \pi$.} If $s\in(0,1)$ is independent of $n$ and $L\to \pi$, we can use large $n$ asymptotics for $D_n(s,L)$ obtained in \cite[Theorem 1.5]{ClKr} (our situation corresponds to $\alpha_1=\alpha_2=0$, $\beta_1=-\beta_2=\frac{\log s}{2\pi i} $, and $t=\pi -L$ in the language of \cite{ClKr}). This leads to
\begin{equation}\label{Toeplitz PV}
P\left(\Phi\subset\gamma\right) = \exp\left( \int_{0}^{-2in(\pi-L)} \frac{\sigma(\xi;s)}{\xi}d\xi \right) (1+o(1)),
\end{equation} 
as $n\to\infty$ and simultaneously $L\to \pi_-$. Here, $\sigma(\xi;s)$ is a solution of the $\sigma$-form of the Painlev\'e V equation 
\begin{equation}\label{PV}
\xi^{2} (\sigma'')^{2} = (\sigma - \xi \sigma' + 2 (\sigma')^{2})^{2} - 4 (\sigma')^{4},
\end{equation}
where $'$ denotes a derivative with respect to $\xi$, and we have the asymptotic behaviour (depending on $s$)
\begin{equation}\label{PV as}
\begin{array}{r c l l}
\displaystyle \sigma(\xi;s) & = & \displaystyle \bigO(\xi \log \xi ), & \displaystyle \xi \to i0_{-}, \\
\displaystyle \sigma(\xi;s) & = & \displaystyle -\frac{\log s}{2\pi i}\xi + \frac{(\log s)^{2}}{2\pi^{2}} +\bigO(\xi^{-1}), & \displaystyle \xi \to -i\infty.
\end{array}
\end{equation}
In the special case $L=\pi\left(1-\frac{4 y}{n}\right)$ with $y>0$, this can also be derived from the classical results by Jimbo, Miwa, M\^ori, and Sato on the sine kernel Fredholm determinant \cite{JMMS}, recall the discussion above \eqref{partition}. With this scaling of $L$, $\mathbb P(\Phi\subset\gamma)$ does not decay as $n\to\infty$ but tends to a constant depending on the values of $y$ and $s$.
Asymptotics of the integral at the right hand side of \eqref{Toeplitz PV} as $n(\pi-L)\to\infty$ are described in \cite[Formula (1.26)]{ClKr}.

\paragraph{Case III: $s\to 0$ sufficiently fast and $L$ fixed or $L\to\pi$ sufficiently slow.} If $L\in (0,\pi)$ is fixed and $s\to 0$ sufficiently fast such that \begin{equation}\label{def xc}\frac{1}{n}\log s\leq -x_c:=2\log\tan\frac{L}{4},
\end{equation} we can use the results from \cite[Theorem 1.1]{ChCl} to obtain
\begin{equation}\label{as Widom}
\mathbb P\left(\Phi\subset\gamma\right)=\left(\sin\frac{L}{2}\right)^{n^2}n^{-1/4}\left(\cos\frac{L}{2}\right)^{-1/4}2^{\frac{1}{12}}e^{3\zeta'(-1)}\left(1+o(1)\right),
\end{equation}
as $n\to\infty$, where $\zeta$ is the Riemann $\zeta$-function. This implies that the gap probability decays faster than exponentially in this case. The estimate \eqref{as Widom} is the same as in the case $s=0$, in which the asymptotics for the Toeplitz determinants were obtained in \cite{Widom}. This estimate remains valid if the arclength $L$ tends to $\pi$ as $n\to\infty$, as long as $n(\pi-L)\to \infty$ \cite{ChCl}.

In this case, a very small fraction of eigenvalues are removed for $n$ large. In fact, the expected number of removed eigenvalues tends to $0$ exponentially fast as $n\to\infty$, so typically we observe all eigenvalues, and it is not surprising that the gap probability behaves like in the case $s=0$. However, this is no longer true if $s\to 0$ at a slower rate.

\paragraph{Case IV: $s\to 0$ at a slower rate, $L$ fixed.} 
In the case where $s\to 0$ in such a way that $\log s=-xn$ with $0<x<x_c$, asymptotics for $D_n(s,L)$ are not available in the literature. We will study this case in detail and prove the following result.
\begin{theorem}\label{theorem: Toeplitz}
Let $D_n(s,L)$ be the Toeplitz determinant defined in \eqref{def Toeplitz}-\eqref{def fk}.
If $0 \leq x \leq x_c$ with $x_c$ as in \eqref{def xc}, we have
\begin{equation} \label{eq_Toeplitz}
\lim_{n\to\infty}\frac{1}{n^2}\log D_{n}(e^{-xn},L) = - \int_{0}^{x} \Omega_{\xi,L}d\xi.
\end{equation}
Here $\Omega_{\xi,L}$ is given by 
\begin{equation}\label{def Omega}
\Omega_{\xi,L}=\frac{1}{2\pi}\int_{\pi-T}^{\pi+T}\sqrt{\frac{\cos\theta+\cos T}{\cos \theta -\cos L}}d\theta,
\end{equation}
where $0\leq T\leq \pi-L$ is the unique solution of the equation
\begin{equation} \label{x theta_1 equation}
\frac{1}{\pi}\int_{[-L,L]\cup[\pi -T,\pi+T]}  \log \left| \frac{1+e^{i\theta}}{1-e^{i\theta}} \right| \sqrt{\frac{\cos\theta+\cos T}{\cos \theta -\cos L}}d\theta = \xi.
\end{equation}\end{theorem}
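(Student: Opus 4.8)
The plan is to recognise $D_n(e^{-xn},L)$ as the partition function of a $\beta=2$ Coulomb gas on the unit circle in an external field, to reduce its leading asymptotics to an equilibrium problem of weighted potential theory, and then to solve that problem explicitly; the condition \eqref{x theta_1 equation} will turn out to be exactly the Euler--Lagrange equation that fixes the endpoints of the support. Concretely, putting $s=e^{-xn}$ in the Heine representation \eqref{gap prob Toeplitz} and writing $f(e^{i\theta})=e^{-xn\,\mathbf{1}_{\gamma^c}(e^{i\theta})}$,
\[
D_n(e^{-xn},L)=\frac{1}{(2\pi)^n n!}\int_{[0,2\pi)^n}\exp\Big(2\!\!\sum_{1\le j<k\le n}\!\!\log|e^{i\theta_j}-e^{i\theta_k}|-xn\sum_{j=1}^n\mathbf{1}_{\gamma^c}(e^{i\theta_j})\Big)\prod_{j=1}^n d\theta_j ,
\]
a log-gas on $S^1$ in the bounded, lower semicontinuous field $Q=x\,\mathbf{1}_{\gamma^c}$. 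By the standard large deviation principle for such ensembles the empirical measure concentrates at speed $n^2$ around the weighted equilibrium measure $\mu_x$ of $Q$, and
\[
\lim_{n\to\infty}\frac{1}{n^2}\log D_n(e^{-xn},L)=-\mathcal E_x,\qquad \mathcal E_x:=\min_{\mu}\Big(\iint\log\frac1{|z-w|}\,d\mu(z)\,d\mu(w)+x\!\int_{\gamma^c}\!d\mu\Big),
\]
the minimum over probability measures on $S^1$; the prefactor $(2\pi)^{-n}(n!)^{-1}$ and the entropy corrections are $o(n^2)$, and admissibility in the sense of Saff--Totik holds since $Q$ is bounded and lsc. (Equivalently, one can get the $x$-derivative of the left side from the differential identity $\frac{\partial}{\partial x}\log D_n(e^{-xn},L)=-n\,\mathbb E(\#(\Theta\cap\gamma^c)\,|\,\Phi\subset\gamma)$, which follows from \eqref{identity average Toeplitz} and the chain rule, together with $\tfrac1n\mathbb E(\cdots)=\mu_{n,s,L}(\gamma^c)\to\mu_x(\gamma^c)$.)

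\emph{Solving the equilibrium problem.} By the symmetry $Q(e^{-i\theta})=Q(e^{i\theta})$, and because the field repels mass from $\gamma^c$ while the point $-1$ is the cheapest place on $\gamma^c$ (being farthest from the bulk of mass on $\gamma$), one expects $\operatorname{supp}\mu_x=\{e^{i\theta}:\theta\in[-L,L]\cup[\pi-T,\pi+T]\}$ for some $T=T(x)\in[0,\pi-L]$. I would take the square-root ansatz $d\mu_x(e^{i\theta})=\frac1{2\pi}\sqrt{\frac{\cos\theta+\cos T}{\cos\theta-\cos L}}\,d\theta$ on this support, check that it is nonnegative there, and verify, conveniently after the Joukowski folding $u=2\cos\theta$ (which sends $S^1$ onto $[-2,-2\cos T]\cup[2\cos L,2]$) followed by a residue computation, that it is a probability measure for every $T\in[0,\pi-L]$ and that $\mathrm{p.v.}\int\cot\frac{\theta-\phi}{2}\,d\mu_x(e^{i\phi})=0$ on the interior of each band; equivalently, with $U^{\mu}(z)=\int\log\frac1{|z-w|}d\mu(w)$, the effective potential $2U^{\mu_x}(e^{i\theta})+Q(e^{i\theta})$ is constant on each band, equal to $F_1$ on $[-L,L]$ and $F_2$ on $[\pi-T,\pi+T]$. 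The remaining Euler--Lagrange condition is $F_1=F_2$; using $\log\big|\tfrac{1+e^{i\theta}}{1-e^{i\theta}}\big|=U^{\delta_1}(e^{i\theta})-U^{\delta_{-1}}(e^{i\theta})$ and Fubini, the left side of \eqref{x theta_1 equation} equals $2\big(U^{\mu_x}(1)-U^{\mu_x}(-1)\big)=F_1-(F_2-x)$, so \eqref{x theta_1 equation} \emph{is} the equation $F_1=F_2$. One then shows that $T\mapsto$ (left side of \eqref{x theta_1 equation}) is strictly decreasing on $[0,\pi-L]$, taking the value $x_c$ at $T=0$ and $0$ at $T=\pi-L$ (differentiate under the integral sign, using $\cos(\pi-L)=-\cos L$ at the latter endpoint), which yields a unique $T(x)$ for $0\le x\le x_c$; and one checks the variational inequality $2U^{\mu_x}(e^{i\theta})+Q(e^{i\theta})\ge F_1$ on the gaps $(L,\pi-T)\cup(\pi+T,2\pi-L)$ by monotonicity, the $\theta$-derivative of the left side equalling there a purely imaginary branch of the square root and hence having fixed sign. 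By Frostman's theorem these conditions identify the constructed measure as $\mu_x$.

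\emph{Conclusion and main obstacle.} Since $\mu\mapsto\iint\log\frac1{|z-w|}d\mu\,d\mu+x\int_{\gamma^c}d\mu$ is affine in $x$ with a unique minimizer varying continuously in $x$, the envelope theorem gives $\frac{d}{dx}\mathcal E_x=\int_{\gamma^c}d\mu_x=\mu_x(\gamma^c)=\int_{\pi-T}^{\pi+T}\frac1{2\pi}\sqrt{\frac{\cos\theta+\cos T}{\cos\theta-\cos L}}\,d\theta=\Omega_{x,L}$; together with $\mathcal E_0=0$ (the uniform measure has zero logarithmic energy) this gives $\mathcal E_x=\int_0^x\Omega_{\xi,L}\,d\xi$, which with the first step is \eqref{eq_Toeplitz}. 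As a check, $\mathcal E_{x_c}=-\log\sin\frac L2=-\log\operatorname{cap}(\gamma)$, consistent with \eqref{as Widom} and the regime $x\ge x_c$. The main obstacle is the equilibrium computation: verifying that the explicit square-root density solves the singular integral equation on the two bands, satisfies the variational inequality on the gaps, and that $T\mapsto x$ is monotone. This is concrete but delicate and is best organised around an auxiliary function analytic off the support with prescribed square-root branch behaviour (a ``resolvent''), whose boundary values produce simultaneously the density, the constancy of the effective potential on the bands, and its sign on the gaps; making the large deviation principle fully rigorous for this $n$-dependent, discontinuous, degenerating weight is more routine but still requires care, and a finer Riemann--Hilbert analysis would give more than the leading order but is not needed here.
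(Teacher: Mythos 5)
Your proposal is correct in outline but follows a genuinely different route from the paper. The paper never invokes a large deviation principle: it obtains \eqref{limOmega}, i.e.\ $\frac1n\,s\partial_s\log D_n(s,L)|_{s=e^{-\xi n}}=\frac1n\mathbb E(\#(\Theta\cap\gamma^c)\mid\Phi\subset\gamma)\to\Omega_{\xi,L}$, from the Riemann--Hilbert asymptotics of the orthogonal polynomials (which it needs anyway for Theorems \ref{theorem macro} and \ref{theorem kernel}), and then integrates in $\xi$ using the trivial bound $\frac1n\mathbb E(\cdots)\le 1$ together with dominated convergence --- exactly the argument you relegate to your ``equivalently'' parenthesis. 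Your main route instead identifies $-\lim\frac1{n^2}\log D_n$ with the minimal weighted energy $\mathcal E_x$ and differentiates in $x$. Both halves of this are sound: the identification of \eqref{x theta_1 equation} with the condition $F_1=F_2$ via $2\bigl(U^{\mu_x}(1)-U^{\mu_x}(-1)\bigr)=F_1-F_2+x$ is correct, and the envelope/Danskin argument ($\mathcal E_x$ is a concave infimum of affine functions of $x$, with derivative $\mu_x(\gamma^c)=\Omega_{x,L}$, plus $\mathcal E_0=0$) is in fact a cleaner path to $\int_0^{x}\Omega_{\xi,L}d\xi$ than the paper's Appendix \ref{appendix: proofOmega}, which proves the endpoint identity \eqref{identity intOmega} by a Riemann bilinear identity. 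What your approach buys is independence from the RH machinery at leading order; what it costs is (i) a rigorous upper and lower bound for the partition function with the discontinuous varying weight $e^{-nx\mathbf 1_{\gamma^c}}$ --- doable, e.g.\ by sandwiching the field between continuous potentials agreeing off $\epsilon$-neighbourhoods of $e^{\pm iL}$, but this is real work that you only sketch --- and (ii) the full verification of the two-band equilibrium measure \eqref{def muxL} and the variational conditions \eqref{EulerLagrange}, which the paper simply imports from \cite[Proposition 3.1]{ChCl}. Neither point is a gap in principle, but the burden of proof in your version sits precisely in the two places you defer.
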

\begin{remark}
The fact that equation \eqref{x theta_1 equation} has a unique solution was shown in \cite[Proposition 3.1]{ChCl}. 
\end{remark}
\begin{remark}
The decay of $\mathbb P\left(\Phi\subset\gamma\right)=D_n(e^{-xn},L)$ as $n\to\infty$ is superexponential. Setting $x=x_c$ in \eqref{eq_Toeplitz}, we recover 
\[P\left(\Phi\subset\gamma\right)=D_n(e^{-xn},L)=\left(\sin\frac{L}{2}\right)^{n^2}o(\exp(n^2))\] as in \eqref{as Widom}, provided that the identity
\begin{equation}
\label{identity intOmega}
\int_0^{x_c}\Omega_{\xi,L}d\xi=-\log\sin\frac{L}{2}
\end{equation}
holds. We will give an independent proof of this identity in Appendix \ref{appendix: proofOmega}.
As $x\to 0$, by \eqref{eq_Toeplitz}, the superexponential decay disappears, which makes the connection with the behaviour in case I. Further terms in the asymptotic expansion of $D_n(e^{-xn},L)$ are expected to contain quantities related to elliptic $\theta$-functions, but are hard to compute explicitly.
\end{remark}

\begin{remark}
The quantities $\Omega_{x,L}$ and $T$ are closely related to an equilibrium problem.
Define $\mu_{x,L}$ as the measure
supported on the two disjoint arcs $\theta\in [-L,L]\cup[\pi -T,\pi+T]$ with density
\begin{equation}\label{def muxL}
\frac{d\mu_{x,L}(e^{i\theta})}{d\theta}=
\frac{1}{2\pi}\sqrt{\frac{\cos\theta+\cos T}{\cos \theta -\cos L}}=:\psi_{x,L}(e^{i\theta}),
\end{equation}
see Figure \ref{figure:eqdens}.
It was shown in \cite[Proposition 3.1]{ChCl} that $\mu_{x,L}$ 
is the unique equilibrium measure minimizing the logarithmic energy
\begin{equation}\label{energy}
\iint \log |z-u|^{-1} d\mu(z)d\mu(u) + \int V(z)d\mu(z),
\end{equation}
among all Borel probability measures $\mu$ on the unit circle, where 
\begin{equation}\label{V}
V(e^{i\theta}) = \begin{cases}
0, & \mbox{ for } e^{i\theta} \in \gamma, \\
x, & \mbox{ for } e^{i\theta} \in S^{1} \setminus \gamma.
\end{cases}
\end{equation}
It is characterized by the 
Euler-Lagrange variational conditions
\begin{equation}\label{EulerLagrange}
\begin{array}{r r}
\displaystyle 2 \int_{-\pi}^{\pi} \log |z-e^{i\theta}|d\mu_{x,L}(e^{i\theta}) - V(z) + \ell_{x,L} = 0, & \displaystyle \mbox{ for } z \in \supp\,\mu_{x,L}, \\
\displaystyle 2 \int_{-\pi}^{\pi} \log |z-e^{i\theta}|d\mu_{x,L}(e^{i\theta}) - V(z) + \ell_{x,L} < 0, & \displaystyle \mbox{ for } z \in S^{1}\setminus \supp\,\mu_{x,L}, \\
\end{array}
\end{equation} where  \begin{equation}\label{def ell}
\ell_{x,L} = - \int_{0}^{1} \frac{1}{u} \left( 1 - \sqrt{\frac{u^2 + 2u\cos T  +1}{u^2 - 2u\cos L +1}} \right) du. 
\end{equation}
We have $\Omega_{x,L}=\int_{\pi-T}^{\pi+T} d\mu_{x,L}(e^{i\theta})$.
For $x\geq x_c$, the measure $\mu_{x,L}=\mu_{\infty,L}$ minimizing \eqref{energy} does not depend on the value of $x$ and is supported on the single arc $\theta\in[-L,L]$.
\end{remark}
\begin{figure}[t] \includegraphics[width=1\textwidth]{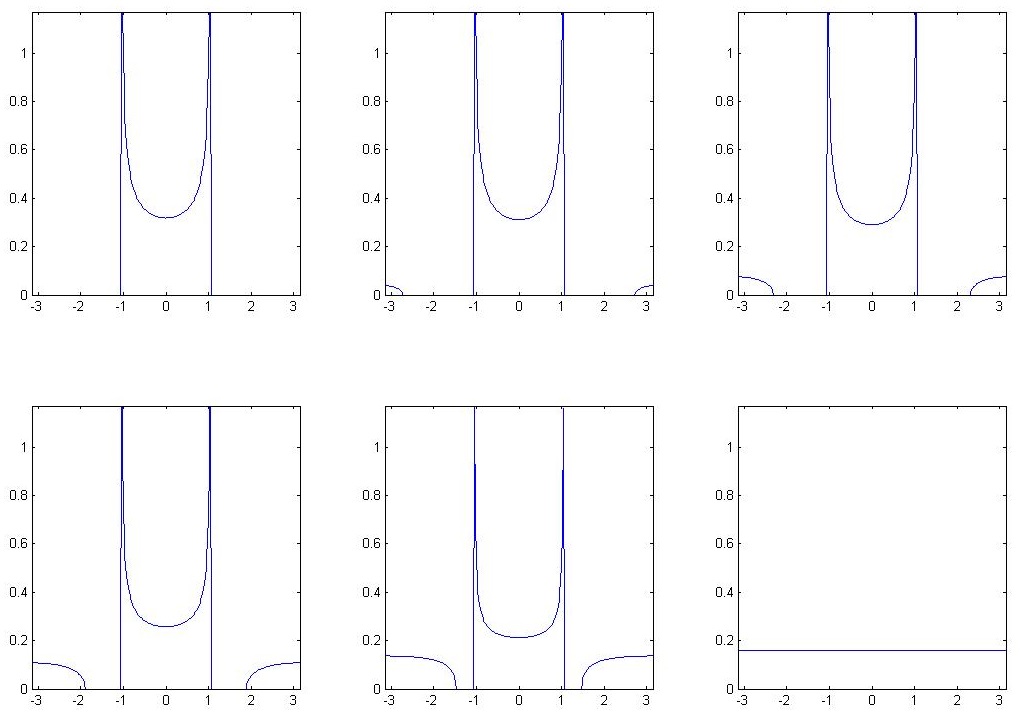} 
\caption{\label{figure:eqdens}Density of the equilibrium measure $\mu_{x,L}$ as a function of $e^{i\theta}$ with $\theta\in[-\pi,\pi]$, for $L=\pi/3$ and for different values of $x$: the upper left picture corresponds to $x \geq x_c$, then $x$ decreases from left to right and from top to bottom, and the bottom right picture corresponds to $x=0$.}
\end{figure}

\subsection{Large $n$ asymptotics in the conditional CUE}

We now take a look at the large $n$ behaviour of the zero counting measure of the average characteristic polynomial $\phi_n$, the limiting mean eigenvalue distribution,  scaling limits of the eigenvalue correlation kernels, and the average and variance of the number of particles on $\gamma^c$, in the conditional CUE.
All these objects can be expressed in terms of the average characteristic polynomial $\phi_n$ and the Toeplitz determinant $D_n(s,L)$, as explained in Section \ref{sec: intro} (recall formulas \eqref{kernel conditional CUE CD}--\eqref{identity variance}).
We denote \begin{equation}\label{def mun}
\mu_{n,s,L}(z) := \mathbb{E} \left( \frac{1}{n} \sum_{j=1}^n \delta (z - e^{i\theta_j}) \right)
\end{equation}
for the average counting measure of the eigenvalues in the conditional CUE, with density $\psi_{n,s,L}$ given by \eqref{onepointfunction}, and we let $\nu_{n,s,L}$ be the zero counting measure of the average characteristic polynomial $\phi_n$.

We again distinguish between the $4$ different regimes identified before. In Cases I-III, large $n$ asymptotics can be derived from results for the Toeplitz determinants and average characteristic polynomials that are (explicitly or implicitly) available in the literature. In Case IV, large $n$ asymptotics for $\phi_n$ and $D_n$ are not available in the literature. We will perform a detailed asymptotic analysis using Riemann-Hilbert (RH) methods in Section \ref{section: RH}, and this will enable us to obtain asymptotic results in the conditional CUE.

\paragraph{Case I: $s$ and $L$ fixed.} 

An asymptotic analysis of $\phi_n$ in this case was done in \cite{DIK}. Using the results obtained in that paper, we will show in Section \ref{section: proofs} that $\nu_{n,s,L}$ and $\mu_{n,s,L}$, for $0<s<1$, converge weakly to the uniform measure on the unit circle; on the level of densities, we have
\begin{equation}\label{psi s fixed} \lim_{n\to\infty} \psi_{n,s,L} (e^{i\theta}) = \frac{1}{2\pi},\end{equation}
uniformly for $\theta$ in compact subsets of $[-\pi,\pi]\setminus \{ -L,L \}$.
Note that for $s=1$, we have $\phi_n(z)=z^n$ and then $\nu_{n,1,L}$ does not converge to the uniform measure on the circle, since it is a Dirac measure at $0$.

For the eigenvalue correlation kernel $K_n$, one expects from the analysis in \cite{DIK} to have the scaling limit
\begin{equation}\label{sine kernel limit}
\lim_{n\to\infty}\frac{1}{cn}K_n\left(e^{i\left(\theta+\frac{u}{cn}\right)}, e^{i\left(\theta+\frac{v}{cn}\right)}\right) = e^{i \frac{u-v}{2c}} \frac{\sin \pi(u-v)}{\pi(u-v)},
\end{equation}
for any $u,v\in\mathbb R$ and $\theta\in [0,2\pi]\setminus\{L,2\pi-L\}$, and where $c=\frac{1}{2\pi}$. Near $e^{\pm iL}$, the asymptotics for $\phi_n$ are described in terms of confluent hypergeometric functions, and instead of the sine kernel, one expects a limiting kernel built out of confluent hypergeometric functions.
These scaling limits can be derived from the asymptotics for $\phi_n$, using similar computations as the ones in Sections \ref{section: proof Bessel}--\ref{section: sine} (where we prove scaling limits in Case IV). The asymptotics for the expected number of eigenvalues on $\gamma^c$ were already given in \eqref{average FH}, and for the number variance in \eqref{variance FH}.

The above results can heuristically be explained as follows: the expected fraction of CUE eigenvalues on $\gamma^c$ is equal to $1-\frac{L}{\pi}$.
From the theory of large deviations, see \cite{AGZ}, it follows that the probability of having a smaller fraction of eigenvalues on $\gamma^c$ decays exponentially fast in $n^2$ as $n\to\infty$. On the other hand, the probability that all eigenvalues on $\gamma^c$ are removed by the thinning procedure is only exponentially small in $n$. In other words, if there are no observed particles on $\gamma^c$, that does not say much about the complete spectrum, it just means that the thinning procedure has accidentally removed all the eigenvalues on $\gamma^c$.
It is expected that the eigenvalues uniformly fill out the arc $\gamma^c$ as $n\to \infty$, just like in the CUE without conditioning. The conditioning only has an effect on the local behaviour of eigenvalues near $e^{\pm iL}$: near those points, eigenvalues are more likely to lie on $\gamma$ than outside.

\paragraph{Case II: $s$ fixed, $L\to \pi$.} 

Here, asymptotics for $\phi_n$ were obtained in \cite{ClKr}. We will show in Section \ref{section: proofs}, relying on the analysis in \cite{ClKr}, that  $\mu_{n,s,L}$ and $\nu_{n,s,L}$ converge weakly to the uniform measure on the unit circle. For the correlation kernel, we expect convergence to the sine kernel as in \eqref{sine kernel limit}, except when $\theta=\pi$, in which case one expects a limiting kernel built out of functions related to the Painlev\'e V equation, see  \cite{ClaeysFahs} for a similar situation on the real line.

Taking logarithmic derivatives of \eqref{Toeplitz PV} with respect to $s$, we obtain by \eqref{identity average Toeplitz}--\eqref{identity variance} that the average number of particles on $\gamma^c$ and the number variance are given by
\begin{align}
&\mathbb E\left(\#(\Theta\cap \gamma^c)|\Phi\subset\gamma\right)=s\int_{0}^{-2in(\pi-L)} \frac{\partial_s\sigma(\xi;s)}{\xi}d\xi  +o(1),\\
&\Var\left(\#(\Theta\cap \gamma^c)|\Phi\subset\gamma\right)=\int_{0}^{-2in(\pi-L)} \frac{s^2\partial_s^2\sigma(\xi;s)+s\partial_s\sigma(\xi;s)}{\xi}d\xi  +o(1),
\end{align}
as $n\to\infty$, where $\sigma(\xi;s)$ is the Painlev\'e V transcendent defined in \eqref{PV}--\eqref{PV as}.
This can be justified rigorously as described in Remark \ref{remark: diff}. 

The case where $L=\pi\left(1-\frac{4y}{n}\right)$ with $y>0$ independent of $n$ is of particular interest because the probability of having no eigenvalues on $\gamma^c$ does not decay as $n\to\infty$, and this means that we condition on an event which is not unlikely to occur for large $n$. The average and variance of the number of particles on $\gamma^c$ are described in terms of a Painlev\'e V transcendent which depends on the removal probability $s$. Here, the fact that we do not observe particles on $\gamma^c$ does have implications for the unobserved particles.

\paragraph{Case III: $s\to 0$ sufficiently fast and $L$ fixed or $L\to\pi$ sufficiently slow.}

As $n\to\infty$ and at the same time $s$ goes to $0$ in such a way that $s\leq \left(\tan\frac{L}{4}\right)^{2n}$, asymptotics for $\phi_n$ were obtained in \cite{ChCl}. Both $\mu_{n,s,L}$ and $\nu_{n,s,L}$ converge weakly to the measure
\begin{equation}\label{eq measure}
d\mu_{\infty,L}(e^{i\theta}) := \frac{1}{2\pi} \sqrt{\frac{\cos\theta+1}{\cos\theta -\cos L}} d\theta, \qquad \theta \in [-L,L],
\end{equation}
see Section \ref{section: proofs}.
Moreover,
\begin{equation}\label{lim density}
\lim_{n\to\infty}\psi_{n,s,L}(e^{i\theta})  = \frac{1}{2\pi} \sqrt{ \frac{ \cos \theta + 1}{\cos \theta - \cos L}} \chi_{[-L,L]}(\theta),
\end{equation}
uniformly for $\theta$ in compact subsets of $[-\pi,\pi]\setminus \{ -L,L \}$. 
For the eigenvalue correlation kernel, we expect the sine kernel scaling limit \eqref{sine kernel limit} with $c=\frac{1}{2\pi} \sqrt{ \frac{ \cos \theta + 1}{\cos \theta - \cos L}}$ for $\theta\in (-L,L)$.
Near the edges of $\gamma$, we obtain the Bessel kernel,
\begin{multline}\label{Bessel kernel limit}
\lim_{n\to\infty} \frac{1}{(cn)^2} K_n \left( e^{\pm i\left(L-\frac{u}{(cn)^2}\right)}, e^{\pm i\left(L-\frac{v}{(cn)^2}\right)}\right)\\ = \frac{J_{0}(\sqrt{u})\sqrt{v}J_{0}^{\prime}(\sqrt{v})-J_{0}(\sqrt{v})\sqrt{u}J_{0}^{\prime}(\sqrt{u})}{2(u-v)},
\end{multline}
for $u,v>0$, where $c = \sqrt{\cot \frac{L}{2}}$.
This is remarkable because the Bessel kernel usually appears near hard edges of the spectrum. In our situation, for finite $n$ and $s\in(0,1)$, it can happen that there are eigenvalues on $\gamma^c$, so we do not have a hard edge.
Taking the logarithmic $s$-derivative of \eqref{as Widom}, we obtain by \eqref{identity average Toeplitz}--\eqref{identity variance},
\begin{align}&\label{average Widom}
\mathbb E\left(\#(\Theta\cap \gamma^c)|\Phi\subset\gamma\right)=o(s),\\
&\Var\left(\#(\Theta\cap \gamma^c)|\Phi\subset\gamma\right)=o(s)
,
\end{align}
as $n\to\infty$.
In other words, the expected number of eigenvalues on $\gamma^c$ is exponentially small, as well as the number variance.

In this case, observing no particles on $\gamma^c$ implies that it is unlikely to have unobserved particles there too. This is natural because typically no or very few particles are removed by the thinning procedure.
Then, because of the repulsion between the eigenvalues, there tend to be more eigenvalues near the edges $e^{\pm i L}$ of $\gamma$, which heuristically explains the blow-up of the limiting mean eigenvalue density.
Asymptotically for large $n$, the eigenvalues in the conditional CUE behave locally near $e^{\pm  iL}$ like near a hard edge.

\paragraph{Case IV: $s\to 0$ at a slower rate, $L$ fixed.} 

This is the intermediate regime between Case I and Case III. In contrast to Case I, not observing particles outside $\gamma$ does have an effect on the large $n$ macroscopic behaviour of the eigenvalues, but it is not as drastic as in Case III. Typically, for large $n$, there will be a non-zero fraction of eigenvalues outside $\gamma$, but this fraction is smaller than in Case I.
We prove the following results.

\begin{theorem}\label{theorem macro}
Fix $0<L<\pi$.
We take $s=e^{-xn}$ with $0<x < x_c = - 2 \log \tan \frac{L}{4}$.
\begin{enumerate}
\item The measures $\mu_{n,s,L}$ and $\nu_{n,s,L}$ converge weakly to the measure $\mu_{x,L}$ given by \eqref{def muxL} as $n\to\infty$; moreover,
\begin{equation}
\lim_{n\to\infty} \psi_{n,s,L}(e^{i\theta}) 
= \frac{d\mu_{x,L}(e^{i\theta})}{d\theta},
\end{equation}
uniformly for $\theta$ in compact subsets of $[-\pi,\pi]\setminus \{ -\pi+T,-L,L,\pi-T \}$.
\item We have
\begin{align}&\label{limOmega}
\lim_{n\to\infty}\frac{1}{n}\mathbb E\left(\#(\Theta\cap \gamma^c)|\Phi\subset\gamma\right)=\int_{\pi-T}^{\pi+T}d\mu_{x,L}(e^{i\theta})d\theta=\Omega_{x,L},\\
&\lim_{n\to\infty}\frac{1}{n}\Var\left(\#(\Theta\cap \gamma^c)|\Phi\subset\gamma\right)=0.\label{thm variance}
\end{align}
\end{enumerate}
\end{theorem}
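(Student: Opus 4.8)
\section*{Proof proposal for Theorem \ref{theorem macro}}

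The plan is to reduce everything to the large $n$ asymptotics of the average characteristic polynomial $\phi_n$ and of the Toeplitz determinant $D_n(s,L)$, which will be produced by the Riemann--Hilbert analysis of Section \ref{section: RH}. The key input I would extract from that analysis is an outer-parametrix-type statement: uniformly for $z$ in compact subsets of $S^1\setminus\{e^{\pm iL},e^{i(\pi\mp T)}\}$ (and on both sides of the arcs),
\begin{equation}\label{prop:philog}
\frac{1}{n}\log\phi_n(z)=\int \log(z-e^{i\theta})\,d\mu_{x,L}(e^{i\theta})+o(1),
\end{equation}
with $\mu_{x,L}$ the equilibrium measure from \eqref{def muxL}, together with a matching local description near the four special points in terms of Bessel/confluent-hypergeometric parametrices. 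Granting \eqref{prop:philog}, the weak convergence $\nu_{n,s,L}\to\mu_{x,L}$ follows because $\frac1n\log|\phi_n(z)|$ converges, locally uniformly off $\operatorname{supp}\mu_{x,L}$ and off the "band" arcs, to the logarithmic potential $U^{\mu_{x,L}}(z)$; taking distributional Laplacians and using that $\frac1n\log|\phi_n|$ is subharmonic with the right growth gives $\nu_{n,s,L}\to\mu_{x,L}$ (this is the standard argument, see e.g.\ the potential-theoretic part of \cite{DIK}). One should check the local contributions near $e^{\pm iL}$ and $e^{i(\pi\mp T)}$ do not carry mass in the limit, which follows from the $\bigO(\log n)$-type bounds on $\phi_n$ near those points coming from the local parametrices.

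For part (1) at the level of densities I would plug the RH asymptotics for $\phi_n$, $\phi_n^*$ and $\phi_n'$, $(\phi_n^*)'$, together with the asymptotics of $h_n$ (read off from $D_n(s,L)/D_{n-1}(s,L)$, or directly from the RH solution), into the exact formula \eqref{onepointfunction}. On the band arc $\theta\in(-L,L)$ the oscillatory outer parametrix produces, after the cancellation of the dominant exponential factors $e^{\pm n(\cdots)}$, exactly the density $\psi_{x,L}(e^{i\theta})$ of \eqref{def muxL}; the same computation on $\theta\in(\pi-T,\pi+T)$ gives the second band. On the complementary arcs (the "gap" and the "void") the density is exponentially small because there $2U^{\mu_{x,L}}-V+\ell_{x,L}<0$ by the Euler--Lagrange inequality \eqref{EulerLagrange}, so $\psi_{n,s,L}\to0$ there, consistently with $\mu_{x,L}$ being supported only on the two bands. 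Uniformity on compacts away from $\{-L,L,\pi-T,\pi+T\}$ is inherited from the uniformity of the RH asymptotics. Weak convergence $\mu_{n,s,L}\to\mu_{x,L}$ then follows from the density convergence plus the uniform (in $n$) total mass $1$ and tightness, the only subtlety being to rule out escaping mass at the four transition points, again handled by the local parametrix estimates.

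For part (2), I would not differentiate an asymptotic expansion of $D_n$ directly; instead I use the exact identities \eqref{identity average Toeplitz} and \eqref{identity variance}, rewriting $s\partial_s = -\frac1n\partial_x$ in terms of $x$, so that
\begin{equation}
\frac1n\mathbb E\left(\#(\Theta\cap\gamma^c)\,\middle|\,\Phi\subset\gamma\right)=-\frac{1}{n^2}\,\partial_x\log D_n(e^{-xn},L).
\end{equation}
Theorem \ref{theorem: Toeplitz} gives $\frac1{n^2}\log D_n(e^{-xn},L)\to-\int_0^x\Omega_{\xi,L}\,d\xi$; to pass the $x$-derivative through the limit I would prove the convergence holds with $x$ in a complex neighbourhood of $(0,x_c)$ and is uniform on compact subsets there, with an error analytic in $x$ — this is guaranteed by the RH analysis, which is analytic in the parameter $x=-\frac1n\log s$ — and then invoke Cauchy's integral formula as in Remark \ref{remark: diff}. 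This yields $\frac1n\mathbb E\to\Omega_{x,L}=\int_{\pi-T}^{\pi+T}d\mu_{x,L}$, matching \eqref{limOmega} and consistent with part (1). For the variance, \eqref{identity variance} gives $\frac1n\Var = \frac1{n^3}\partial_x^2\log D_n - \frac1{n^2}\partial_x\log D_n$, and since $\frac1{n^2}\log D_n$ converges (with $x$-analytic, uniform $o(1)$ error) its second $x$-derivative is $\bigO(1)$, hence $\frac1{n^3}\partial_x^2\log D_n\to0$; the surviving term $-\frac1{n^2}\partial_x\log D_n$ is $\bigO(1/n)\to 0$, giving \eqref{thm variance}.

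The main obstacle is establishing \eqref{prop:philog} and the accompanying density asymptotics with the required uniformity, i.e.\ carrying out the steepest-descent analysis of the RH problem for $\phi_n$ in this two-cut regime: constructing the $g$-function from $\mu_{x,L}$, opening lenses around both bands, and building the four local parametrices — Airy-type (or rather Bessel, given the square-root/inverse-square-root vanishing of $\psi_{x,L}$ at $\pm e^{iL}$ and at $e^{i(\pi\pm T)}$) — and then matching. Since $\mu_{x,L}$ has a two-arc support, the global parametrix lives on a genus-one Riemann surface and involves $\theta$-functions, and one must control the (possibly non-vanishing but bounded) $\theta$-function prefactors well enough to get the $o(1)$ in \eqref{prop:philog} and the leading density; fortunately, for the macroscopic statements of Theorem \ref{theorem macro} only the leading order is needed, so these bounded prefactors drop out. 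A secondary technical point is the analyticity and uniformity of all estimates in the parameter $x$ near $(0,x_c)$, needed to legitimize the differentiation in part (2); this should come for free from the RH construction, since $\mu_{x,L}$, $T$, and $\ell_{x,L}$ depend analytically on $x$ on that interval.
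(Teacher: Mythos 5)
Your part (1) follows essentially the same route as the paper: RH/steepest-descent asymptotics for $\phi_n$, a potential-theoretic argument for $\nu_{n,s,L}$, and substitution of the outer and lens asymptotics into \eqref{onepointfunction}, with the Euler--Lagrange inequality killing the density on the gaps. Two points, however, are genuine gaps rather than omitted routine detail.

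First, the claim that the bounded $\theta$-function prefactors ``drop out'' because only the leading order is needed is not correct as stated: they sit \emph{inside} the leading order. Substituting the outer parametrix into \eqref{onepointfunction} leaves the density on $\gamma\cup\widetilde\gamma$ in the form $\psi_{x,L}(e^{i\theta})\bigl(|P^{(\infty)}_{11,-}(e^{i\theta})|^{2}-|P^{(\infty)}_{12,-}(e^{i\theta})|^{2}\bigr)/P^{(\infty)}_{12}(0)$, and both numerator and denominator contain ratios of $\theta$-functions whose arguments are shifted by $n\Omega$ and therefore oscillate with $n$. To conclude one needs the exact identity $|P^{(\infty)}_{11,-}|^{2}-|P^{(\infty)}_{12,-}|^{2}=P^{(\infty)}_{12}(0)$ (Proposition \ref{prop Pinf relation}), which the paper derives from the symmetries $P^{(\infty)}(z)=\overline{P^{(\infty)}(\overline z)}$ and $z\mapsto z^{-1}$, $\Omega\mapsto-\Omega$ of the global parametrix. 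Without this step the density computation does not close.

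Second, part (2). Deriving \eqref{limOmega} by differentiating Theorem \ref{theorem: Toeplitz} is circular in the logic of this paper: Theorem \ref{theorem: Toeplitz} is itself proved by \emph{integrating} \eqref{limOmega} in $x$ (Section \ref{section: proofs}), so \eqref{limOmega} must come first. The direct route is the exact identity $\frac1n\mathbb E\bigl(\#(\Theta\cap\gamma^c)|\Phi\subset\gamma\bigr)=\int_{\gamma^c}\psi_{n,s,L}(e^{i\theta})\,d\theta$, to which the convergence arguments of part (1) apply with test function $\chi_{\gamma^c}$. (Even setting the circularity aside, Theorem \ref{theorem: Toeplitz} is stated for real $x$ with an unquantified error, so Cauchy's formula cannot be applied to it as it stands.) Finally, your variance identity contains an algebra error with consequences: since $s\partial_s=-\frac1n\partial_x$ and $\Var=s^2\partial_s^2\log D_n+s\partial_s\log D_n=(s\partial_s)^2\log D_n$, one has $\frac1n\Var=\frac1{n^3}\partial_x^2\log D_n$ with no extra first-order term. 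The term $-\frac1{n^2}\partial_x\log D_n$ that you retain is not $\bigO(1/n)$: it equals $\frac1n\mathbb E(\cdots)$ and converges to $\Omega_{x,L}>0$, so your formula combined with a correct estimate would yield the false limit $\Omega_{x,L}$ for $\frac1n\Var$. With the corrected identity, applying the Cauchy-integral argument to $\frac1n\mathbb E(\cdots)=\Omega_{x,L}+o(1)$, valid uniformly for $x$ in a complex neighbourhood of $(0,x_c)$ as the paper does, gives \eqref{thm variance}.
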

\begin{remark}
We will also provide a result on the location of the zeros of the orthogonal polynomials: we will show that $\phi_n$ has, for $n$ sufficiently large, all its zeros either close to the unit circle or to a point $w_{n}$ on the real line, see Proposition \ref{prop: zeros phin} below. A similar but stronger result is also shown in Case I and Case III, see Proposition \ref{prop: zeros phin2}.
\end{remark}

\begin{remark}
The eigenvalues outside $\gamma$ are expected on an arc containing $-1$, and it is unlikely to have eigenvalues close to the edges of $\gamma^c$. This is at first sight surprising, but can be explained (a posteriori) by the repulsion between the eigenvalues. As $x\to x_c$, $T\to 0$ and the average of the number of particles on $\gamma^c$ becomes smaller; as $x\to 0$, $\mu_{x,L}$ converges weakly to the uniform measure on the unit circle, and we recover the leading order of the corresponding result in Case I. It is worth noting that the variance is small compared to the average, this means that we can accurately guess the fraction of eigenvalues lying on $\gamma^c$ for large $n$.
\end{remark}

Next, we consider scaling limits of the eigenvalue correlation kernels.

\begin{theorem}\label{theorem kernel}
We let $s=e^{-xn}$ with $x<x_c$.
\begin{enumerate}
\item If $e^{i\theta}$ lies in the interior of the support of $\mu_{x,L}$, we have the sine kernel limit
\eqref{sine kernel limit} with $c=\psi_{x,L}(e^{i\theta})$.
\item At the edges of $\gamma$, we have the Bessel kernel limits
\eqref{Bessel kernel limit}
for $u,v>0$, with $c=\sqrt{\frac{|z_{0}-z_{1}||z_{0}-\overline{z_{1}}|}{|z_{0}-\overline{z_{0}}|}}$, $z_0 = e^{i L}$, and $z_1 = e^{i(\pi-T)}$.
\item At the soft edges, we have the Airy kernel limits
\begin{multline}\label{Airy kernel limit}
\lim_{n\to\infty}\frac{e^{-i \frac{n^{1/3}}{2c^{2/3}}(u-v)}}{(cn)^{2/3}}K_n\left(e^{\pm i\left(- \pi + T+\frac{u}{(cn)^{2/3}}\right)}, e^{\pm i\left(-\pi + T+\frac{v}{(cn)^{2/3}}\right)}\right)\\=\frac{\Ai(u)\Ai^{\prime}(v)-\Ai^{\prime}(u)\Ai(v)}{u-v},
\end{multline}
for $u,v\in\mathbb R$,
with $c=\sqrt{\frac{|\overline{z_{1}}-z_{1}|}{4|\overline{z_{1}}-z_{0}||\overline{z_{1}}-\overline{z_{0}}|}}$. 
\end{enumerate}
\end{theorem}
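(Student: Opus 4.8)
The plan is to derive all three scaling limits in Theorem~\ref{theorem kernel} from the asymptotics of the average characteristic polynomial $\phi_n$ (equivalently, the Riemann--Hilbert (RH) analysis of Section~\ref{section: RH}), using the Christoffel--Darboux formula \eqref{kernel conditional CUE CD}--\eqref{onepointfunction}. The three cases correspond to the three types of local parametrices needed in the RH analysis: a bulk (outer) parametrix giving the sine kernel in the interior of $\supp\mu_{x,L}$, a Bessel parametrix at the hard edges $z_0 = e^{iL}$, $\overline{z_0}$ of $\gamma$, and an Airy parametrix at the soft edges $z_1 = e^{i(\pi-T)}$, $\overline{z_1}$ where the bulk density vanishes like a square root. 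In each case the strategy is the same: substitute the known large-$n$ asymptotics for $\phi_n$ and $\phi_n^*$ (and their derivatives) near the relevant point into \eqref{kernel conditional CUE CD}, perform the appropriate local rescaling of the spectral variable, and identify the limiting expression with the model kernel via the known asymptotics of the relevant special functions (trigonometric, Bessel $J_0$, or Airy).

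Concretely, for part (1) I would fix $e^{i\theta}$ in the interior of $\supp\mu_{x,L}$ and use the oscillatory large-$n$ representation of $\phi_n(e^{i\theta})$ coming from the outer parametrix, which takes the form of $e^{ng(\cdot)}$ times a bounded analytic prefactor and a complex exponential whose derivative in $\theta$ is governed by $\psi_{x,L}(e^{i\theta})$; inserting $\theta \to \theta + \frac{u}{cn}$ with $c = \psi_{x,L}(e^{i\theta})$ into \eqref{kernel conditional CUE CD} and Taylor-expanding the phases to first order, the leading $e^{ng}$ factors cancel between numerator terms, the denominator $1 - e^{i(\theta-\mu)}$ becomes $-\frac{i(u-v)}{cn}(1+o(1))$, and the numerator collapses to a difference of exponentials producing $\sin\pi(u-v)$. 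This is exactly the computation referenced in the excerpt as being carried out in Sections~\ref{section: proof Bessel}--\ref{section: sine}. For parts (2) and (3), I would instead use the local behaviour of $\phi_n$ expressed through the Bessel parametrix near $e^{\pm iL}$ (respectively the Airy parametrix near $e^{\pm i(\pi-T)}$): after the scaling $\theta \to \pm(L - \frac{u}{(cn)^2})$ (resp.\ $\theta \to \pm(-\pi+T+\frac{u}{(cn)^{2/3}})$) the combination $\phi_n^*\overline{\phi_n^*} - \phi_n\overline{\phi_n}$ in \eqref{kernel conditional CUE CD} reorganizes into the Wronskian-type bilinear form of $J_0$ (resp.\ $\Ai$), with the conformal map of the RH analysis normalizing the constant $c$ to the stated geometric quantity; the factor $f(e^{i\theta})$ in \eqref{kernel conditional CUE CD} is $s = e^{-xn}$ on the relevant side near $e^{\pm iL}$ but the $n$-dependence is absorbed by the $g$-function balance, while it is identically $1$ near the soft edges, and the explicit $e^{-i\frac{n^{1/3}}{2c^{2/3}}(u-v)}$ prefactor in \eqref{Airy kernel limit} is precisely the residual oscillation left over from the phase of the outer parametrix at the soft edge.

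The main obstacle is bookkeeping of the phases and the non-self-adjoint/complex nature of the kernel: unlike the classical Hermitian-matrix setting, $K_n$ here carries the extra weight factors $\sqrt{f(e^{i\theta})f(e^{i\mu})}$ and a complex exponential prefactor (reflected in the $e^{i\frac{u-v}{2c}}$ in \eqref{sine kernel limit} and the $e^{-i\frac{n^{1/3}}{2c^{2/3}}(u-v)}$ in \eqref{Airy kernel limit}), so one must track these gauge factors carefully through the RH transformations to land on exactly the stated normalizations rather than a conjugate of them. A related subtlety is that near the hard edges the density $\psi_{x,L}$ blows up like $(L-\theta)^{-1/2}$, so the correct local variable scales as $(cn)^2$ rather than $cn$; verifying that the RH $\phi$-function indeed produces a $\sqrt{\cdot}$-type conformal coordinate there, and that the Bessel parametrix of order $\alpha=0$ is the right one (since $f$ has a jump, not an algebraic singularity, at $e^{\pm iL}$), is where the argument must be made precise. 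Everything else — uniformity in $u,v$ on compacts, cancellation of the outer parametrix's leading exponential, and the limiting identities for special functions — is routine once the RH analysis of Section~\ref{section: RH} is in place.
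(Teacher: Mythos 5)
Your overall strategy is exactly the paper's: substitute the RH asymptotics for $\phi_n$ (outer, Bessel, and Airy parametrices respectively) into the Christoffel--Darboux formula \eqref{kernel conditional CUE CD}, rescale locally, and match against the model kernels; your remarks about the $(cn)^2$ hard-edge scaling, the order-zero Bessel parametrix with a jump rather than an algebraic singularity, and the gauge phases are all on target. However, there is one genuine missing ingredient that your plan glosses over and without which the limits as stated do not follow. Because the support of $\mu_{x,L}$ consists of two arcs, the global parametrix \eqref{final expression Pinf} is built from genus-one theta functions evaluated at arguments shifted by $n\Omega$, see \eqref{ratio theta functions}. The ``bounded analytic prefactor'' in your description of $\phi_n$ is therefore \emph{not} convergent as $n\to\infty$: it oscillates quasi-periodically in $n$. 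After substituting into \eqref{kernel conditional CUE CD} and dividing by the norming constant $h_n\sim e^{-n\ell}P_{12}^{(\infty)}(0)$ from \eqref{norm of orthogonal pol} (itself $n$-dependent through the theta functions), the candidate limit carries the factor $\bigl(|P_{11,-}^{(\infty)}(e^{i\theta})|^{2}-|P_{12,-}^{(\infty)}(e^{i\theta})|^{2}\bigr)/P_{12}^{(\infty)}(0)$, and one must prove that this equals $1$ identically. That is Proposition \ref{prop Pinf relation} in the paper, established via the symmetries $P^{(\infty)}(z)=\overline{P^{(\infty)}(\overline z)}$ and $P^{(\infty)}(z;\Omega)=\sigma_1 P^{(\infty)}(0;-\Omega)^{-1}P^{(\infty)}(z^{-1};-\Omega)\sigma_1$, both consequences of uniqueness of the RH solution. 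It is used in all three parts of the theorem: it produces the clean $\sin\pi(u-v)$ coefficient in the bulk, and the evaluations $\mathrm{Im}(F_3),\mathrm{Im}(F_4)$ at the hard edge and $\mathrm{Im}(E_{11}(\overline{z_1})\overline{E_{12}(\overline{z_1})})=\pi P_{12}^{(\infty)}(0)$ at the soft edge. Your plan as written would stall at exactly this point: cancelling the outer parametrix's leading exponential $e^{ng}$ is routine, but cancelling its theta-function part is not, and it is the one non-routine identity in the whole computation.
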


\begin{remark}
As expected, we obtain the sine kernel in the interior of the support of $\mu_{x,L}$. Near the edges, we have two different types of behaviour: near  $\theta=\pm L$, scaling limits lead to the Bessel kernel, as if there were hard edges; near $\theta=\pi -T$ and $\theta=\pi +T$, we have soft edges with square root vanishing of the limiting mean eigenvalue density, and obtain the usual Airy kernel. 
\end{remark}

\subsubsection*{Outline}

In Section \ref{section: RH}, we will use the Deift/Zhou steepest descent method applied to the RH problem for orthogonal polynomials on the unit circle to obtain large $n$ asymptotics for $\phi_n$ in Case IV. The main ingredients of the RH analysis will be a $g$-function related to the equilibrium measure $\mu_{x,L}$, the construction of standard Airy parametrices near $e^{\pm i(\pi-T)}$, and the construction of Bessel parametrices near $e^{\pm iL}$. Because those points are not hard edges, we will need to use a non-standard Bessel parametrix similar to the one constructed in \cite{ChCl} and in \cite{BDIK}.
The RH analysis yields asymptotics for $\phi_n$, which we describe in detail in Section \ref{section: as phin}.

Next, in Section \ref{section: proofs}, we will use the asymptotics for $\phi_n$ to obtain asymptotics for the zero counting measure $\nu_{n,s,L}$, the mean eigenvalue distribution $\mu_{n,s,L}$, and the eigenvalue correlation kernel.
In this way, we will be able to prove Theorem \ref{theorem: Toeplitz}, Theorem \ref{theorem macro}, and Theorem \ref{theorem kernel}, as well as the corresponding results in Cases I-III.

In Appendix \ref{appendix: proofOmega}, we prove \eqref{identity intOmega}, and in Appendix \ref{appendix: shrinking disks}, we extend the asymptotic analysis from Section \ref{section: RH} to the case of small $x$.

\section{Riemann-Hilbert analysis for $0<x<x_c$}\label{section: RH}

The goal of this section is to obtain large $n$ asymptotics for $\phi_{n}(z)$ for $z$ anywhere in the complex plane.

\subsection{RH problem for orthogonal polynomials on the unit circle}

Consider the matrix $Y$ defined by
\begin{equation}\label{sol_Y}
Y(z) = \begin{pmatrix} \phi_{n}(z) & \displaystyle \int_{S^{1}} \frac{\phi_{n}(w)}{w-z} \frac{f(w)}{2\pi i w^{n}} dw \\
\displaystyle -h_{n-1}^{-1} z^{n-1} \phi_{n-1}(z^{-1}) & \displaystyle -h_{n-1}^{-1} \int_{S^{1}} \frac{\phi_{n-1}(w^{-1})}{w-z} \frac{f(w)}{2\pi i w} dw
\end{pmatrix},
\end{equation}
where $\phi_n$ is the monic orthogonal polynomial with respect to the weight $f$, given by \eqref{def symbol} and depending on $s$ and $L$, and where $h_n$ is the norming constant, defined in \eqref{orthogonality}.

The first column of $Y$ contains the orthogonal polynomials of degree $n$ and $n-1$, while the second column contains their Cauchy transforms with respect to the weight function $f$. $Y$ depends on $n$ and also on $s=e^{-xn}$ and $L$ through the weight function $f$. It is well known that $Y(z)$ is the unique $2 \times 2$ matrix-valued function  which satisfies the following RH problem \cite{FokasItsKitaev}:
\subsubsection*{RH problem for $Y$}
\begin{itemize}
\item[(a)] $Y : \mathbb{C}\setminus S^{1} \to \mathbb{C}^{2\times 2}$ is analytic.
\item[(b)] $Y$ has the following jumps:
\[
Y_{+}(z) = Y_{-}(z) \begin{pmatrix}
1 & z^{-n}f(z) \\ 0 & 1
\end{pmatrix}, \hspace{0.5cm} \mbox{ for } z \in S^{1}\setminus \left\{e^{iL}, e^{-iL}\right\}.
\]
\item[(c)] $Y(z) = \left(I + \bigO(z^{-1})\right) \begin{pmatrix}
z^{n} & 0 \\ 0 & z^{-n}
\end{pmatrix}$ as $z \to \infty$.
\item[(d)] As $z$ tends to $e^{\pm iL}$, the behaviour of $Y$ is
\[
Y(z) = \begin{pmatrix}
\bigO(1) & \bigO(\log |z-e^{\pm i L}|) \\ \bigO(1) & \bigO(\log |z-e^{\pm i L}|)
\end{pmatrix}.
\]
\end{itemize}

We want to obtain large $n$ asymptotics for the matrix $Y$, uniformly for $x$ in compact subsets of $(0,x_{c})$. Such asymptotics can be obtained using the Deift/Zhou steepest descent method \cite{DeiftZhou} applied to the RH problem for $Y$. Following the general procedure developed in \cite{DKMVZ2, DKMVZ1} and applied first to orthogonal polynomials on the unit circle in \cite{BDJ}, we will apply a series of transformations to the RH problem for $Y$, with the goal of obtaining, in the end, a RH problem for which we can easily compute asymptotics of the solution. The first transformations are similar to those in \cite[Section 4]{ChCl}.

\subsection{First transformation $Y \mapsto T$}
We define $g$, depending on $x$ and $L$, by
\begin{equation}\label{g}
g(z)=\int_{[-L,L]\cup [\pi-T,\pi+T]} \log(z-e^{i\theta})d\mu (e^{i\theta}),
\end{equation}
where we write $\mu$ for $\mu_{x,L}$, omitting the subscripts $x, L$, where $\mu_{x,L}$ is the equilibrium measure with density \eqref{def muxL} and satisfying the variational conditions \eqref{EulerLagrange}. In \eqref{g}, for each $\theta$, the branch is chosen such that $\log(z-e^{i\theta})$ is analytic in $\mathbb{C}\setminus ( (-\infty,-1]  \cup \{ e^{it} : -\pi \leq t \leq \theta \} )$ and $\log(x-e^{i\theta}) \sim \log x$ as $x \in \mathbb{R}^{+}$, $x \to \infty$.

The following transformation has the effect of normalizing the RH problem at infinity. We define
\begin{equation}\label{def T}
T(z) = e^{-\frac{n\pi i}{2}\sigma_{3}}e^{\frac{n\ell}{2}\sigma_{3}} Y(z) e^{-ng(z)\sigma_{3}} e^{-\frac{n\ell}{2}\sigma_{3}}e^{\frac{n\pi i}{2}\sigma_{3}},
\end{equation}
with $\ell=\ell_{x,L}$ given by \eqref{def ell}, and $\sigma_3=\begin{pmatrix}1&0\\0&-1
\end{pmatrix}$.
Then $T$ solves the following RH problem:

\subsubsection*{RH problem for $T$}
\begin{itemize}
\item[(a)] $T : \mathbb{C}\setminus S^{1} \to \mathbb{C}^{2\times 2}$ is analytic.
\item[(b)] $T$ satisfies the jump relation
\[
T_{+}(z) = T_{-}(z) J_{T}(z),\qquad \mbox{ on } S^{1}\setminus \left\{e^{iL},e^{-iL}\right\},
\]
with
\[
J_{T}(z) = \begin{pmatrix}
\displaystyle e^{-n(g_{+}(z)-g_{-}(z))} & \displaystyle (-1)^{n}z^{-n}e^{-nV(z)}e^{n\ell}e^{n(g_{+}(z)+g_{-}(z))} \\
\displaystyle 0 & \displaystyle e^{n(g_{+}(z)-g_{-}(z))}
\end{pmatrix}.
\]
\item[(c)] $T(z) = I + O(z^{-1})$ as $z \to \infty$.
\item[(d)] As $z\to e^{\pm i L}$,  we have
\[
T(z) = \begin{pmatrix}
\bigO(1) & \bigO(\log |z-e^{\pm i L}|) \\ \bigO(1) & \bigO(\log |z-e^{\pm i L}|)
\end{pmatrix}.
\]
\end{itemize}

Writing
\begin{align}
&\Sigma_{1} = \{ e^{i\theta} : L < \theta < \pi-T \}, \\ &\Sigma_2 = \{ e^{i\theta} : -\pi + T < \theta < -L \}, \\ &\widetilde{\gamma} =  \{ e^{i\theta} : \pi-T < \theta < \pi + T \},
\end{align}
 we can use \eqref{EulerLagrange} and the definition of $g$ \eqref{g} to write the jump matrices for $T$ in the following form,
\begin{equation}
J_{T}(z) = \left\{ \begin{array}{l l}
\begin{pmatrix}
e^{-2n\pi i \int_{\arg z}^{\pi} d\mu(e^{i\theta})} & 1 \\
0 & e^{2n\pi i \int_{\arg z}^{\pi} d\mu(e^{i\theta})}
\end{pmatrix}, & z \in \gamma\cup\widetilde\gamma, \\
\begin{pmatrix}
e^{-n\pi i \Omega} & e^{n \left[ 2\int_{-\pi}^{\pi} \log |z-e^{i\theta}|d\mu(e^{i\theta}) - x + \ell \right]} \\
0 & e^{n\pi i \Omega}
\end{pmatrix}, & z \in \Sigma_1, \\
\begin{pmatrix}
e^{n\pi i \Omega} & e^{n \left[ 2\int_{-\pi}^{\pi} \log |z-e^{i\theta}|d\mu(e^{i\theta}) - x + \ell \right]} \\
0 & e^{-n\pi i \Omega}
\end{pmatrix}, & z \in \Sigma_2, \\
\end{array} \right. 
\end{equation}
where $\Omega = \int_{\pi-T}^{\pi+T} d\mu(e^{i\theta})$, as in \eqref{def Omega}. We write $z_0=e^{iL}$, $z_1=e^{i(\pi-T)}$ and define
\begin{equation} \label{def_of_phi}
\phi(z) = \int_{z_{0}}^{z} \left(\frac{(\xi-z_{1})(\xi-\overline{z_{1}})}{(\xi-z_{0})(\xi-\overline{z_{0}})}\right)^{1/2} \frac{d\xi}{\xi},
\end{equation}
and
\begin{equation} \label{def_of_tilde_phi}
\widetilde{\phi}(z) = \int_{\overline{z_{1}}}^{z} \left(\frac{(\xi-z_{1})(\xi-\overline{z_{1}})}{(\xi-z_{0})(\xi-\overline{z_{0}})}\right)^{1/2} \frac{d\xi}{\xi},
\end{equation}
where we define the square roots with branch cuts on $\gamma\cup\widetilde\gamma$ and such that they tend to $1$ as $\xi\to\infty$. For $\phi$, we take the path of integration such that it does not cross $\overline{\gamma\cup\widetilde\gamma\cup\Sigma_2}$, and for $\widetilde\phi$, we take it such that it does not cross $\overline{\gamma\cup\widetilde\gamma\cup\Sigma_1}$.
Then, it is straightforward to check that $e^\phi$ is single-valued and analytic on $\mathbb C\setminus\left(\overline{\gamma\cup\widetilde\gamma\cup\Sigma_2} \cup\{0\}\right)$, and that 
$e^{\widetilde\phi}$ is single-valued and analytic on $\mathbb C\setminus\left(\overline{\gamma\cup\widetilde\gamma\cup\Sigma_1} \cup\{0\}\right)$.
We can rewrite the jump matrix $J_{T}$ in terms of $\phi$ and $\widetilde{\phi}$, in the same manner as in \cite[Section 4.1]{ChCl}:
\begin{equation}
J_{T}(z) = \left\{ \begin{array}{l l}
\begin{pmatrix}
\displaystyle e^{n(\phi_{-}(z)-\pi i \Omega)} & \displaystyle 1 \\
\displaystyle 0 & \displaystyle e^{-n(\phi_{-}(z) - \pi i \Omega)}
\end{pmatrix}, & z \in \gamma, \\[0.6cm]
\begin{pmatrix}
\displaystyle e^{n(\widetilde{\phi}_{-}(z)+\pi i \Omega)} & \displaystyle 1 \\
\displaystyle 0 & \displaystyle e^{-n(\widetilde{\phi}_{-}(z) + \pi i \Omega)}
\end{pmatrix}, & z \in \widetilde{\gamma}, \\[0.6cm]
\begin{pmatrix}
\displaystyle e^{-n\pi i \Omega} & \displaystyle e^{n(\phi(z)-x)} \\
\displaystyle 0 & \displaystyle e^{n\pi i \Omega}
\end{pmatrix}, & z \in \Sigma_1, \\[0.6cm]
\begin{pmatrix}
\displaystyle e^{n\pi i \Omega} & \displaystyle e^{n\widetilde{\phi}(z)} \\
\displaystyle 0 & \displaystyle e^{-n\pi i \Omega}
\end{pmatrix}, & z \in \Sigma_2. \\
\end{array} \right.
\end{equation}
Now we are able to define the next transformation: the opening of the lenses.

\subsection{Second transformation $T \mapsto S$}

We can factorize $J_{T}$ on $\gamma$ as follows:
\begin{multline*}
\begin{pmatrix}
e^{n(\phi_{-}(z)-\pi i \Omega)} & 1 \\
0 & e^{-n(\phi_{-}(z)-\pi i \Omega)}
\end{pmatrix}\\ = \begin{pmatrix}
1 & 0 \\ e^{-n(\phi_{-}(z)-\pi i \Omega)} & 1
\end{pmatrix}\begin{pmatrix}
0 & 1 \\ -1 & 0
\end{pmatrix}\begin{pmatrix}
1 & 0 \\ e^{n(\phi_{-}(z)-\pi i \Omega)} & 1
\end{pmatrix}.
\end{multline*}
There is a similar factorization of $J_{T}$ on $\widetilde{\gamma}$.
Using this factorization, we can split the jump on $\gamma$ into three different jumps on a lens-shaped contour, see Figure \ref{fig_S}.
Denote by $\gamma_{+}$ and $\gamma_{-}$ the lenses around $\gamma$ on the $|z|<1$ side and the $|z|>1$ side respectively and similarly by $\widetilde{\gamma}_{+}$ and $\widetilde{\gamma}_{-}$ the lenses around $\widetilde{\gamma}$, see Figure \ref{fig_S}. Define 
\begin{equation}\label{def S}
S(z) =T(z)\times\ \left\{ \begin{array}{l l}

\begin{pmatrix}
1 & 0 \\
-e^{-n\phi(z)}e^{-n\pi i \Omega} & 1
\end{pmatrix}, &
|z|<1, z \mbox{ inside the lenses around }\gamma, \\[0.6cm]
\begin{pmatrix}
1 & 0 \\
e^{-n\phi(z)}e^{n\pi i \Omega} & 1
\end{pmatrix}, &
|z|>1, z \mbox{ inside the lenses around }\gamma, \\[0.6cm]
\begin{pmatrix}
1 & 0 \\
-e^{-n\widetilde{\phi}(z)}e^{n\pi i \Omega} & 1
\end{pmatrix}, &
|z|<1, z \mbox{ inside the lenses around }\widetilde{\gamma}, \\[0.6cm]
\begin{pmatrix}
1 & 0 \\
e^{-n\widetilde{\phi}(z)}e^{-n\pi i \Omega} & 1
\end{pmatrix}, &
|z|>1, z \mbox{ inside the lenses around }\widetilde{\gamma}, \\[0.6cm]
I,  &   z \mbox{ outside the lenses.}                 \\

\end{array} \right.
\end{equation}
Then $S$ solves the following RH problem.
\subsubsection*{RH problem for $S$}

\begin{figure}[t]
    \begin{center}
    \setlength{\unitlength}{1truemm}
    \begin{picture}(100,55)(-5,10)
        \cCircle(50,40){25}[f]
        \put(65,60){\thicklines\circle*{1.2}}
        \put(65,19.8){\thicklines\circle*{1.2}}
	    \qbezier(65,60)(73,40)(65,19.8)
	    \qbezier(65,60)(100,40)(65,19.8)
        \put(65,61){$z_0$}
        \put(72,36){$\gamma$}
        \put(65,36){$\gamma_+$}
        \put(78.5,36){$\gamma_-$}
        \put(22.5,36){$\widetilde{\gamma}$}      
        \put(65,17){$\overline{z_0}$}
        \put(75,41){\thicklines\vector(0,1){.0001}}
        \put(69,41){\thicklines\vector(0,1){.0001}}
        \put(82.5,41){\thicklines\vector(0,1){.0001}}
        \put(25,39){\thicklines\vector(0,-1){.0001}}
        
        \put(27,49.5){\thicklines\circle*{1.2}}
        \put(27,30.5){\thicklines\circle*{1.2}}
        \qbezier(27,49.5)(15,40)(27,30.5)
        \qbezier(27,49.5)(30,40)(27,30.5)
        \put(21,39){\thicklines\vector(0,-1){.0001}}
        \put(28.5,39){\thicklines\vector(0,-1){.0001}}
        \put(17,36){$\widetilde{\gamma}_{-}$}
        \put(29,36){$\widetilde{\gamma}_{+}$}
        \put(28,49){$z_{1}$}
        \put(28,30.5){$\overline{z_{1}}$}
        \put(49.7,64.9){\thicklines\vector(-1,0){.0001}}
        \put(50.3,15){\thicklines\vector(1,0){.0001}}
    \end{picture}
    \caption{The jump contour for $S$.}
    \label{fig_S}
\end{center}

\end{figure}

\begin{itemize}
\item[(a)] $S : \mathbb{C}\setminus (S^{1} \cup \gamma_{+}\cup\gamma_{-}\cup\widetilde{\gamma}_{+}\cup\widetilde{\gamma}_{-}) \to \mathbb{C}^{2\times 2}$ is analytic.
\item[(b)] $S$ satisfies the jump relations
\begin{equation}
\begin{array}{l l}
S_{+}(z) = S_{-}(z) \begin{pmatrix}
0 & 1 \\
-1 & 0
\end{pmatrix}, & \mbox{ for } z \in \gamma\cup\widetilde\gamma, \\

S_{+}(z) = S_{-}(z)  \begin{pmatrix}
\displaystyle e^{-n\pi i \Omega} & \displaystyle e^{n(\phi(z)-x)} \\
\displaystyle 0 & \displaystyle e^{n\pi i \Omega}
\end{pmatrix},  & \mbox{ for } z \in \Sigma_1, \\

S_{+}(z) = S_{-}(z)  \begin{pmatrix}
\displaystyle e^{n\pi i \Omega} & \displaystyle e^{n\widetilde{\phi}(z)} \\
\displaystyle 0 & \displaystyle e^{-n\pi i \Omega}
\end{pmatrix},  & \mbox{ for } z \in \Sigma_2, \\

S_{+}(z) = S_{-}(z) \begin{pmatrix}
1 & 0 \\
e^{-n\phi(z)}e^{-n\pi i \Omega} & 1
\end{pmatrix},  & \mbox{ for } z \in \gamma_{+}, \\

S_{+}(z) = S_{-}(z) \begin{pmatrix}
1 & 0 \\
e^{-n\phi(z)}e^{n\pi i \Omega} & 1
\end{pmatrix},  & \mbox{ for } z \in \gamma_{-}, \\

S_{+}(z) = S_{-}(z) \begin{pmatrix}
1 & 0 \\
e^{-n\widetilde{\phi}(z)}e^{n\pi i \Omega} & 1
\end{pmatrix},  & \mbox{ for } z \in \widetilde{\gamma}_{+}, \\

S_{+}(z) = S_{-}(z) \begin{pmatrix}
1 & 0 \\
e^{-n\widetilde{\phi}(z)}e^{-n\pi i \Omega} & 1
\end{pmatrix},  & \mbox{ for } z \in \widetilde{\gamma}_{-}. \\
\end{array}
\end{equation}
\item[(c)] $S(z) = I + O(z^{-1})$ as $z \to \infty$.
\item[(d)] As $z\to e^{\pm i L}$, we have
\[
S(z)=\bigO(\log|z-e^{\pm i L}|).
\]
\end{itemize}

On $\gamma_{+} \cup \gamma_{-}$ (resp. $\widetilde{\gamma}_{+} \cup \widetilde{\gamma}_{-}$), one shows as in \cite[Section 4.2]{ChCl} that $\Re\phi(z)>0$ (resp. $\Re\widetilde{\phi}(z)>0$) and consequently the jump matrices for $S$ converge to the identity matrix on $\gamma_{+} \cup \gamma_{-}\cup \widetilde{\gamma}_{+} \cup \widetilde{\gamma}_{-}$. 
On $\Sigma_1$, we have that $\Re\left(\phi(z)-x\right)<0$, so that the jump matrix converges to a diagonal matrix. On $\Sigma_2$ finally, we similarly have $\Re\widetilde\phi(z)<0$, and the jump matrix converges to a diagonal matrix here as well.
The convergence of the jump matrices is point-wise in $z$ and breaks down as $z$ approaches $e^{\pm i L}$ and $e^{ i(\pi\pm T)}$.

Therefore, we will construct approximations to $S$ for large $n$ in different regions of the complex plane: local parametrices will be constructed in small disks $D(e^{\pm i L},r)$ and  $D(e^{i(\pi\pm T)},r)$ surrounding the edge points of the support of $\mu$, and a global parametrix elsewhere.

\subsection{Global parametrix}

We will construct the solution to the following RH problem, which is obtained from the RH problem for $S$ by ignoring the exponentially small jumps. The global parametrix will give us a good approximation to $S$ for $z$ away from the endpoints of $\gamma$ and $\widetilde\gamma$.
\subsubsection*{RH problem for $P^{(\infty)}$}
\begin{itemize}
\item[(a)] $P^{(\infty)} : \mathbb{C}\setminus S^{1} \to \mathbb{C}^{2\times 2}$ is analytic.
\item[(b)] $P^{(\infty)}$ has the following jumps:
\begin{equation}
\begin{array}{l l}
P^{(\infty)}_{+}(z) = P^{(\infty)}_{-}(z)  \begin{pmatrix}
0 & 1 \\ -1 & 0
\end{pmatrix}, & \mbox{ for } z \in \gamma\cup\widetilde\gamma, \\
P^{(\infty)}_{+}(z) = P^{(\infty)}_{-}(z)e^{-n\pi i \Omega \sigma_{3}}, &\mbox{ for } z \in \Sigma_1,\\
P^{(\infty)}_{+}(z) = P^{(\infty)}_{-}(z)e^{n\pi i \Omega \sigma_{3}}, &\mbox{ for } z \in \Sigma_2.\\
\end{array}
\end{equation}
\item[(c)] $P^{(\infty)}(z) = I + \bigO(z^{-1})$ as $z \to \infty$.
\item[(d)] As $z\to \zeta$ with $\zeta = e^{\pm i L}$ or $\zeta=e^{i(\pi\pm T)}$, we have
\[
P^{(\infty)}(z)=\bigO(|z-\zeta|^{-1/4}).
\]
\end{itemize}

This problem can be explicitly solved in terms of the elliptic theta-function of the third kind. This is typical for situations where the support of the equilibrium measure consists of two disjoint intervals or arcs. Similar RH problems have been solved several times with jumps on the real line, see e.g.\ \cite{DKMVZ2, DKMVZ1, Bleher}. We follow a similar procedure here, adapted to the unit circle. 

First we apply the following transformation:
\begin{equation}\label{def Q}
Q^{(\infty)}(z) = e^{-n\pi i \frac{\Omega}{2} \sigma_{3}}P^{(\infty)}(z) \left\{ \begin{array}{l l}
e^{-n\pi i \frac{\Omega}{2}\sigma_{3}}, & |z|<1, \\
e^{n\pi i \frac{\Omega}{2}\sigma_{3}}, & |z|>1. \\
\end{array} \right.
\end{equation}
It is easy to verify that $Q^{(\infty)}$ satisfies the following RH problem.

\subsubsection*{RH problem for $Q^{(\infty)}$}
\begin{itemize}
\item[(a)] $Q^{(\infty)} : \mathbb{C}\setminus (\supp\,\mu\cup \Sigma_1) \to \mathbb{C}^{2\times 2}$ is analytic.
\item[(b)] $Q^{(\infty)}$ has the following jumps:
\begin{equation}
\begin{array}{l l}
Q^{(\infty)}_{+}(z) = Q^{(\infty)}_{-}(z)  \begin{pmatrix}
0 & 1 \\ -1 & 0
\end{pmatrix},& \mbox{ for } z \in \gamma\cup\widetilde\gamma, \\
Q^{(\infty)}_{+}(z) = Q^{(\infty)}_{-}(z)e^{-2n\pi i \Omega \sigma_{3}}, & \mbox{ for } z \in \Sigma_1.\\
\end{array}
\end{equation}
\item[(c)] $Q^{(\infty)}(z) = I + O(z^{-1})$ as $z \to \infty$.
\item[(d)] As $z\to \zeta$ with $\zeta = e^{\pm i L}$ or $\zeta=e^{i(\pi\pm T)}$, we have
\[
Q^{(\infty)}(z)=\bigO(|z-\zeta|^{-1/4}).
\]
\end{itemize}

\subsubsection*{Riemann surface and elliptic theta function}
To construct $Q$, we need to introduce quantities related to an elliptic theta-function on a Riemann surface. 
We consider the elliptic curve 
\[
X = \{ (z,w):w^{2} = R(z) \},\hspace{0.5cm} R(z) = (z-\overline{z_{0}})(z-z_{0})(z-z_{1})(z-\overline{z_{1}}),
\]
of genus one. We represent $X$ as the two-sheeted Riemann surface associated to $\sqrt{R(z)}$. We let $\sqrt{R(z)} \sim z^{2}$ as $z \to \infty$ on the first sheet and $\sqrt{R(z)} \sim -z^{2}$ as $z \to \infty$ on the second sheet.

We define cycles $A$ and $B$ on $X$ as in  Figure \ref{fig_AB}: $B$ encircles the arc $\widetilde\gamma$ in the clockwise sense on the first sheet, and $A$ encircles the arc $\Sigma_1$. The solid part in Figure \ref{fig_AB} lies on the first sheet, the dashed part on the second sheet. $A$ and $B$ form a canonical homology basis for $X$.

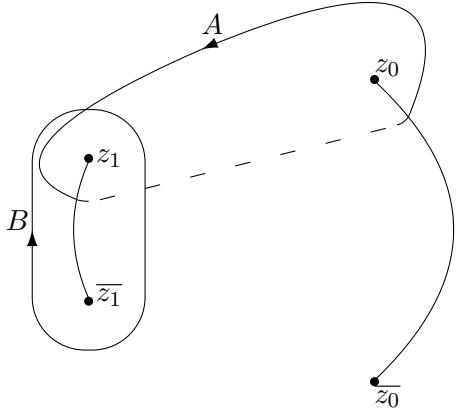
\begin{figure}[t]
    \begin{center}
    \setlength{\unitlength}{1truemm}
    \begin{picture}(100,55)(-5,10)
        \put(65,60){\thicklines\circle*{1.2}}
        \put(65,19.8){\thicklines\circle*{1.2}}
        \put(65,61){$z_0$} 
        \put(65,17){$\overline{z_0}$}
        \put(27,49.5){\thicklines\circle*{1.2}}
        \put(27,30.5){\thicklines\circle*{1.2}}
        \put(28,49){$z_{1}$}
        \put(28,30.5){$\overline{z_{1}}$}
        
        \put(27,40){\oval(15,32)}
        \put(16,40){$B$}
        \put(19.5,40){\thicklines\vector(0,1){.0001}}
        \qbezier(69.5,55)(80,80)(42,64)
        \qbezier(42,64)(10,50)(25.5,44)
        \qbezier(69.5,55)(68.7,54)(68,54)
        \qbezier(25.5,44)(26,43.8)(27,43.8)
        \put(42,66){$A$}
        \put(42,64){\thicklines\vector(-1,-0.45){.0001}}
        \multiput(28.5,44)(6,1.5){7}{\line(1,0.25){2}}
        
        \qbezier(27,49.1)(23,40)(27,30.9)
        \qbezier(64.5,19.5)(86.4,40)(64.8,60)

    \end{picture}
    \caption{The cycles $A$ and $B$.}
    \label{fig_AB}
\end{center}

\end{figure}

We define the one-form
\begin{equation}\label{def omega}\omega = \frac{c_{0}dz}{\sqrt{R(z)}},\qquad c_{0} = \frac{1}{\int_{A}\frac{dz}{\sqrt{R(z)}}} \in \mathbb{R}^{+},\end{equation} such that $\omega$ is the unique holomorphic one-form on $X$ which satisfies $\int_{A} \omega = 1$. We also let
\begin{equation} \label{tau def}
\displaystyle \tau = \int_{B} \omega = \frac{\int_{B} \frac{dz}{\sqrt{R(z)}}}{\int_{A} \frac{dz}{\sqrt{R(z)}}}  \in i\mathbb{R}^{+}.
\end{equation}
The associated theta-function of the third kind $\theta(z;\tau) = \theta(z)$ is given by
\begin{equation}\label{def theta}
\theta(z) = \sum_{m=-\infty}^{+\infty} e^{2\pi i m z}e^{\pi i m^{2}\tau}.
\end{equation}
It is an entire and even function satisfying
\begin{equation} \label{periodicity of theta}
\theta(z+1) = \theta(z) \qquad \mbox{ and } \qquad \theta(z+\tau) = e^{-2\pi i z}e^{-\pi i \tau} \theta(z) \qquad \mbox{ for all } z \in \mathbb{C}.
\end{equation}
Finally, define
\begin{equation}
u(z) = \int_{\overline{z_{1}}}^{z} \omega, 
\end{equation}
for $z$ on the Riemann surface,
where the path of integration lies in $\mathbb{C}\setminus (\overline{\gamma\cup\widetilde\gamma \cup \Sigma_1})$ and on the same sheet as $z$. Since $\oint_{C(0,R)} \omega = 0$ for all $R>1$, with $C(0,R)$ a circle of radius $R$ around $0$, on each sheet, $u$ can be seen as a single-valued and analytic function of $z\in\mathbb{C}\setminus (\overline{\gamma\cup\widetilde\gamma \cup \Sigma_1})$. For $z$ on the first sheet, it satisfies the relations
\begin{equation}
\begin{array}{l l}
\displaystyle \mbox{for } z \in \Sigma_1, & \displaystyle u_{+}(z) - u_{-}(z) = - \int_{B} \omega = -\tau, \\
\displaystyle \mbox{for } z \in \widetilde{\gamma}, & \displaystyle u_{+}(z) + u_{-}(z) = 0, \\
\displaystyle \mbox{for } z \in \gamma, & \displaystyle u_{+}(z)+u_{-}(z) = - \int_{A} \omega = -1.
\end{array}
\end{equation}

\subsubsection*{Construction of $Q^{(\infty)}$}

Now we define 
\begin{equation}f_{1}(z;c,d) = \frac{\theta(u(z)+d+c)}{\theta(u(z)+d)},\qquad f_{2}(z;c,d) = \frac{\theta(-u(z)+d+c)}{\theta(-u(z)+d)},
\end{equation} 
where we take $z$ on the first sheet. These functions are meromorphic on $\mathbb{C}\setminus (\overline{\gamma\cup\widetilde\gamma \cup \Sigma_1})$ with possible poles at the zeros of $\theta(u(z)+d)$ and $\theta(-u(z)+d)$. Furthermore, by the periodicity properties of the $\theta$-function, we have the following relations:

\begin{equation}
\begin{array}{l l}
\mbox{for } z \in \Sigma_1, & f_{1}(z;c,d)_{+} = e^{2\pi i c}f_{1}(z;c,d)_{-},\\
& f_{2}(z;c,d)_{+} = e^{-2\pi i c}f_{2}(z;c,d)_{-},\\
\mbox{for } z \in \gamma \cup \tilde{\gamma}, & f_{1}(z;c,d)_{+} = f_{2}(z;c,d)_{-},\\
& f_{2}(z;c,d)_{+} = f_{1}(z;c,d)_{-}.\\
\end{array}
\end{equation}
This implies that 
\[
F(z;c,d_1,d_2) = \begin{pmatrix}
f_1(z;c,d_1) & f_2(z;c,d_1) \\
f_1(z;c,d_2) & f_2(z;c,d_2)
\end{pmatrix}
\]
satisfies the jump relations 
\begin{align}&F_{+}(z) = F_{-}(z) e^{2\pi i c \sigma_{3}}, &\mbox{ for $z \in \Sigma_1$},\\
&F_{+}(z) = F_{-}(z) \begin{pmatrix}
0 & 1 \\ 1 & 0
\end{pmatrix},&\mbox{ for $z \in  \gamma\cup\widetilde\gamma$.}
\end{align} On the other hand, 
the function
\begin{equation}
N(z) = \begin{pmatrix}
\frac{1}{2}(\beta(z)+\beta^{-1}(z)) & \frac{1}{-2i}(\beta(z)-\beta^{-1}(z)) \\
\frac{1}{2i}(\beta(z)-\beta^{-1}(z)) & \frac{1}{2}(\beta(z)+\beta^{-1}(z))
\end{pmatrix},
\end{equation}
where 
\begin{equation}  \label{equation for beta}
\beta(z) = \left(\frac{z-\overline{z_{0}}}{z-z_{0}}\frac{z-z_{1}}{z-\overline{z_{1}}}\right)^{1/4}
\end{equation}
with branch cut on $\gamma\cup\widetilde\gamma$ and such that $\beta(z)\to 1$ as $z\to\infty$, satisfies the following RH problem, see e.g.\ \cite{Bleher} for similar situations:

\subsubsection*{RH problem for $N$}
\begin{itemize}
\item[(a)] $N : \mathbb{C}\setminus \supp\,\mu \to \mathbb{C}^{2\times 2}$ is analytic.
\item[(b)] $N$ has the following jump:
\[
N_{+}(z) = N_{-}(z)  \begin{pmatrix}
0 & 1 \\ -1 & 0
\end{pmatrix}, \hspace{0.5cm} \mbox{ for $z$ on $\gamma\cup\widetilde\gamma$.}
\]
\item[(c)] $N(z) = I + O(z^{-1})$ as $z \to \infty$.
\item[(d)] As $z\to \zeta$, $\zeta = z_{0},\overline{z_{0}}, z_{1}$ or $ \overline{z_{1}}$, we have
\[
N(z)=\bigO(|z-\zeta|^{-1/4}).
\]
\end{itemize}

Now, we take $c=-n\Omega$ and $d_1=-d_2=d$, and define $Q^{(\infty)}$ by
\begin{multline} \label{F RHP}
Q^{(\infty)}(z) = \begin{pmatrix}
\frac{\theta(u_{\infty} + d -n\Omega)}{\theta(u_{\infty} + d)} & 0 \\
0 & \frac{\theta(u_{\infty} + d +n\Omega)}{\theta(u_{\infty} + d)}
\end{pmatrix}^{-1} \times \\
\begin{pmatrix}
\displaystyle N_{11}(z)F_{11}(z;-n\Omega,d,-d)& \displaystyle N_{12}(z)F_{12}(z;-n\Omega,d,-d)\\
\displaystyle N_{21}(z)F_{21}(z;-n\Omega,d,-d)& \displaystyle N_{22}(z)F_{22}(z;-n\Omega,d,-d)
\end{pmatrix},
\end{multline}
where we define $u_{\infty}=\lim_{z\to\infty}u(z)$. Combining the jump relations for $F$ and $N$, it is straightforward to verify that $Q^{(\infty)}$ satisfies the RH problem for $Q^{(\infty)}$, except for the possible problem that it may have poles at the zeros of the functions $\theta(u(z)-d)$ and $\theta(u(z)+d)$. 
We will exploit the freedom we have to choose the value of $d$ to ensure that the zeros in the denominator are cancelled out by zeros in the numerator, so that $Q^{(\infty)}$ is analytic in $\mathbb C\setminus(\supp\,\mu\cup\Sigma_1)$.

For $w \in \mathbb{C}\setminus \supp\,\mu$, we denote by $w^{(1)}$ the representation of $w$ on the first sheet of $X$, and $w^{(2)}$ for the one on the second sheet. We define
\begin{equation}\label{z_star}
z_{\star} = \frac{z_{0}\overline{z_{1}}-\overline{z_{0}}z_{1}}{(z_{0}-\overline{z_{0}})-(z_{1}-\overline{z_{1}})}.
\end{equation}
This is the projection of $z_{1}$ along the vector $z_{1}-z_{0}$ on the real axis $\mathbb{R}$. So if $0 < \sin T < \sin L$, $z_{\star} \in ]-\infty,-1[$, if $\sin T = \sin L$, $z_{\star} = \infty$, and if $\sin L < \sin T$, $z_{\star} \in ]1,+\infty[$.

\begin{proposition}\label{proposition for Pinf}
Let $d = -K + \int_{\overline{z_{1}}}^{z_{\star}^{(1)}} \omega$, where $K = \frac{1}{2} + \frac{\tau}{2}$.
Then $Q^{(\infty)}$ defined by equation \eqref{F RHP} is the solution of the RH problem for $Q^{(\infty)}$. Moreover, $u_{\infty} + d \equiv 0 \mod 1$.
\end{proposition}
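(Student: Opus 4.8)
The plan is to verify the two claims of Proposition~\ref{proposition for Pinf} separately: first that the particular choice of $d$ makes $Q^{(\infty)}$ analytic (i.e.\ cancels the spurious poles), and second the normalization identity $u_\infty+d\equiv 0\bmod 1$. I would begin with the second, since it is the cleaner computation and will be reused for the first. The point $z_\star$ defined in \eqref{z_star} is the intersection with $\mathbb{R}$ of the line through $z_0$ and $z_1$; I would check that this line, extended, meets the real axis on the appropriate ray (the case analysis $0<\sin T<\sin L$ etc.\ already stated), and that the two lifts $z_\star^{(1)}$ and $z_\star^{(2)}$ on the Riemann surface satisfy a reflection relation. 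The key structural fact is that $z_\star$ is a fixed point of the hyperelliptic-type involution combined with the symmetry of $X$ under $z\mapsto 1/\bar z$ (or a related reflection), which forces $u(z_\star^{(1)})+u(z_\star^{(2)})$ and $u_\infty+u(0)$-type sums to be congruent to lattice points; combined with $K=\tfrac12+\tfrac\tau2$ being the Riemann constant for this genus-one curve and homology basis, one gets $u_\infty + d = u_\infty - K + \int_{\overline{z_1}}^{z_\star^{(1)}}\omega \equiv 0 \bmod \Lambda_\tau$, where $\Lambda_\tau=\mathbb{Z}+\tau\mathbb{Z}$. Actually, since $u_\infty$ and $d$ are complex numbers (values of the Abel map, not points on the Jacobian), I would track the periods carefully: the statement $\equiv 0 \bmod 1$ (not $\bmod \Lambda_\tau$) is stronger and reflects that the $\tau$-component cancels exactly because of how the $A$ and $B$ cycles were chosen relative to $z_\star$ and $\infty$.

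Once $u_\infty + d \equiv 0 \bmod 1$ is established, analyticity of $Q^{(\infty)}$ follows from the standard Riemann vanishing theorem argument. The denominators in \eqref{F RHP} are $\theta(u(z)+d)$ and $\theta(-u(z)+d)=\theta(u(z)-d)$ (using evenness of $\theta$). The function $z\mapsto \theta(u(z)+d)$, viewed on the Riemann surface $X$, either vanishes identically or has exactly one zero (genus one), located — by the Riemann vanishing theorem with our choice $d=-K+\int_{\overline{z_1}}^{z_\star^{(1)}}\omega$ — precisely at $z_\star^{(1)}$; similarly $\theta(u(z)-d)$ vanishes only at $z_\star^{(2)}$. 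I would then check that the numerators kill these: in the $(1,1)$ and $(2,1)$ entries the relevant combination $N_{11}(z)f_1(z;-n\Omega,d)$ and $N_{21}(z)f_1(z;-n\Omega,d)$ has numerator $\theta(u(z)+d-n\Omega)$, which generically does \emph{not} vanish at $z_\star^{(1)}$ — so instead the cancellation must come from the $N$-factor. Here is the crucial point: $N(z)$, built from $\beta(z)$ in \eqref{equation for beta}, has its own singularity structure. I would show $N_{11}$ and $N_{22}$ vanish at $z_\star^{(1)}$ (equivalently $\beta+\beta^{-1}$ vanishes there) — no wait, that is not right either. The correct mechanism: $z_\star$ is chosen precisely so that $z_\star^{(1)}$ is where $\beta(z_\star)=\pm i$ or similar, i.e.\ where $N$ degenerates, and the product $N_{ij}F_{ij}$ stays bounded because the pole of $F$ is compensated by a zero of $N$. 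I would make this precise by writing $\beta^2 = \frac{z-\overline{z_0}}{z-z_0}\cdot\frac{z-z_1}{z-\overline{z_1}}$ and computing its value at the real point $z_\star$; one finds $\beta^4(z_\star)$ is real and in fact equals $1$ or such, pinning down $\beta(z_\star)\in\{\pm1,\pm i\}$, and the entries of $N$ that multiply the singular $f$'s are exactly the ones that vanish there.

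The main obstacle, and where I would spend the most care, is matching the two descriptions of the pole location: the divisor-theoretic one (zero of $\theta(u(z)\pm d)$ at $z_\star^{(1,2)}$, from Abel's theorem and the Riemann constant) and the algebraic one (degeneration locus of $N$, computed directly from $\beta$). One has to check these coincide — that the $d$ making the theta-zero sit at $z_\star$ is the same $d$ for which $N$'s null structure sits at $z_\star$ — and this is really an identity between the Riemann constant $K=\tfrac12+\tfrac\tau2$ and a period integral $\int_{\overline{z_1}}^{z_\star^{(1)}}\omega$. I would verify it by the involution/symmetry argument: apply the sheet-swap $\sigma:(z,w)\mapsto(z,-w)$, note $u\circ\sigma = -u$ up to periods, note $z_\star$ is real and the curve has a real structure compatible with the chosen branch cuts on $\gamma\cup\widetilde\gamma$, and deduce that $\int_{\overline{z_1}}^{z_\star^{(1)}}\omega + \int_{\overline{z_1}}^{z_\star^{(2)}}\omega$ equals a specific half-period, which is exactly what is needed for the Riemann constant bookkeeping to close. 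The boundedness at the branch points $z_0,\overline{z_0},z_1,\overline{z_1}$ — condition (d) of the RH problem for $Q^{(\infty)}$ — is then inherited directly from the $|z-\zeta|^{-1/4}$ behaviour of $N$ together with the analyticity (hence local boundedness) of the $F$-factor at those points, provided $z_\star$ is none of them, which holds in all three cases of the $\sin T$ vs.\ $\sin L$ trichotomy.
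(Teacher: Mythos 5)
Your high-level plan (locate the zeros of the theta denominators via the Abel map, cancel them against zeros of $N$, and prove the normalization by a divisor argument) is the right shape, but the execution has a concrete error that the proposal never resolves. With $d=-K+\int_{\overline{z_{1}}}^{z_{\star}^{(1)}}\omega$ one has $u(z)-d=\int_{z_{\star}^{(1)}}^{z}\omega+K$, so it is $\theta(u(z)-d)$ that vanishes at $z_{\star}^{(1)}$, while $\theta(u(z)+d)$ vanishes at $z_{\star}^{(2)}$ --- the opposite of your assignment. This matters for the cancellation: the denominator $\theta(u(z)+d)$ occurs in the diagonal entries $F_{11},F_{22}$, whose zero lies on the \emph{second} sheet and therefore produces no pole at all for $z$ on the first sheet (no cancellation is needed there); the denominator $\theta(u(z)-d)$ occurs in the off-diagonal entries $F_{12},F_{21}$, and its pole at $z_{\star}^{(1)}$ is killed because $N_{12},N_{21}\propto\beta-\beta^{-1}$ and $\beta(z_{\star}^{(1)})=1$ (one checks $\beta^{4}(z_{\star})=1$ together with $\beta(z_{\star}^{(1)})>0$), so $\beta-\beta^{-1}$ vanishes exactly at $z_{\star}^{(1)}$ while $\beta+\beta^{-1}=2\neq0$ there. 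Your text oscillates on precisely this point (``no wait, that is not right either'') and ends with ``$\beta(z_{\star})=\pm i$ or similar,'' which is the value at $z_{\star}^{(2)}$ on the second sheet, not the first; as written, the analyticity argument is not actually carried out, and your claim that the $(1,1)$ entry needs a cancellation from the $N$-factor is false.

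For the identity $u_{\infty}+d\equiv0\bmod1$, the symmetry you propose is not enough. The relation between $u(z_{\star}^{(1)})$ and $u(z_{\star}^{(2)})$ under the sheet involution is automatic (the base point $\overline{z_{1}}$ is a branch point, so $u(z_{\star}^{(2)})\equiv-u(z_{\star}^{(1)})\bmod\Lambda$) and carries no information linking $z_{\star}^{(1)}$ to $\infty^{(1)}$, which is exactly the link the identity requires. The paper obtains it by applying Abel's theorem to the explicit meromorphic function $\beta^{2}-1$, whose divisor is $z_{\star}^{(1)}+\infty^{(1)}-z_{0}-\overline{z_{1}}$; this yields $\int_{\overline{z_{1}}}^{\infty^{(1)}}\omega+\int_{z_{0}}^{z_{\star}^{(1)}}\omega\equiv0\bmod\Lambda$, and the refinement from ``$\bmod\Lambda$'' to ``$\bmod1$'' comes from positivity of $\omega$ on the real segment $z<-1$ of the first sheet, which forces the $\tau$-coefficient of the relevant period to vanish. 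You correctly flag that the ``$\bmod1$'' statement needs period bookkeeping, but you do not supply the ingredient --- an explicit function with the right divisor --- without which neither the location of the theta zeros nor the normalization identity can be pinned down.
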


\begin{proof}
First of all, we identify the zeros of the functions $\beta(z) \pm \beta^{-1}(z)$ on the Riemann surface.  One of those functions vanishes at a point $z$ on the Riemann surface if and only if
\begin{equation}
 \beta(z)^{2} \pm 1 = 0 \Leftrightarrow \beta^{4}(z) = 1 \Leftrightarrow z = z_{\star}.
\end{equation}

From the definition of $\beta$ on the first sheet, we have that $\beta(z_{\star}^{(1)})>0$. So $\beta(z)-\beta^{-1}(z)$ vanishes only at $z_{\star}^{(1)}$ and $\beta(z)+\beta^{-1}(z)$ does not vanish on the first sheet. 
Since $X$ is of genus one, we know that $u(z)$ is a bijection from the Riemann surface $X$ to the Jacobi variety $\mathbb{C}/ \Lambda$, $\Lambda = \{ n + \tau m , n,m \in \mathbb{Z}\}$. Therefore, since $\theta(K)=0$,
\[
\theta(u(z)-d) = \theta \left( \int_{\overline{z_{1}}}^{z} \omega - \int_{\overline{z_{1}}}^{z_{\star}^{(1)}} \omega + K \right)
\]
vanishes only at $z = z_{\star}^{(1)}$. A similar argument shows that
\[
\theta(u(z)+d) = \theta \left( \int_{\overline{z_{1}}}^{z} \omega - \int_{\overline{z_{1}}}^{z_{\star}^{(2)}}\omega - K \right)
\]
vanishes only at $z = z_{\star}^{(2)}$. By \eqref{F RHP}, it follows that $Q^{(\infty)}$ has no poles in $\mathbb C\setminus(\supp\,\mu\cup\Sigma_1)$. Hence it solves the RH problem for $Q^{(\infty)}$. 

\medskip

Now, note that $\beta^{2}(z)-1 = \left(\frac{(z-\overline{z_{0}})(z-z_{1})}{(z-z_{0})(z-\overline{z_{1}})}\right)^{1/2}-1$ is meromorphic on $X$, has simple zeros at $z_{\star}^{(1)}$ and $\infty^{(1)}$, and simple poles at $z_{0}$ and $\overline{z_{1}}$. By the Abel theorem, we have
\begin{equation}\label{id abel}
\int_{\overline{z_{1}}}^{\infty^{(1)}} \omega + \int_{z_{0}}^{z_{\star}^{(1)}} \omega \equiv 0 \mod \Lambda.
\end{equation}
By the choice of cycles and the definition of $\omega$, there exist $n, m \in \mathbb{Z}$ such that
\begin{equation}
\int_{z_{\star}^{(1)}}^{-1_{-}^{(1)}} \omega + \int_{\infty^{(1)}}^{-1_{-}^{(1)}} \omega = \frac{1}{2} + n + m\tau,
\end{equation}
where $-1_{-}^{(1)}$ means that we start the integration path from $-1$ on the side $|z| > 1$ of the first sheet. For $z<-1$ on the first sheet, $\omega>0$ so that $m=0$ if $z_{\star} < -1$ or if $z_{\star}=\infty$. 
From the definitions of $u_\infty$, $d$, and $K$, and by \eqref{id abel} and the choice of cycles, it is then straightforward to see that 
\[u_\infty+d=
\int_{\overline{z_1}}^{z_*^{(1)}}\omega +\int_{z_*^{(1)}}^{z_0}\omega - n -\frac{1}{2}-\frac{\tau}{2}\equiv 0\mod 1.\]
If $z_{\star}>1$, this can be shown in a similar way.
\end{proof}

Summarizing, by \eqref{def Q} and \eqref{F RHP}, we have the following expression for $P^{(\infty)}$:
\begin{multline}\label{final expression Pinf}
P^{(\infty)}(z) = e^{n\pi i \frac{\Omega}{2}\sigma_{3}} \begin{pmatrix}
\frac{1}{2}(\beta(z)+\beta^{-1}(z))\Theta_{11}(z) & \frac{1}{-2i}(\beta(z)-\beta^{-1}(z))\Theta_{12}(z) \\
\frac{1}{2i}(\beta(z)-\beta^{-1}(z))\Theta_{21}(z) & \frac{1}{2}(\beta(z)+\beta^{-1}(z))\Theta_{22}(z)
\end{pmatrix}\\
\times\ \begin{cases}
e^{n\pi i \frac{\Omega}{2}\sigma_{3}}, & |z|<1, \\
e^{-n\pi i \frac{\Omega}{2}\sigma_{3}}, & |z|>1 ,\end{cases}
\end{multline}
with
\begin{equation} \label{ratio theta functions}
\begin{array}{l l}
\displaystyle \Theta_{11}(z) = \frac{\theta(0)}{\theta(n\Omega)} \frac{\theta(u(z)+d-n\Omega)}{\theta(u(z)+d)}, & \qquad \displaystyle \Theta_{12}(z) = \frac{\theta(0)}{\theta(n\Omega)}\frac{\theta(u(z)-d+n\Omega)}{\theta(u(z)-d)}, \\
\displaystyle \Theta_{21}(z) = \frac{\theta(0)}{\theta(n\Omega)}\frac{\theta(u(z)-d-n\Omega)}{\theta(u(z)-d)}, & \qquad \displaystyle \Theta_{22}(z) = \frac{\theta(0)}{\theta(n\Omega)}\frac{\theta(u(z)+d+n\Omega)}{\theta(u(z)+d)}.\\
\end{array}
\end{equation}
In the above formulas involving $u(z)$, $z$ is taken on the first sheet.

\subsection{Local parametrices near $z_{0} = e^{iL}$ and $\overline{z_0}=e^{-iL}$}
Near $z_0$ and $\overline{z_0}$, we want to construct a function which has exactly the same jump conditions than $S$, and which matches with the global parametrix on the boundaries of small fixed disks of radius $r > 0$, $D(z_0,r)$ or  $D(\overline{z_0},r)$, around $z_0$ and $\overline{z_0}$. More precisely, near $z_0$, we want to have the following conditions.

\subsubsection*{RH problem for $P$}

\begin{itemize}
\item[(a)] $P : D(z_{0},r) \setminus (S^{1} \cup \gamma_{+} \cup \gamma_{-}) \to \mathbb{C}^{2\times 2}$ is analytic.
\item[(b)] For $z\in D(z_{0},r) \cap (S^{1} \cup \gamma_{+} \cup \gamma_{-})$, $P$ satisfies the jump conditions
\begin{equation}\label{jumps P}
\begin{array}{l l}
P_{+}(z) = P_{-}(z) \begin{pmatrix}
0 & 1 \\ -1 & 0
\end{pmatrix}, & \mbox{ on } \gamma, \\

P_{+}(z) = P_{-}(z) \begin{pmatrix}
 e^{-n\pi i \Omega} & e^{n(\phi(z)-x)} \\
 0 & e^{n\pi i \Omega}
\end{pmatrix}, & \mbox{ on } \Sigma_1, \\

P_{+}(z) = P_{-}(z) \begin{pmatrix}
 1 & 0  \\ e^{-n\phi(z)}e^{-n\pi i \Omega} & 1
\end{pmatrix}, & \mbox{ on } \gamma_{+}, \\

P_{+}(z) = P_{-}(z) \begin{pmatrix}
 1 & 0  \\ e^{-n\phi(z)}e^{n\pi i \Omega} & 1
\end{pmatrix}, & \mbox{ on } \gamma_{-}. \\
\end{array}
\end{equation}
\item[(c)] For $z \in \partial D (z_{0},r)$, we have
\begin{equation}\label{matching P z0} P(z) = \left(I + \bigO(n^{-1})\right) P^{(\infty)}(z),\qquad \mbox{ as $n \to \infty$.}
\end{equation}
\item[(d)] As $z$ tends to $z_{0}$, the behaviour of $P$ is
\begin{equation}\label{P local}
\begin{array}{l l}
P(z) = \bigO(\log|z-z_0|).
\end{array}
\end{equation}
\end{itemize}

We can solve this RH problem in the same way as done in \cite{ChCl} in terms of Bessel functions. The construction in \cite{ChCl} is an adaptation of the standard Bessel parametrix construction from \cite{Kuijlaars2}.

Define $\Psi$ by
\begin{equation}\label{Psi explicit}
\Psi(\zeta)=\begin{cases}
\begin{pmatrix}
I_{0}(2\zeta^{\frac{1}{2}}) & \frac{i}{\pi} K_{0}(2\zeta^{\frac{1}{2}}) \\
2\pi i \zeta^{\frac{1}{2}} I_{0}^{\prime}(2\zeta^{\frac{1}{2}}) & -2\zeta^{\frac{1}{2}} K_{0}^{\prime}(2\zeta^{\frac{1}{2}})
\end{pmatrix}, & |\arg \zeta | < \frac{2\pi}{3}, \\

\begin{pmatrix}
\frac{1}{2} H_{0}^{(1)}(2(-\zeta)^{\frac{1}{2}}) & \frac{1}{2} H_{0}^{(2)}(2(-\zeta)^{\frac{1}{2}}) \\
\pi \zeta^{\frac{1}{2}} \left( H_{0}^{(1)} \right)^{\prime} (2(-\zeta)^{\frac{1}{2}}) & \pi \zeta^{\frac{1}{2}} \left( H_{0}^{(2)} \right)^{\prime} (2(-\zeta)^{\frac{1}{2}})
\end{pmatrix}, & \frac{2\pi}{3} < \arg \zeta < \pi, \\

\begin{pmatrix}
\frac{1}{2} H_{0}^{(2)}(2(-\zeta)^{\frac{1}{2}}) & -\frac{1}{2} H_{0}^{(1)}(2(-\zeta)^{\frac{1}{2}}) \\
-\pi \zeta^{\frac{1}{2}} \left( H_{0}^{(2)} \right)^{\prime} (2(-\zeta)^{\frac{1}{2}}) & \pi \zeta^{\frac{1}{2}} \left( H_{0}^{(1)} \right)^{\prime} (2(-\zeta)^{\frac{1}{2}})
\end{pmatrix}, & -\pi < \arg \zeta < -\frac{2\pi}{3},
\end{cases}
\end{equation}
where $H_0^{(1)}$ and $H_0^{(2)}$ are the Hankel functions of the first and second kind, and $I_0$ and $K_0$ are the modified Bessel functions of the first and second kind. Next, let
\begin{equation}\label{def hatPsi}
\widehat{\Psi}(\zeta) = \left( I + A(\zeta) \right) \Psi(\zeta), 
\end{equation}
where $A(\zeta)$ is a nilpotent matrix which is needed to take care of the jump on $\Sigma_1$,
\begin{equation}\label{def A}
A(\zeta) = e^{-nx} F(\zeta) \begin{pmatrix}
0 & - \frac{1}{2\pi i} \log(-\zeta) \\ 0 & 0 \\
\end{pmatrix} F^{-1}(\zeta),
\end{equation}
with the branch cut of $\log(-\zeta)$ on $\mathbb R^+$,
and $F$ is the entire function given by
\begin{equation}\label{F}
F(\zeta) = \left\{ \begin{array}{l l}
\Psi(\zeta) \begin{pmatrix}
1 & -\frac{1}{2\pi i} \log \zeta \\
0 & 1
\end{pmatrix} & | \arg \zeta | < \frac{2\pi}{3}, \\

\Psi(\zeta) \begin{pmatrix}
1 & 0 \\
1 & 1
\end{pmatrix} \begin{pmatrix}
1 & -\frac{1}{2\pi i} \log \zeta \\
0 & 1
\end{pmatrix} & \frac{2\pi}{3} < \arg \zeta < \pi, \\

\Psi(z) \begin{pmatrix}
1 & 0 \\
-1 & 1
\end{pmatrix} \begin{pmatrix}
1 & -\frac{1}{2\pi i} \log \zeta \\
0 & 1
\end{pmatrix} & -\pi < \arg \zeta < -\frac{2\pi}{3}. \\

\end{array} \right.
\end{equation}
It was shown in \cite{ChCl} that $\widehat\Psi$
is the solution of the following RH problem.

\subsubsection*{RH problem for $\widehat{\Psi}$}

\begin{itemize}
\item[(a)] $\widehat{\Psi} : \mathbb{C} \setminus \Sigma_{\widehat{\Psi}} \to \mathbb{C}^{2\times 2}$ is analytic, where $\Sigma_{\widehat{\Psi}}$ is shown in Figure \ref{figPsiHat}.
\item[(b)] $\widehat{\Psi}$ satisfies the jump conditions
\begin{equation}\label{jumps hatPsi}
\begin{array}{l l}
\widehat{\Psi}_{+}(\zeta) = \widehat{\Psi}_{-}(\zeta) \begin{pmatrix}
0 & 1 \\ -1 & 0
\end{pmatrix}, & \mbox{ on } \mathbb{R}^{-}, \\

\widehat{\Psi}_{+}(\zeta) = \widehat{\Psi}_{-}(\zeta) \begin{pmatrix}
1 & e^{-nx} \\ 0 & 1
\end{pmatrix}, & \mbox{ on } \mathbb{R}^{+}, \\

\widehat{\Psi}_{+}(\zeta) = \widehat{\Psi}_{-}(\zeta) \begin{pmatrix}
 1 & 0  \\ 1 & 1
\end{pmatrix}, & \mbox{ on } e^{\frac{2\pi}{3}i}\mathbb{R}^{+}, \\

\widehat{\Psi}_{+}(\zeta) = \widehat{\Psi}_{-}(\zeta) \begin{pmatrix}
 1 & 0 \\ 1 & 1
\end{pmatrix}, & \mbox{ on } e^{-\frac{2\pi}{3}i} \mathbb{R}^{+}.
\end{array}
\end{equation}
\item[(c)] As $\zeta \to \infty$, $\zeta \notin \Sigma_{\Psi}$, we have
\begin{equation}
\widehat{\Psi}(\zeta) = \left( I + A(\zeta) \right)\left( 2\pi \zeta^{\frac{1}{2}} \right)^{-\frac{\sigma_{3}}{2}} \frac{1}{\sqrt{2}} \begin{pmatrix}1&i\\i&1\end{pmatrix}\left(
I+\bigO (\zeta^{-\frac{1}{2}})\right) e^{2\zeta^{\frac{1}{2}}\sigma_{3}}.
\end{equation}
\item[(d)] As $\zeta$ tends to 0, the behaviour of $\widehat{\Psi}(\zeta)$ is
\begin{equation}\label{Psi local}
\widehat{\Psi}(\zeta) = \bigO (\log |\zeta| ).
\end{equation}
\end{itemize}

\begin{figure}[t]
    \begin{center}
    \setlength{\unitlength}{1truemm}
    \begin{picture}(100,55)(-5,10)
        \put(50,40){\line(1,0){30}}
        \put(50,40){\line(-1,0){30}}
        \put(50,39.8){\thicklines\circle*{1.2}}
        \put(50,40){\line(-0.5,0.866){18}}
        \put(50,40){\line(-0.5,-0.866){18}}
        \qbezier(53,40)(52,43)(48.5,42.598)
        \put(53,43){$\frac{2\pi}{3}$}
        \put(50.3,36.8){$0$}
        \put(65,39.9){\thicklines\vector(1,0){.0001}}
        \put(35,39.9){\thicklines\vector(1,0){.0001}}
        \put(41,55.588){\thicklines\vector(0.5,-0.866){.0001}}
        \put(41,24.412){\thicklines\vector(0.5,0.866){.0001}}
    \end{picture}
    \caption{The jump contour for $\widehat{\Psi}$ and $\Phi$.}
    \label{figPsiHat}
\end{center}
\end{figure}
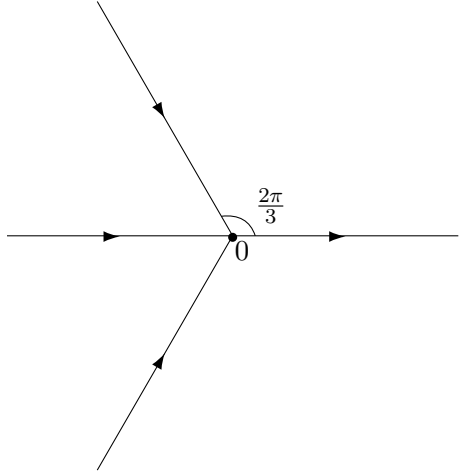

Then, we define $\zeta:D(z_0,r)\to\mathbb C$ such that
$\zeta(z) = \frac{1}{16}\phi(z)^{2}$. One can check that $\zeta$ is a conformal map which maps $z_0$ to $0$.
We can now fix the lenses $\gamma_+$ and $\gamma_-$ such that $\zeta$ maps them to the half-lines $e^{\frac{2\pi}{3}i} \mathbb{R}^{+}$ and $e^{- \frac{2\pi}{3}i} \mathbb{R}^{+}$.

We can now construct the local parametrix as follows,
\begin{equation}\label{P local para Bessel}
P(z) = E(z) \widehat{\Psi}(n^{2}\zeta(z))e^{-\frac{n}{2}\phi(z)\sigma_{3}} \times \left\{ \begin{array}{l l}
e^{-n\pi i \frac{\Omega}{2}\sigma_{3}}, & |z|<1, \\
e^{n\pi i \frac{\Omega}{2}\sigma_{3}}, & |z|>1, \\
\end{array} \right\}
\end{equation}
where 
\begin{equation} \label{E in Bessel}
E(z) = P^{(\infty)}(z) \times \left\{ \begin{array}{l l}
e^{n\pi i \frac{\Omega}{2}\sigma_{3}}, & |z|<1, \\
e^{-n\pi i \frac{\Omega}{2}\sigma_{3}}, & |z|>1, \\
\end{array} \right\} \frac{1}{\sqrt{2}} \begin{pmatrix}
1 & -i \\ -i & 1
\end{pmatrix} \left( \frac{1}{2}n\pi \phi(z) \right)^{\sigma_{3}/2},
\end{equation}
where the principal branch is chosen for $(\cdot)^{1/2}$, in such a way that $E$ is analytic in $D(z_{0},r)$. Then, $P$ solves the RH problem for $P$.

The local parametrix near $\overline{z_0}$ can simply be constructed by symmetry. For $z \in D(\overline{z_{0}},r)$, we have
\[
P(z)=\overline{P(\overline{z})}.
\]

\subsection{Local parametrix near ${z_{1}} = e^{i(\pi-T)}$ and $\overline{z_{1}} = e^{i(\pi+T)}$}
Near $z_1$ and $\overline{z_1}$, we again want to construct a function $P$ which has the same jump relations than $S$, and which matches with the global parametrix.

In $D(\overline{z_1},r)$, we need the following RH conditions.
\subsubsection*{RH problem for $P$}

\begin{itemize}
\item[(a)] $P : D(\overline{z_{1}},r) \setminus (S^{1} \cup \widetilde{\gamma}_{+} \cup \widetilde{\gamma}_{-}) \to \mathbb{C}^{2\times 2}$ is analytic.
\item[(b)] For $z\in D(\overline{z_{1}},r) \cap (S^{1} \cup \widetilde{\gamma}_{+} \cup \widetilde{\gamma}_{-})$, $P$ satisfies the jump conditions
\begin{equation}\label{jumps P2}
\begin{array}{l l}
P_{+}(z) = P_{-}(z) \begin{pmatrix}
0 & 1 \\ -1 & 0
\end{pmatrix}, & \mbox{ on } \widetilde{\gamma}, \\

P_{+}(z) = P_{-}(z) \begin{pmatrix}
 e^{n\pi i \Omega} & e^{n\widetilde{\phi}(z)} \\
 0 & e^{-n\pi i \Omega}
\end{pmatrix}, & \mbox{ on } \Sigma_2, \\

P_{+}(z) = P_{-}(z) \begin{pmatrix}
 1 & 0  \\ e^{-n\widetilde{\phi}(z)}e^{n\pi i \Omega} & 1
\end{pmatrix}, & \mbox{ on } \widetilde{\gamma}_{+}, \\

P_{+}(z) = P_{-}(z) \begin{pmatrix}
 1 & 0  \\ e^{-n\widetilde{\phi}(z)}e^{-n\pi i \Omega} & 1
\end{pmatrix}, & \mbox{ on } \widetilde{\gamma}_{-}. \\
\end{array}
\end{equation}
\item[(c)] For $z \in \partial D (\overline{z_{1}},r)$, we have
\begin{equation}\label{matching P z1} P(z) = \left(I + \bigO(n^{-1})\right) P^{(\infty)}(z),\qquad \mbox{ as $n \to \infty$.}
\end{equation}
\item[(d)] As $z$ tends to $\overline{z_{1}}$, the behaviour of $P$ is
\begin{equation}\label{P local2}
\begin{array}{l l}
P(z) = \bigO(1).
\end{array}
\end{equation}
\end{itemize}

We will construct the solution of this RH problem in terms of the Airy function.
For this, we use the following well-known model RH problem, see e.g.\ \cite{DKMVZ2, DKMVZ1, Bleher}.
\subsubsection*{Airy model RH problem}

\begin{itemize}
\item[(a)] $\Phi : \mathbb{C} \setminus \Sigma_{\Phi} \to \mathbb{C}^{2\times 2}$ is analytic, with $\Sigma_\Phi$ as in Figure \ref{figPsiHat}.
\item[(b)] For $z\in\Sigma_\Phi$, $\Phi$ satisfies the jump conditions
\begin{equation}\label{jumps P3}
\begin{array}{l l}
\Phi_{+}(z) = \Phi_{-}(z) \begin{pmatrix}
0 & 1 \\ -1 & 0
\end{pmatrix}, & \mbox{ on } \mathbb{R}^{-}, \\

\Phi_{+}(z) = \Phi_{-}(z) \begin{pmatrix}
 1 & 1 \\
 0 & 1
\end{pmatrix}, & \mbox{ on } \mathbb{R}^{+}, \\

\Phi_{+}(z) = \Phi_{-}(z) \begin{pmatrix}
 1 & 0  \\ 1 & 1
\end{pmatrix}, & \mbox{ on } e^{ \frac{2\pi}{3} i }  \mathbb{R}^{+} , \\

\Phi_{+}(z) = \Phi_{-}(z) \begin{pmatrix}
 1 & 0  \\ 1 & 1
\end{pmatrix}, & \mbox{ on }e^{ -\frac{2\pi}{3} i }\mathbb{R}^{+} . \\
\end{array}
\end{equation}
\item[(c)] As $z \to \infty$, we have
\begin{equation} 
\Phi(z) = \frac{1}{2\sqrt{\pi}}z^{-\frac{1}{4}\sigma_{3}} \begin{pmatrix}
1 & i \\ -1 & i
\end{pmatrix} \left( I + \bigO (z^{-3/2})\right) e^{-\frac{2}{3}z^{3/2}\sigma_{3}}.
\end{equation}
\item[(d)] As $z$ tends to $0$, the behaviour of $\Phi$ is
\begin{equation}\label{P local3}
\begin{array}{l l}
\Phi(z) = \bigO(1).
\end{array}
\end{equation}
\end{itemize}

The unique solution of this RH problem is given by
\begin{equation}\label{sol Airy}
\Phi(z) = \left\{ \begin{array}{l l}
\begin{pmatrix}
y_{0}(z) & -y_{2}(z) \\ 
y_{0}^{\prime}(z) & -y_{2}^{\prime}(z)
\end{pmatrix}, & \mbox{for } 0 < \arg z < \frac{2\pi}{3}, \\
\begin{pmatrix}
-y_{1}(z) & -y_{2}(z) \\ 
-y_{1}^{\prime}(z) & -y_{2}^{\prime}(z)
\end{pmatrix}, & \mbox{for } \frac{2\pi}{3} < \arg z < \pi, \\
\begin{pmatrix}
-y_{2}(z) & y_{1}(z) \\ 
-y_{2}^{\prime}(z) & y_{1}^{\prime}(z)
\end{pmatrix}, & \mbox{for } -\pi < \arg z < -\frac{2\pi}{3}, \\
\begin{pmatrix}
y_{0}(z) & y_{1}(z) \\ 
y_{0}^{\prime}(z) & y_{1}^{\prime}(z)
\end{pmatrix}, & \mbox{for } -\frac{2\pi}{3} < \arg z < 0, \\
\end{array} \right.
\end{equation}
where $y_{0}(z) = \mbox{Ai}(z)$, $y_{1}(z) = e^{\frac{2\pi}{3}i} \mbox{Ai}(e^{\frac{2\pi}{3}i}z)$ and $y_{2}(z) = e^{-\frac{2\pi}{3}i}\mbox{Ai}(e^{-\frac{2\pi}{3}i}z)$.

\subsubsection{Construction of the local parametrix}

We construct $P$ using the solution $\Phi$ to the Airy model RH problem. We take $P$ of the form
\begin{equation}\label{def P Airy}
P(z) = E(z) \Phi (n^{2/3} \widetilde\zeta(z)) e^{-\frac{n}{2}\widetilde{\phi}(z)\sigma_{3}} \times  \begin{cases}
e^{n\pi i \frac{\Omega}{2}\sigma_{3}}, & |z| < 1 \\
e^{-n\pi i \frac{\Omega}{2}\sigma_{3}}, & |z| > 1 \\
\end{cases}.
\end{equation}
Here $\widetilde\zeta$ is a conformal map near $\overline{z_1}$, defined by $\widetilde\zeta(z) = \left( -\frac{3}{4}\widetilde{\phi}(z) \right)^{\frac{2}{3}}$. 
Then, we have $\widetilde\zeta(\widetilde{\gamma}\cap D(\overline{z_{1}},r)) \subset \mathbb{R}^{-}$ and $\widetilde\zeta(\Sigma_2\cap D(\overline{z_{1}},r)) \subset \mathbb{R}^{+}$. We had some freedom in the choice of the lenses, we use it now, by choosing $\widetilde{\gamma}_{+}$ and $\widetilde{\gamma}_{-}$ such that 
\[
\begin{array}{l}
\displaystyle \widetilde\zeta( \widetilde{\gamma}_{+} \cap D(\overline{z_{1}},r)) \subset e^{ \frac{2\pi}{3} i }\mathbb{R}^{+}, \\
\displaystyle \widetilde\zeta( \widetilde{\gamma}_{-} \cap D(\overline{z_{1}},r)) \subset e^{ -\frac{2\pi}{3} i } \mathbb{R}^{+}.\\
\end{array}
\]
In this way, if $E$ is analytic in $D(\overline{z_{1}},r)$, $P$ has its jumps precisely on the jump contour for $S$. Using \eqref{def P Airy}, one verifies moreover that $P$ has exactly the same jumps than $S$ inside $D(\overline{z_{1}},r)$.

In order to have the matching condition \eqref{matching P z1}, we now define $E$ as
\begin{equation}\label{E in airy}
E(z) = P^{(\infty)}(z) \times \left\{ \begin{array}{l l}
e^{-n\pi i \frac{\Omega}{2}\sigma_{3}}, & |z| < 1 \\
e^{n\pi i \frac{\Omega}{2}\sigma_{3}}, & |z| > 1 \\
\end{array} \right\}\times \left[ \frac{1}{2\sqrt{\pi}} n^{-\frac{1}{6}\sigma_{3}} \widetilde\zeta(z)^{-\frac{1}{4}\sigma_{3}} \begin{pmatrix}
1 & i \\ -1 & i
\end{pmatrix} \right]^{-1},
\end{equation} with  $\widetilde\zeta(z)^{-1/4\sigma_3}$ analytic except for $\widetilde\zeta(z) \in\mathbb{R}^{-}$, or $z\in\widetilde\gamma$. Using the jump relations satisfied by $P^{(\infty)}$, it is straightforward to check that $E$ is analytic in $D(\overline{z_{1}},r))$. In this way, $P$ satisfies the RH conditions needed near $\overline{z_1}$. In $D({z_{1}},r)$, the local parametrix is directly constructed as
$P(z)=\overline{P(\overline{z})}$.
\subsection{Final transformation $S\mapsto R$}\label{section: R}

Define
\begin{equation}\label{R}
R(z)=\begin{cases}
S(z)P(z)^{-1},&z\in D(z_{0},r)\cup D(\overline{z_{0}},r)\cup D(z_{1},r)\cup D(\overline{z_{1}},r),\\
S(z)P^{(\infty)}(z)^{-1}, & \mbox{elsewhere}.
\end{cases}
\end{equation}
$P$ was constructed in such a way that it has exactly the same jump relations as $S$ in the four small disks, and as a consequence $R$ has no jumps at all inside those disks. Moreover, from the local behaviour of $S$ and $P$ near $z_0$, $\overline{z_{0}}$, $z_{1}$, $\overline{z_{1}}$, it follows that $R$ is analytic at these four points.

If we orient the circles in the clockwise sense, the jump matrices for $R$ are given by
\begin{equation}\label{jump R circles}
R_-(z)^{-1}R_+(z)=P(z)P^{(\infty)}(z)^{-1},\quad\mbox{ for $z$ on the four circles,}
\end{equation}
and by 
\begin{equation}\label{jump R lenses}
R_-(z)^{-1}R_+(z)=P^{(\infty)}(z)S_-(z)^{-1}S_+(z)P^{(\infty)}(z)^{-1},\end{equation} 
for $z\in\Sigma_S \setminus \supp\,\mu$ and $z$ outside the four disks, where $\Sigma_S$ is the jump contour for $S$. Notice that $R$ is analytic on the part of $\gamma\cup\widetilde\gamma$ outside the four disks. By the fact that the jump matrices for $S$ converge to the identity matrix rapidly outside the four disks, by the uniform boundedness of $P^{(\infty)}$ away from the four edge points, and by the matching conditions \eqref{matching P z0} and \eqref{matching P z1}, the jump matrices for $R$ become small as $n\to\infty$.

In conclusion, we have the following RH problem for $R$:
\subsubsection*{RH problem for $R$}

\begin{figure}[t]
    \begin{center}
    \setlength{\unitlength}{1truemm}
    \begin{picture}(100,55)(0,10)
        \cCircle(65,60){6}[f]
        \cCircle(65,20){6}[f]
        \cCircle(30,55){6}[f]
        \cCircle(30,25){6}[f]
        \qbezier(69,55)(75,40)(69,24.8)
        \qbezier(71,59)(100,40)(71,21)
        \qbezier(34.15,59.33)(46,69.0665)(59.8,63)
        \qbezier(34.15,20.67)(46,10.9335)(59.8,17)
        \qbezier(25.5,51)(10,40)(25.5,29)
        \qbezier(29,49.1)(25,40)(29,30.9)
        
        \put(67,65.8){\thicklines\vector(1,0){.0001}}
        \put(63.8,14){\thicklines\vector(-1,0){.0001}}
        \put(72.07,41.5){\thicklines\vector(0,1){.0001}}
        \put(85.5,41.5){\thicklines\vector(0,1){.0001}}
        \put(49.4,65.2){\thicklines\vector(-1,0){.0001}}
        \put(50.8,14.8){\thicklines\vector(1,0){.0001}}
        \put(31.1,60.9){\thicklines\vector(1,0){.0001}}
        \put(29,18.9){\thicklines\vector(-1,0){.0001}}
        \put(27,38.8){\thicklines\vector(0,-1){.0001}}
        \put(17.9,38.8){\thicklines\vector(0,-1){.0001}}
        
        
        \put(64,59){$z_{0}$} 
        \put(64,19){$\overline{z_{0}}$}
        \put(29,54){$z_{1}$} 
        \put(29,24){$\overline{z_{1}}$}
        
        \put(48,68){$\Sigma_{R}$}
    \end{picture}
    \caption{The jump contour for $R$.}
    \label{figure: contour R}
\end{center}
\end{figure}
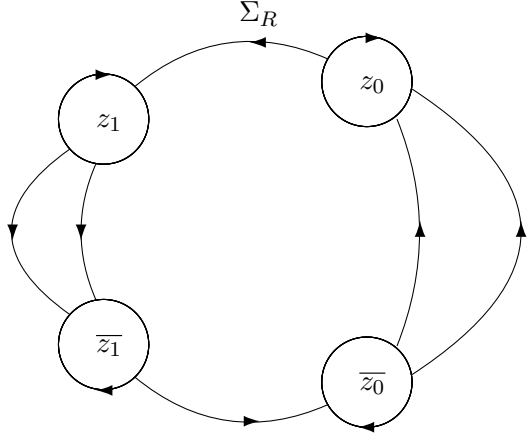

\begin{itemize}
\item[(a)] $R : \mathbb{C} \setminus \Sigma_{R} \to \mathbb{C}^{2\times 2}$ is analytic, where $\Sigma_R$ consists of the four circles and the part of $\Sigma_S \setminus \supp\,\mu$ outside the four disks, as in Figure \ref{figure: contour R}.
\item[(b)] As $n\to\infty$ with $0<x<x_c$, $R$ satisfies the jump conditions
\begin{equation}
\begin{array}{l l}
\displaystyle R_{+}(z) = R_{-}(z) \left(I + \bigO(n^{-1})\right), & \mbox{ for $z$ on the four circles}, \\[0.2cm]

\displaystyle R_{+}(z) = R_{-}(z) (I+\bigO(e^{-cn})), & \mbox{ for } z \mbox{ elsewhere on } \Sigma_{R},

\end{array}
\end{equation}
with $c > 0$ a constant independent of $n$.
\item[(c)] $R(z) = I + \bigO(z^{-1})$ as $z \to \infty$.
\end{itemize}
From the standard theory for small-norm RH problems \cite{DKMVZ2, DKMVZ1}, it follows that \begin{equation}R(z) = I + \bigO(n^{-1}),\qquad R^{\prime}(z) = \bigO(n^{-1}),\qquad n\to\infty,\label{as R}
\end{equation}
uniformly for $z \in \mathbb{C}\setminus \Sigma_{R}$. The asymptotics \eqref{as R} hold uniformly as long as $x$ lies in a compact subset of $(0,x_c)$, or if $s$ is in a region of the type $e^{-(x_{c}-\delta)n} \leq s \leq e^{-\delta n}$, with $\delta > 0$ a fixed constant.

\subsection{Asymptotics for the orthogonal polynomials}
\label{section: as phin}
We can now use the RH analysis to obtain large $n$ asymptotics for the orthogonal polynomials $\phi_n=Y_{11}$, valid for $\delta<x<x_c-\delta$, for any $\delta>0$.

\subsubsection{The outer region}

For any $\epsilon>0$, if $|z|<1-\epsilon$ or $|z|>1+\epsilon$, we can take the lenses sufficiently close to the unit circle and the disks for the local parametrices sufficiently small, such that $z$ lies in the region outside the lenses and outside the disks.

Inverting the transformations $Y\mapsto T\mapsto S\mapsto R$, we can express $Y$ in terms of $R$, and obtain the identity
\begin{equation}
Y(z) = e^{\frac{n\pi i}{2}\sigma_{3}} e^{-\frac{n\ell}{2}\sigma_{3}}R(z)P^{(\infty)}(z)e^{-\frac{n\pi i}{2}\sigma_{3}}e^{\frac{n\ell}{2}\sigma_{3}}e^{ng(z)\sigma_{3}}.
\end{equation} By \eqref{final expression Pinf}, this leads to
\begin{equation}
\phi_{n}(z) = e^{ng(z)} \left( P_{11}^{(\infty)}(z)R_{11}(z) + P_{21}^{(\infty)}(z)R_{12}(z) \right).
\end{equation}
Since $P^{(\infty)}(z)$ is uniformly bounded in $n$ for $z$ in this region and by \eqref{as R}, we get
\begin{equation}\label{Y11 near inf and 0}
\phi_{n}(z) = e^{ng(z)} \left( P_{11}^{(\infty)}(z)+ \bigO(n^{-1}) \right), \quad \mbox{ as } n \to \infty.
\end{equation}

For the norming constants $h_n$, we have $h_n=Y_{12}(0)$ by \eqref{sol_Y}.
Using the fact that $g(0) = \pi i$ and the identity $\beta(0) - \beta^{-1}(0) = -2i\sin \left( \frac{L+T}{2}\right)$, we obtain
\begin{align}
h_{n} &= Y_{12}(0) = e^{-n\ell} P_{12}^{(\infty)}(0) \left( 1 + \bigO(n^{-1}) \right) \nonumber\\
&= e^{-n\ell} \sin \left( \frac{L+T}{2} \right) \Theta_{12}(0)\left( 1 + \bigO(n^{-1}) \right),\label{norm of orthogonal pol}
\end{align}
as $n \to \infty$, where we note that $\Theta_{12}(0)$ is real since $u(0)-d \in \mathbb{R}$ and satisfies
\begin{equation}
\frac{\theta(1/2)}{\theta(0)} \leq \Theta_{12}(0) \leq \frac{\theta(0)^{2}}{\theta(1/2)^{2}}.
\end{equation}

\subsubsection{Inside the lenses away from edges}
In this subsection, we will only consider the cases where $z$ is outside the disks near the edge points. For $z$ on the unit circle, or close to it, we need to use the formula for $S$ valid inside the lens-shaped regions, see \eqref{def S}.  This leads to different asymptotic expressions for $\phi_n$.

Inside the lenses around $\gamma$, for $|z| > 1$, we have
\begin{equation}\label{poly lens}
\begin{array}{r c l}
\phi_{n}(z) & = & \displaystyle e^{ng(z)} \sum_{j=1,2}  R_{1j}(z) \left( P_{j1}^{(\infty)}(z) - e^{-n\phi(z)} e^{n\pi i \Omega} P_{j2}^{(\infty)}(z) \right) \\
& = & \displaystyle e^{ng(z)} \left( P_{11}^{(\infty)}(z) - e^{-n\phi(z)} e^{n\pi i \Omega} P_{12}^{(\infty)}(z) + \bigO(n^{-1})\right),
\end{array}
\end{equation}
as $n\to\infty$,
since $\Re (\phi(z)) > 0$ for $z$ inside the lenses and $P^{(\infty)} = \bigO(1)$ in this region.

Similarly to \eqref{poly lens}, we obtain asymptotics for $\phi_{n}$ in the other regions. For $|z| <1$, inside the lenses around $\gamma$, we have
\begin{equation}
\phi_{n}(z) = e^{ng(z)} \left( P_{11}^{(\infty)}(z) + e^{-n\phi(z)}e^{-n\pi i \Omega} P_{12}^{(\infty)}(z)+ \bigO(n^{-1}) \right) \quad \mbox{ as } n \to + \infty.
\end{equation}  

For $z$ inside the lenses around $\widetilde{\gamma}$, we obtain similar expressions:
\begin{equation}
\phi_{n}(z) = e^{ng(z)} \left( P_{11}^{(\infty)}(z) - e^{-n\widetilde\phi(z)}e^{-n\pi i \Omega} P_{12}^{(\infty)}(z) + \bigO(n^{-1})\right) \quad \mbox{ as } n \to + \infty
\end{equation}
for $|z| > 1$ and 
\begin{equation}
\phi_{n}(z) = e^{ng(z)} \left( P_{11}^{(\infty)}(z) + e^{-n\widetilde\phi(z)}e^{n\pi i \Omega} P_{12}^{(\infty)}(z) + \bigO(n^{-1})\right)\quad \mbox{ as } n \to + \infty
\end{equation}
for $|z| < 1$.

\subsubsection{Near the hard edges} \label{Near the hard edges}

Inside the disk $D(z_0,r)$, for $|z|>1$ and $z$ inside the lenses, inverting the transformations in the RH analysis, we obtain the identity
\begin{equation}
Y(z) = e^{\frac{n \pi i }{2} \sigma_{3}} e^{-\frac{n\ell}{2}\sigma_{3}} R(z) P(z) \begin{pmatrix}
1 & 0 \\ -e^{-n\phi(z)}e^{n\pi i \Omega} & 1
\end{pmatrix} e^{ng(z) \sigma_{3}} e^{\frac{n\ell}{2}\sigma_{3}} e^{-\frac{n\pi i }{2}\sigma_{3}}.
\end{equation}
Taking the $1,1$ entry, we have
\begin{multline}\label{poly near hard edge}
\phi_{n}(z) = e^{ng(z)} \left[ \left( P_{11}(z) - e^{-n\phi(z)}e^{n\pi i \Omega}P_{12}(z) \right)R_{11}(z)\right. \\\left. +\left( P_{21}(z) - e^{-n\phi(z)}e^{n\pi i \Omega}P_{22}(z) \right) R_{12}(z) \right],
\end{multline}
where $P$ is the local Bessel parametrix \eqref{P local para Bessel}. Similar expressions hold near $\overline{z_0}$; they can easily be obtained from the equation $\phi_n(\overline{z_0})=\overline{\phi_n(z_0)}$.

\subsubsection{Near the soft edges}

Near $\overline{z_1}$, we can express $\phi_n$ asymptotically in terms of the Airy parametrix. For $|z| > 1$, $z$ inside the lenses around $\widetilde{\gamma}$, and $z\in D(\overline{z_{1}},r)$, we obtain after a similar calculation than in Section \ref{Near the hard edges} that
\begin{multline}\label{poly soft edge}
\phi_{n}(z) = e^{ng(z)} \left[ \left( P_{11}(z) - e^{-n\widetilde\phi(z)}e^{-n\pi i \Omega}P_{12}(z) \right)R_{11}(z)\right. \\\left. +\left( P_{21}(z) - e^{-n\widetilde\phi(z)}e^{-n\pi i \Omega}P_{22}(z) \right) R_{12}(z) \right],
\end{multline}
where $P$ is the local Airy parametrix \eqref{def P Airy}.

\section{Proofs of main results}\label{section: proofs}

In this section, we use the asymptotics obtained for the orthogonal polynomials $\phi_n$ to prove our main results, Theorem \ref{theorem: Toeplitz}, Theorem \ref{theorem macro}, and Theorem \ref{theorem kernel}, corresponding to Case IV where $s=e^{-xn}$ with $x$ in compact subsets of $(0,x_{c})$. Afterwards we will also indicate how one can prove the corresponding results in Cases I-III, without giving full details.

\subsection{Zeros of $\phi_n$}
\label{section: proof zeros IV}
If we denote by $z_{j}^{(n)}$, $j = 1,...,n$ the zeros of the polynomial $\phi_{n}(z)$ and by $\nu_{n,s,L}=\frac{1}{n}\sum_{j=1}^n\delta_{z_j^{(n)}}$ the normalized zero counting measure of $\phi_n$, we have by definition that
\[
\phi_{n}(z) = \prod_{j=1}^{n} (z-z_{j}^{(n)}) = e^{\sum_{j=1}^{n} \log (z-z_{j}^{(n)})} = e^{n \int \log(z-w)d\nu_{n,s,L}(w)}.
\]
We know from the general theory for orthogonal polynomials on the unit circle that all the zeros lie strictly inside the unit circle. It follows from Helly's theorem that there exists a subsequence $(\nu_{n_{k},s,L})_{k}$ which converges weakly to a probability measure $\nu_{s,L}$ supported on a subset of $|z| \leq 1$.
From \eqref{Y11 near inf and 0}, it follows that
\[
\lim_{n\to \infty} \frac{1}{n} \mathrm{Re}(\log \phi_{n}(z))  = \int \log |z-e^{i\theta}| d\mu_{x,L}(e^{i\theta}), \qquad \forall z \in  \mathbb{C}\setminus S^1.
\]
In particular we have for any converging subsequence $\nu_{n_k,s,L}$ that
\[
\lim_{k\to \infty} \int\log |z-w|d\nu_{n_{k},s,L}(w) = \int \log |z-e^{i\theta}|d\mu_{x,L}(e^{i\theta}),
\]
for all $z$ off the unit circle.
By the unicity theorem \cite[Theorem II.2.1]{SaffTotik}, this implies that $\nu_{n_k, s,L}$ converges weakly to $\mu_{x,L}$. Hence the entire sequence
$\nu_{n,s,L}$ converges weakly to $\mu_{x,L}$.
This proves a first part of Theorem \ref{theorem macro}.

We can also prove a result about the location of the zeros of $\phi_{n}$: for $n$ large enough, there cannot be zeros which are not close to the unit circle or to a point $w_{n}$ on the real line, as stated in the following proposition.
\begin{proposition} \label{prop: zeros phin}
Let $0<L<\pi$ and $s=e^{-xn}$ with $0<x < x_{c}$. There exists a real sequence $(w_{n})_{n\in\mathbb N}$ such that for any $\epsilon>0$, there exists $n_0\in\mathbb N$ such that for every $n\geq n_0$, $\phi_{n}(z)$ has no zeros in $\{z\in\mathbb C: |z| \leq 1-\epsilon, |z-w_{n}| \geq \epsilon\}$.
\end{proposition}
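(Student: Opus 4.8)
The plan is to read off the zeros of $\phi_n$ in the region $|z|\le 1-\epsilon$ directly from the outer–region asymptotics of Section~\ref{section: as phin} together with the explicit theta–function formula \eqref{final expression Pinf}--\eqref{ratio theta functions} for the global parametrix. Since the lens contours $\gamma_\pm,\widetilde\gamma_\pm$ and the disks $D(z_0,r),\dots,D(\overline{z_1},r)$ may be taken inside an arbitrarily small neighbourhood of $S^1$, every $z$ with $|z|\le 1-\epsilon$ lies in the outer region, so \eqref{Y11 near inf and 0} gives $\phi_n(z)=e^{ng(z)}\bigl(P_{11}^{(\infty)}(z)+\bigO(n^{-1})\bigr)$, uniformly for $|z|\le 1-\epsilon$. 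As $e^{ng(z)}$ never vanishes, the zeros of $\phi_n$ in this region are governed entirely by $P_{11}^{(\infty)}$.

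First I would factor $P_{11}^{(\infty)}$. By \eqref{final expression Pinf} and \eqref{ratio theta functions}, for $|z|<1$,
\[
P_{11}^{(\infty)}(z)=e^{n\pi i\Omega}\,\tfrac12\bigl(\beta(z)+\beta^{-1}(z)\bigr)\,\frac{\theta(0)}{\theta(n\Omega)}\,\frac{1}{\theta(u(z)+d)}\;\cdot\;\theta\bigl(u(z)+d-n\Omega\bigr).
\]
The prefactor never vanishes on $\{|z|<1\}$: $|e^{n\pi i\Omega}|=1$; $\beta+\beta^{-1}$ does not vanish on the first sheet (shown in the proof of Proposition~\ref{proposition for Pinf}); $\theta$ is positive and $1$-periodic on $\mathbb R$ so $\theta(0)/\theta(n\Omega)$ is bounded above and below uniformly in $n$; and $\theta(u(z)+d)$ vanishes only at $z_\star^{(2)}$, hence nowhere on the first sheet. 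On the compact set $\{|z|\le 1-\epsilon\}$ the modulus of this prefactor therefore lies in a fixed interval $[c_1(\epsilon),c_2(\epsilon)]\subset(0,\infty)$ independent of $n$, and, since $\mathrm{Im}(u(z)+d-n\Omega)=\mathrm{Im}(u(z)+d)$ stays bounded, $P_{11}^{(\infty)}$ itself is bounded there uniformly in $n$ (so the $\bigO(n^{-1})$ above is genuinely uniform). Because $\theta$ vanishes precisely on $K+\Lambda$ with $K=\tfrac12+\tfrac\tau2$, and $u$ maps $X$ bijectively onto $\mathbb C/\Lambda$, the factor $\theta(u(\cdot)+d-n\Omega)$ has at most one zero in $\{|z|<1\}$, the first-sheet preimage of $K-d+n\Omega\bmod\Lambda$ when it lies there; call it $w_n$, and set $w_n:=0$ otherwise. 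One checks, exactly as for the local parametrices near $\overline{z_0},\overline{z_1}$, that $\overline{P^{(\infty)}(\bar z)}=P^{(\infty)}(z)$ entrywise (the data of the RH problem for $P^{(\infty)}$ is invariant under $z\mapsto\bar z$: the arcs $\gamma\cup\widetilde\gamma$ and the jump $\left(\begin{smallmatrix}0&1\\-1&0\end{smallmatrix}\right)$ are real, and $\Sigma_2=\overline{\Sigma_1}$ carries the complex conjugate jump of $\Sigma_1$). Hence the zero set of $P_{11}^{(\infty)}$ is invariant under conjugation, and being a single point it is real, so $w_n\in\mathbb R$.

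Combining these, for $|z|\le 1-\epsilon$ one has $\bigl|\phi_n(z)e^{-ng(z)}\bigr|\ge c_1(\epsilon)\bigl|\theta(u(z)+d-n\Omega)\bigr|-C(\epsilon)/n$, so it remains to prove that $|\theta(u(z)+d-n\Omega)|\ge\delta(\epsilon)>0$ uniformly in $n$ on the compact set $\{|z|\le 1-\epsilon,\ |z-w_n|\ge\epsilon\}$; then $\phi_n(z)\neq0$ there for all $n\ge n_0(\epsilon)$, which is the claim (and it suffices to treat $\epsilon$ below some threshold, the larger-$\epsilon$ regions being subregions of smaller-$\epsilon$ ones). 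Writing $v=u(z)+d-n\Omega$, the imaginary part of $v$ lies in a fixed bounded interval because $n\Omega\in\mathbb R$, so $|\theta(v)|$ depends, up to factors bounded above and below, only on $v\bmod 1$, which ranges over a fixed compact set; hence $|\theta(v)|$ is bounded below as soon as $v$ stays a fixed positive distance from $K+\Lambda$. Now $v-K\equiv u(z)-u(w_n)\bmod\Lambda$, and $u$ restricted to the compact region is conformal, hence bi-Lipschitz, so $c(\epsilon)\le|u(z)-u(w_n)|\le C(\epsilon)$ whenever $|z-w_n|\ge\epsilon$ (in the degenerate case $w_n=0$ the zero of $\theta(u(\cdot)+d-n\Omega)$ lies outside $\{|z|<1\}$ and this lower bound holds on all of $\{|z|\le1-\epsilon\}$ directly); for $\epsilon$ small enough that $c(\epsilon)$ is below half the injectivity radius of $\Lambda$, this separation persists modulo $\Lambda$, giving the bound.

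\textbf{Main obstacle.} The genuinely delicate step is this last uniform-in-$n$ lower bound on $|\theta(u(z)+d-n\Omega)|$ away from $w_n$: as $n$ grows the argument is translated all the way around the Jacobian and $w_n$ moves with it, so one must exploit the $1$-periodicity of $\theta$ along $\mathbb R$ — precisely the direction in which $n\Omega$ moves — together with the bi-Lipschitz behaviour of $u$ on compacts, while keeping every constant independent of $n$. The remaining ingredients are already in place: the outer-region expansion is \eqref{Y11 near inf and 0}, the non-vanishing of $\beta+\beta^{-1}$ and of $\theta(u(\cdot)+d)$ on the first sheet is contained in the proof of Proposition~\ref{proposition for Pinf}, and the conjugation symmetry of $P^{(\infty)}$ follows at once from uniqueness in its RH problem.
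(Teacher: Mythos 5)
Your proposal is correct and follows essentially the same route as the paper: outer-region asymptotics \eqref{Y11 near inf and 0}, reduction of the zeros of $P_{11}^{(\infty)}$ to those of $\theta(u(\cdot)+d-n\Omega)$ via the non-vanishing of $\beta+\beta^{-1}$ and $\theta(u(\cdot)+d)$, uniqueness of that zero through the bijectivity of $u$ onto the Jacobian, and reality of $w_n$ from the symmetry $P^{(\infty)}(z)=\overline{P^{(\infty)}(\overline{z})}$. You in fact supply more than the paper does, since the published proof simply asserts the $n$-uniform lower bound on $|P_{11}^{(\infty)}|$ away from $w_n$, whereas you justify it via the $1$-periodicity of $\theta$ along the real direction of $n\Omega$ and the injectivity of $u$ modulo $\Lambda$.
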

\begin{proof}
We can use \eqref{Y11 near inf and 0} for $|z|\leq 1-\epsilon$ and $n$ sufficiently large, which implies that $\phi_n(z)$ has, for sufficiently large $n$, no zeros if $|P^{(\infty)}_{11}(z)|>c>0$ with $c$ independent of $n$. Since $\beta(z)+\beta^{-1}(z)$ and $\theta(u(z)+d)$ both are bounded away from zero in this region, $P_{11}^{(\infty)}(z)$ vanishes only at the zeros of $\theta(u(z)+d-n\Omega)$. 
Since the Riemann surface $X$ is of genus one, $u(z)$ is a bijection from $X$ to the Jacobi variety $\mathbb{C}/ \Lambda$, $\Lambda = \{ n + \tau m , n,m \in \mathbb{Z}\}$, implying that $\theta(u(z)+d-n\Omega)$ vanishes in at most one point which we denote by $w_{n}$. For any $n$, the point $w_n$ is necessarily real because of the symmetry $P^{(\infty)}(z) = \overline{P^{(\infty)}(\overline{z})}$. This relation follows from the fact that $\overline{P^{(\infty)}(\overline{z})}$ satisfies the RH conditions for $P^{(\infty)}$, and the uniqueness of the RH solution.
\end{proof}


\subsection{Limit of the mean eigenvalue density}

We first prove that the mean eigenvalue density $\psi_{n,s,L}(e^{i\theta})=\frac{1}{n}K_n(e^{i\theta},e^{i\theta})$ converges to $0$ for $e^{i\theta}$ on $S^1\setminus\supp\,\mu_{x,L}$.

For $z$ in the exterior region of $\Sigma_{R}$, we have
\eqref{Y11 near inf and 0}. If we let $z$ approach a point $e^{i\theta}\in S^1\setminus\supp\,\mu_{x,L} = \Sigma_1\cup\Sigma_2$, we obtain
\[
\phi_{n}(e^{i\theta}) = e^{ng_{-}(e^{i \theta})}P_{11,-}^{(\infty)}(e^{i\theta}) \left( 1 + \bigO(n^{-1}) \right),\qquad n\to\infty,
\]
where we note that $P^{(\infty)}_{11,-}(e^{i\theta})$ is bounded away from $0$.
A direct calculation shows that 
\[
\phi_n'(e^{i\theta}) \overline{\phi_n(e^{i\theta})} - \left( \phi_n^* \right)' (e^{i\theta}) \overline{\phi_n^*(e^{i\theta})} = e^{2n \mathrm{Re}(g_{-}(e^{i\theta}))}\bigO(n),\qquad n\to\infty.
\]
It was shown in \cite[Equations (3.16) and (3.21)]{ChCl}  that
\[
2 \mbox{Re}(g_{-}(e^{i\theta})) = 2 \int_{-\pi}^{\pi} \log |e^{i\theta}-e^{i\alpha}|d\mu(e^{i\alpha}) < -\ell+x 
\]
for $e^{i\theta}$ in $\Sigma_1\cup\Sigma_2$.
Combining this with \eqref{onepointfunction} and \eqref{norm of orthogonal pol}, we obtain
\begin{equation}
\psi_{n,s,L}(e^{i\theta}) = \exp \left( \left( 2 \int_{-\pi}^{\pi} \log |e^{i\theta}-e^{i\alpha}|d\mu(e^{i\alpha}) + \ell - x  \right)n \right) \bigO(1),\qquad n\to\infty.
\end{equation}
This shows that $\psi_{n,s,L}(e^{i\theta})$ tends to $0$ uniformly for $e^{i\theta}$ in any compact subset of $\Sigma_1\cup\Sigma_2$.

\medskip

Next, we show that $\psi_{n,s,L}(e^{i\theta})\to\psi_{x,L}(e^{i\theta})$ uniformly for $e^{i\theta}$ in compact subsets of $\gamma$, where $\psi_{x,L}$ is given by \eqref{def muxL}.
Taking the limit as $z\to e^{i\theta}$ in \eqref{poly lens}, we get
\begin{equation} \label{pol ortho near gamma}
\phi_{n}(e^{i\theta}) = e^{n g_{-}(e^{i\theta})} \left( Q(e^{i\theta}) + \bigO(n^{-1}) \right),\qquad n\to\infty,
\end{equation}
where $Q(e^{i\theta}) = P_{11,-}^{(\infty)}(e^{i\theta}) - e^{-n\phi_{-}(e^{i\theta})}e^{n\pi i \Omega} P_{12,-}^{(\infty)}(e^{i\theta}) $.

Now, \eqref{onepointfunction} can be rewritten as 
\begin{equation} \label{onepointfunction 2}
\psi_{n,s,L}(e^{i\theta}) = \frac{f(e^{i\theta})}{2\pi n h_{n}} \left[ -n |\phi_{n}(e^{i\theta})|^{2} + 2\mbox{Re} \left( e^{i\theta}\phi_{n}^{\prime}(e^{i\theta}) \overline{\phi_{n}(e^{i\theta})} \right) \right].
\end{equation}

Substituting \eqref{norm of orthogonal pol} and \eqref{pol ortho near gamma} in \eqref{onepointfunction 2}, taking into account that $\mbox{Re}(g_{-}(e^{i\theta})) = -\frac{\ell}{2}$ and $f(e^{i\theta}) = 1$ for $e^{i\theta} \in \gamma$, we obtain
\begin{multline*}
\psi_{n,s,L}(e^{i\theta}) = \frac{1}{2\pi P_{12}^{(\infty)}(0)} \left[ |Q(e^{i\theta})|^{2} \left( 2 \mbox{Re} (e^{i\theta} g_{-}^{\prime}(e^{i\theta})) - 1 \right) \right. \\ + \left. 2 \mathrm{Re} \left( e^{i\theta} \phi_{-}^{\prime}(e^{i\theta}) e^{-n \phi_{-}(e^{i\theta})} e^{n\pi i \Omega} P_{12,-}^{(\infty)}(e^{i\theta}) \overline{Q(e^{i\theta})} \right) +\bigO(n^{-1}) \right],
\end{multline*}
as $n\to\infty$.

Equation \eqref{def_of_phi} together with
the properties 
\begin{equation} \label{g properties on gamma}
\begin{array}{r c l}
\displaystyle g_{+}(z) + g_{-}(z) & = & \displaystyle \log z + i\pi - \ell, \qquad \hspace{4.4 mm} \mbox{ for } z \in \gamma, \\
\displaystyle g_{+}(z) - g_{-}(z) & = & \displaystyle 2 \pi i \int_{\arg z}^{\pi} d\mu(e^{i\theta}), \qquad  \mbox{ for } z \in S^{1}, \\
\end{array}
\end{equation}
where $-\pi < \arg z \leq \pi$, implies
\[
2 \mbox{Re} (e^{i\theta} g_{-}^{\prime}(e^{i\theta})) - 1 = e^{i\theta} \phi_{-}^{\prime}(e^{i\theta}) = 2\pi \psi_{x,L}(e^{i\theta}).
\]
We thus have, as $n\to\infty$,
\begin{equation} \label{modulusform}
\psi_{n,s,L}(e^{i\theta}) = \psi_{x,L}(e^{i\theta}) \left( \frac{|P_{11,-}^{(\infty)}(e^{i\theta})|^{2}-|P_{12,-}^{(\infty)}(e^{i\theta})|^{2}}{P_{12}^{(\infty)}(0)}  + \bigO(n^{-1}) \right).
\end{equation}
The convergence of $\psi_{n,s,L}$ to $\psi_{x,L}$ then follows from the following identity.
\begin{proposition}\label{prop Pinf relation}
For all $z \in \gamma \cup \widetilde\gamma$, we have
\begin{equation}\label{eq Pinf relation}
|P_{11,-}^{(\infty)}(e^{i\theta})|^{2}-|P_{12,-}^{(\infty)}(e^{i\theta})|^{2} = P_{12}^{(\infty)}(0).
\end{equation}
\end{proposition}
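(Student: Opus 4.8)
The plan is to prove the stronger statement that the matrix function
\[
\Xi(z) := P^{(\infty)}(z)\,\sigma_1\,\overline{P^{(\infty)}(1/\bar z)}^{\,T},\qquad \sigma_1=\pmtwo{0}{1}{1}{0},
\]
is constant, equal to $P^{(\infty)}(0)\sigma_1$, and then to read \eqref{eq Pinf relation} off from its $(1,1)$ entry. Here $\overline{P^{(\infty)}(1/\bar z)}$ denotes the entrywise complex conjugate of $P^{(\infty)}$ evaluated at $1/\bar z=\overline{1/z}$; since $z\mapsto 1/\bar z$ is anti-holomorphic and $P^{(\infty)}$ is analytic off $S^1$, the map $z\mapsto\overline{P^{(\infty)}(1/\bar z)}$ is again analytic in $z$ on $\mathbb C\setminus S^1$, and it extends analytically across $z=0$ because $P^{(\infty)}(w)=I+\bigO(w^{-1})$ as $w\to\infty$.

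First I would verify that $\Xi$ has no jump across $S^1$. The map $z\mapsto 1/\bar z$ fixes $S^1$ pointwise but interchanges its interior and exterior, hence interchanges the $\pm$ boundary values of the second factor of $\Xi$; concretely $\big[\overline{P^{(\infty)}(1/\bar z)}^{\,T}\big]_{\pm}=(P^{(\infty)}_{\mp})^{*}$ on $S^1$. On $\gamma\cup\widetilde\gamma$ the jump of $P^{(\infty)}$ is $V=\pmtwo{0}{1}{-1}{0}$, and using $V\sigma_1=\sigma_1 V^{T}=\sigma_3$ one finds $\Xi_{+}=\Xi_{-}=P^{(\infty)}_{-}\,\sigma_3\,(P^{(\infty)}_{-})^{*}$ there. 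On $\Sigma_1$ and $\Sigma_2$ the jump of $P^{(\infty)}$ is the diagonal matrix $e^{\mp n\pi i\Omega\sigma_3}$, whose complex conjugate is $e^{\pm n\pi i\Omega\sigma_3}$ since $\Omega\in\mathbb R$, and the elementary identity $e^{-n\pi i\Omega\sigma_3}\sigma_1=\sigma_1 e^{+n\pi i\Omega\sigma_3}$ again gives $\Xi_{+}=\Xi_{-}$. These cancellations do not depend on the chosen orientations of the contours.

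Next, at each branch point $\zeta\in\{z_0,\overline{z_0},z_1,\overline{z_1}\}$ both factors are $\bigO(|z-\zeta|^{-1/4})$ — the first by RH condition (d) for $P^{(\infty)}$, the second because $|1/\bar z-\zeta|\asymp|z-\zeta|$ for $z$ near $\zeta\in S^1$ — so $\Xi(z)=\bigO(|z-\zeta|^{-1/2})$, and being single-valued near $\zeta$ the singularity is removable. Finally $\Xi$ is bounded at $\infty$ (there $P^{(\infty)}(z)\to I$ and $\overline{P^{(\infty)}(1/\bar z)}^{\,T}\to\overline{P^{(\infty)}(0)}^{\,T}$) and analytic and bounded near $z=0$, with $\Xi(0)=P^{(\infty)}(0)\sigma_1$. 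Hence $\Xi$ extends to a bounded entire function, and by Liouville's theorem $\Xi\equiv P^{(\infty)}(0)\sigma_1$. Taking the exterior boundary value on $\gamma\cup\widetilde\gamma$ gives $P^{(\infty)}_{-}(e^{i\theta})\,\sigma_3\,(P^{(\infty)}_{-}(e^{i\theta}))^{*}=P^{(\infty)}(0)\sigma_1$, whose $(1,1)$ entries are $|P^{(\infty)}_{11,-}(e^{i\theta})|^{2}-|P^{(\infty)}_{12,-}(e^{i\theta})|^{2}$ on the left and $P^{(\infty)}_{12}(0)$ on the right; this is exactly \eqref{eq Pinf relation}.

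I expect the one delicate point to be the bookkeeping of contour orientations and of which boundary value of $P^{(\infty)}(1/\bar z)$ occurs on each side, so that the transposes and conjugates are tracked correctly and the jump factors really cancel; once the symmetry $\Xi\equiv P^{(\infty)}(0)\sigma_1$ is established the rest is immediate. As a sanity check, or as an alternative but more computational route, one can instead use the explicit formula \eqref{final expression Pinf}: on the bands $|\beta_{\pm}|$ and the relation between $\beta_{+}$ and $\beta_{-}^{-1}$ are explicit, and the theta quotients $\Theta_{ij,-}$ relate to their complex conjugates through the quasi-periodicity \eqref{periodicity of theta} together with the normalization $u_\infty+d\equiv 0\bmod 1$ from Proposition \ref{proposition for Pinf}; the Liouville argument above avoids this bookkeeping.
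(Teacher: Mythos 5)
Your argument is correct, and it takes a genuinely different route from the paper's. The paper establishes two separate symmetries via uniqueness of the RH solution --- the Schwarz reflection $P^{(\infty)}(z)=\overline{P^{(\infty)}(\overline{z})}$ and the inversion relation \eqref{inverse_relation_of_Pinf}, the latter involving $P^{(\infty)}(\cdot\,;-\Omega)$ --- and then uses the entrywise identities $P_{22}^{(\infty)}(z;-\Omega)=P_{11}^{(\infty)}(z;\Omega)$, $P_{21}^{(\infty)}(z;-\Omega)=-P_{12}^{(\infty)}(z;\Omega)$ read off from the explicit formula \eqref{final expression Pinf} to arrive at the off-circle identity $P_{12}^{(\infty)}(0)=P_{11}^{(\infty)}(z^{-1})P_{12}^{(\infty)}(z)+P_{12}^{(\infty)}(z^{-1})P_{11}^{(\infty)}(z)$, which is then pushed to the boundary with the jump relation. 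You instead fold the reflection and the inversion into the single anti-holomorphic involution $z\mapsto 1/\bar z$ and apply Liouville's theorem directly to the quadratic expression $\Xi(z)=P^{(\infty)}(z)\,\sigma_1\,\overline{P^{(\infty)}(1/\bar z)}^{\,T}$. I checked the key computations: $V\sigma_1=\sigma_1V^{T}=\sigma_3$ for $V=\pmtwo{0}{1}{-1}{0}$ does cancel the jump on $\gamma\cup\widetilde\gamma$, the anticommutation $e^{-n\pi i\Omega\sigma_3}\sigma_1=\sigma_1e^{n\pi i\Omega\sigma_3}$ (with $\Omega$ real) handles $\Sigma_1$ and $\Sigma_2$, the $\bigO(|z-\zeta|^{-1/2})$ bound makes the branch points removable, and $|1/\bar z-\zeta|=|z-\zeta|/|z|$ justifies the estimate on the second factor; evaluating the constant at $z=0$ and reading off the $(1,1)$ entry of $\Xi_{-}=P^{(\infty)}_{-}\sigma_3(P^{(\infty)}_{-})^{*}$ gives exactly \eqref{eq Pinf relation}. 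What your route buys is independence from the explicit theta-function representation and from the auxiliary $-\Omega$ problem: only the RH conditions (a)--(d) for $P^{(\infty)}$ are used (and as a bonus it shows $P_{12}^{(\infty)}(0)\in\mathbb{R}$ by comparing $\Xi(0)$ with $\Xi(\infty)$). What the paper's route buys is the intermediate identity valid for all $|z|\neq 1$, which is of independent use. Both proofs ultimately rest on the same Liouville/uniqueness principle, so the difference is one of packaging rather than of substance, but yours is self-contained and arguably cleaner.
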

\begin{proof}
We recall the relation $P^{(\infty)}(z) = \overline{P^{(\infty)}(\overline{z})}$ (see the proof of Proposition \ref{prop: zeros phin}). This allows us to rewrite the left hand side of \eqref{eq Pinf relation} as
\begin{equation} \label{nonmodulusform}
P_{11,-}^{(\infty)}(e^{i\theta}) P_{11,-}^{(\infty)}(e^{-i\theta})- P_{12,-}^{(\infty)}(e^{i\theta}) P_{12,-}^{(\infty)}(e^{-i\theta}).
\end{equation}
By another argument of uniqueness of the solution of the RH problem for $P^{(\infty)}$, we have the relation
\begin{equation} \label{inverse_relation_of_Pinf}
P^{(\infty)}(z;\Omega) = \sigma_{1} P^{(\infty)}(0;-\Omega)^{-1} P^{(\infty)}(z^{-1};-\Omega) \sigma_{1},
\end{equation}
where we have now explicitly written the dependence on the parameter $\Omega$ of $P^{(\infty)}$, and $\sigma_{1} = \begin{pmatrix}
0 & 1 \\ 1 & 0
\end{pmatrix}$. Noting from \eqref{final expression Pinf} that $P_{21}^{(\infty)}(0;-\Omega) = -P_{12}^{(\infty)}(0;\Omega)$ and using \eqref{inverse_relation_of_Pinf}, we obtain
\[
P_{12}^{(\infty)}(0;\Omega) = P_{22}^{(\infty)}(z^{-1};-\Omega) P_{12}^{(\infty)}(z;\Omega) - P_{21}^{(\infty)}(z^{-1};-\Omega) P_{11}^{(\infty)}(z;\Omega).
\]
From \eqref{final expression Pinf}, for $|z| \neq 1$, 
\begin{equation}
P_{22}^{(\infty)}(z;-\Omega) = P_{11}^{(\infty)}(z;\Omega) \qquad \mbox{ and } \qquad P_{21}^{(\infty)}(z;-\Omega) = -P_{12}^{(\infty)}(z;\Omega).
\end{equation}
We are then led to an expression involving only $P^{(\infty)}(z;\Omega)$ (so we omit again the dependence in $\Omega$):
\[
P_{12}^{(\infty)}(0) = P_{11}^{(\infty)}(z^{-1})P_{12}^{(\infty)}(z) + P_{12}^{(\infty)}(z^{-1})P_{11}^{(\infty)}(z), \quad \mbox{ for } |z| \neq 1.
\]
Taking the limit of the above expression as $z \to e^{i\theta} \in \gamma \cup \widetilde \gamma$, $|z| > 1$ and noting that
\begin{align*}
&\lim_{z\to e^{i\theta}_-}P_{11}^{(\infty)}(z^{-1})  =  P_{11,+}^{(\infty)}(e^{-i\theta}) = -P_{12,-}^{(\infty)}(e^{-i\theta}), \\
&\lim_{z\to e^{i\theta}_-}P_{12}^{(\infty)}(z^{-1})_{-}  =  P_{12,+}^{(\infty)}(e^{-i\theta}) = P_{11,-}^{(\infty)}(e^{-i\theta}), \end{align*}
we obtain that $P_{12}^{(\infty)}(0)$ is equal to \eqref{nonmodulusform}, which proves the proposition.
\end{proof}
The convergence of $\psi_{n,s,L}(e^{i\theta})$ to $\psi_{x,L}(e^{i\theta})$ for $e^{i\theta}$ on $\widetilde\gamma$ is obtained in a similar way.

To complete the proof of Part 1 of Theorem \ref{theorem macro}, we still need to show weak convergence of $\mu_{n,s,L}$ to $\mu_{x,L}$. 
To that end, let $h$ be a continuous function on the unit circle.
For any $\epsilon>0$, we can take a compact subset $A$ of the unit circle which does not contain the points $z_0, \overline{z_0}, z_1, \overline{z_1}$ and which is such that 
\[\mu_{x,L}(A)>1-\frac{\epsilon}{3\max_{z\in S^1}|h(z)|}.\]
Because $\psi_{n,s,L}$ converges uniformly to $\psi_{x,L}$ for $e^{i\theta}\in A$, we then have
\begin{equation}\label{estimate weakconv1}\left|\int_{e^{i\theta}\in A}h(e^{i\theta})\left(\psi_{n,s,L}(e^{i\theta})-\psi_{x,L}(e^{i\theta})\right)d\theta\right|<\epsilon/3,\end{equation}
for $n$ sufficiently large.
We also have
\begin{equation}\label{estimate weakconv2}\left|\int_{e^{i\theta}\in S^1\setminus A}h(e^{i\theta})\psi_{x,L}(e^{i\theta})d\theta \right|\leq \max_{z\in S^1}|h(z)|(1-\mu_{x,L}(A))<\epsilon/3.\end{equation}
Since $\mu_{n,s,L}(A)$ converges to $\mu_{x,L}(A)$ as $n \to\infty$, we also have
\begin{equation}\label{estimate weakconv3}\left|\int_{e^{i\theta}\in S^1\setminus A}h(e^{i\theta})\psi_{n,s,L}(e^{i\theta})d\theta \right|\leq \max_{z\in S^1}|h(z)|(1-\mu_{n,s,L}(A))<\epsilon/3,\end{equation}
for sufficiently large $n$. Combining \eqref{estimate weakconv1}--\eqref{estimate weakconv3}, we obtain
\begin{equation}\label{estimate weakconv4}\left|\int_{0}^{2\pi} h(e^{i\theta})\left(\psi_{n,s,L}(e^{i\theta})-\psi_{x,L}(e^{i\theta})\right)d\theta\right|<\epsilon,\end{equation}
which proves the weak convergence of $\mu_{n,s,L}$ to $\mu_{x,L}$.

The asymptotics \eqref{limOmega} for the average number of eigenvalues are obtained in a similar way: we have
\[\mathbb E\left(\#(\Theta\cap \gamma^c)|\Phi\subset\gamma\right)=n\int_{e^{i\theta}\in\gamma^c}\psi_{n,s,L}(e^{i\theta})d\theta.\] Repeating the same argument as above, but now with $h$ the characteristic function of $\gamma^c$, we obtain \eqref{limOmega}. 

For the variance, we use \eqref{identity variance}, and obtain \eqref{thm variance} after differentiating \eqref{limOmega} with respect to $s=e^{-xn}$. This can be justified using a standard argument in RH analysis: first, it follows from the RH analysis that \eqref{limOmega} holds not only for real $0<x<x_c$, but also uniformly for $x$ in a small complex neighbourhood of $(0,x_c)$. Next, using Cauchy's integral formula 
\[\partial_x \left(\partial_x \log D_n(e^{-nx},L)\right)=\frac{1}{2\pi i}\int \frac{\partial_{x'} \log D_n(e^{-nx'},L)}{(x-x')^2}dx',\] where the integral is over a small circle in the $x$-plane. Using \eqref{limOmega} and the fact that the error term is uniform on the little circle, we obtain \eqref{thm variance}.

\subsection{Proof of Theorem \ref{theorem: Toeplitz}}

Since \[\frac{1}{n}\left. s\partial_s\log D_n(s,L)\right|_{s=e^{-xn}}=\frac{1}{n}\mathbb E\left(\#(\Theta\cap \gamma^c)|\Phi\subset\gamma\right)\leq 1,\] 
we can integrate \eqref{limOmega} with respect to $x$ and use Lebesgue's dominated convergence theorem. From the point-wise convergence \eqref{limOmega} for $0<x<x_c$, we obtain
\begin{align*}\label{diff id as}
\lim_{n\to\infty} \frac{1}{n^2} \int_0^x \partial_{\xi}\log D_n(e^{-\xi n},L) d\xi&=-\lim_{n\to\infty} \frac{1}{n}
\int_{0}^{x} \left. s \partial_s \log D_n(s,L) \right|_{s=e^{-\xi n}}d\xi \\&=-\int_{0}^{x} \Omega(\xi,L) d\xi,
\end{align*}
for $x\in[0,x_c]$.

\subsection{Bessel kernel limit}\label{section: proof Bessel}

Let $u$, $v$ be two positive real numbers and $c$ a positive constant. To simplify the expressions we will use the following notations:
\begin{equation}
u_{n} = L - \frac{u}{(cn)^{2}}, \qquad v_{n} = L - \frac{v}{(cn)^{2}}.
\end{equation}
Using \eqref{norm of orthogonal pol} in \eqref{kernel conditional CUE CD}, we obtain as $n\to\infty$,
\begin{multline}\label{Bessel kernel 1}
\frac{1}{(cn)^{2}}K_{n}(e^{iu_{n}},e^{iv_{n}}) = \frac{-i\left( e^{in(u_{n}-v_{n})} \phi_{n}(e^{-iu_{n}}) \phi_{n}(e^{iv_{n}}) - \phi_{n}(e^{iu_{n}}) \phi_{n}(e^{-iv_{n}})) \right)}{2\pi e^{-n\ell} P_{12}^{(\infty)}(0)(u-v)}\\
\times\  (1 + \bigO(n^{-1})).
\end{multline}
We can also use the asymptotic expression \eqref{poly near hard edge} for $\phi_{n}(e^{iu_{n}})$.
Noting that
\[
n^{2} \zeta(e^{i u_{n}}) = -n^{2} \zeta^{\prime}(z_{0}) e^{iL} \frac{iu}{(cn)^{2}} \left( 1 + \bigO\left(\frac{u}{n^{2}}\right) \right),
\] 
and that
\[\zeta^{\prime}(z_{0}) = \frac{1}{4} e^{-iL} e^{-i\frac{\pi}{2}} \frac{|z_{0}-z_{1}||z_{0}-\overline{z_{1}}|}{|z_{0}-\overline{z_{0}}|},
\]
we obtain
\begin{equation} \label{ncarre zeta}
n^{2}\zeta(e^{i u_{n}}) = -\frac{u}{4}\left( 1 + \bigO\left(\frac{u}{n^{2}}\right) \right),\qquad n^{2}\zeta(e^{i v_{n}}) = -\frac{v}{4}\left( 1 + \bigO\left(\frac{v}{n^{2}}\right) \right),
\end{equation} 
as $n\to\infty$,
if we choose $c = \sqrt{\frac{|z_{0}-z_{1}||z_{0}-\overline{z_{1}}|}{|z_{0}-\overline{z_{0}}|}}$.
This implies by \eqref{P local para Bessel} and \eqref{E in Bessel} that $P_{-}(e^{iu_{n}}) = \bigO(n^{1/2})$ as $n\to\infty$ uniformly for $u$ in a compact subset of $(0,\infty)$, and similarly for $v_n$. By \eqref{as R} and \eqref{poly near hard edge}, we thus have
\begin{equation} \label{pol Bessel 1}
\phi_{n}(e^{iu_{n}}) = e^{ng_{-}(e^{iu_{n}})} \left( P_{11,-}(e^{iu_{n}}) - e^{-n\phi_{-}(e^{iu_{n}})}e^{n\pi i \Omega}P_{12,-}(e^{iu_{n}})+\bigO(n^{-1/2}) \right),
\end{equation}
as $n\to\infty$.

The definition of $\widehat{\Psi}$ and one of the connection formulas for the Bessel functions (see \cite{NIST}),
\[
H_{0}^{(1)}(z) + H_{0}^{(2)}(z) = 2 J_{0}(z),
\]
imply by \eqref{P local para Bessel} that, for $\delta<x<x_c-\delta$,
\[
P_{11,-}(e^{iu_{n}}) - e^{-n\phi_{-}(e^{iu_{n}})}e^{n\pi i \Omega}P_{12,-}(e^{iu_{n}}) = e^{-\frac{n}{2}\phi_{-}(e^{iu_{n}})} e^{n\pi i \frac{\Omega}{2}}  F(e^{iu_{n}})  + \bigO(e^{-nx}),
\]
as $n\to\infty$,
where
\[
F(z) = E_{11}(z) J_{0}(2\sqrt{-n^{2}\zeta(z)}) + 2i \pi \sqrt{-n^{2}\zeta(z)} E_{12}(z) J_{0}^{\prime}(2\sqrt{-n^{2}\zeta(z)}).
\]

Using \eqref{def_of_phi} and \eqref{g properties on gamma}, we also have the following relation,
\begin{equation}\label{g phi relation in kernel Bessel}
g_{-}(e^{i\theta}) - \frac{1}{2} \phi_{-}(e^{i\theta}) + \pi i \frac{\Omega}{2} = -\frac{\ell}{2} + \frac{i(\theta + \pi)}{2}, \quad \mbox{ for } -L \leq \theta \leq L,
\end{equation}
which allows us to rewrite \eqref{pol Bessel 1} in a more explicit form in terms of the Bessel function of the first kind:
\begin{equation}\label{phi u_n}
\phi_{n}(e^{iu_{n}}) = e^{-\frac{n\ell}{2}} e^{\frac{ni}{2}(u_{n}+\pi)} \left( F(e^{iu_{n}}) +\bigO(n^{-1/2})\right),\qquad n\to\infty.
\end{equation}

Substituting \eqref{phi u_n} in \eqref{Bessel kernel 1}, we get
\begin{equation}\label{Bessel kernel in terms of F}
\frac{1}{(cn)^{2}} K_{n}(e^{iu_{n}},e^{iv_{n}}) = \frac{-1}{\pi P_{12}^{(\infty)}(0)(u-v)} \mbox{Im} \left( F(e^{iu_{n}})\overline{F(e^{iv_{n}})} \right) +\bigO(n^{-1/2}),
\end{equation}
as $n\to\infty$.
We now compute $F(e^{iu_{n}})\overline{F(e^{iv_{n}})}$: we have
\[
\begin{array}{r c l}
F(e^{iu_{n}})\overline{F(e^{iv_{n}})} & = & F_{1} + F_{2} + F_{3} + F_{4}, \\
\end{array}
\]
where
\begin{equation*}
\begin{array}{r c l}
F_{1} & = & E_{11}(e^{iu_{n}})\overline{E_{11}(e^{iv_{n}})} J_{0}(2 \sqrt{-n^{2} \zeta(e^{iu_{n}})})J_{0}(2 \sqrt{-n^{2} \zeta(e^{iv_{n}})}), \\[0.15cm]
F_{2} & = & 4\pi^{2}n^2 \sqrt{\zeta(e^{iu_{n}})\zeta(e^{iv_{n}})} E_{12}(e^{iu_{n}})\overline{E_{12}(e^{iv_{n}})}J_{0}^{\prime}(2 \sqrt{-n^{2} \zeta(e^{iu_{n}})})J_{0}^{\prime}(2 \sqrt{-n^{2} \zeta(e^{iv_{n}})}), \\[0.15cm]
F_{3} & = & -2i\pi \sqrt{-n^{2} \zeta(e^{iv_{n}})} E_{11}(e^{iu_{n}})\overline{E_{12}(e^{iv_{n}})}J_{0}(2 \sqrt{-n^{2} \zeta(e^{iu_{n}})})J_{0}^{\prime}(2 \sqrt{-n^{2} \zeta(e^{iv_{n}})}), \\[0.15cm]
F_{4} & = & 2i\pi \sqrt{-n^{2} \zeta(e^{iu_{n}})} E_{12}(e^{iu_{n}})\overline{E_{11}(e^{iv_{n}})}J_{0}(2 \sqrt{-n^{2} \zeta(e^{iv_{n}})})J_{0}^{\prime}(2 \sqrt{-n^{2} \zeta(e^{iu_{n}})}).
\end{array}
\end{equation*}
Here, we have used the fact that $J_{0}(x)$ is real for $x > 0$. It is straightforward to see that the imaginary parts of $F_{1}$ and $F_{2}$ tend to $0$ as $n\to + \infty$. More precisely, one has
\begin{equation} \label{F_1 and F_2}
\mbox{Im}(F_{j}) = \bigO \left( \frac{|u| + |v|}{n} \right),\qquad j=1,2,\qquad n\to\infty.
\end{equation}
The computation for $F_{3}$ and $F_{4}$ is slightly more involved. Using \eqref{E in Bessel} and Proposition \ref{prop Pinf relation}, we have
\begin{equation} \label{F_3}
\mbox{Im}(F_{3}) = -\frac{\pi}{2} J_{0}(\sqrt{u})\sqrt{v}J_{0}^{\prime}(\sqrt{v}) P_{12}^{(\infty)}(0) \left( 1 + \bigO(n^{-1}) \right)
\end{equation}
and
\begin{equation} \label{F_4}
\mbox{Im}(F_{4}) = \frac{\pi}{2} J_{0}(\sqrt{v})\sqrt{u}J_{0}^{\prime}(\sqrt{u}) P_{12}^{(\infty)}(0) \left( 1 + \bigO(n^{-1}) \right),
\end{equation}
as $n\to\infty$. Note that it follows from Proposition \ref{prop Pinf relation} that 
$P_{12}^{(\infty)}(0)$ is real.
Putting everything together in \eqref{Bessel kernel in terms of F}, one gets
\begin{equation}
\frac{1}{(cn)^{2}}K_{n}(e^{iu_{n}},e^{iv_{n}}) = \frac{J_{0}(\sqrt{u}) \sqrt{v} J_{0}^{\prime}(\sqrt{v}) - J_{0}(\sqrt{v}) \sqrt{u} J_{0}^{\prime}(\sqrt{u})}{2(u-v)}+\bigO(n^{-1}),
\end{equation}
as $n\to\infty$,
uniformly for $u$, $v$ in compact subsets of $(0,\infty)$.

\subsection{Airy kernel  limit}\label{section: Airy}

For convenience, we make use of the following notations:
\[
u_{n} = -\pi + T + \frac{u}{(cn)^{2/3}}, \qquad v_{n} = -\pi + T + \frac{v}{(cn)^{2/3}}.
\]
Even if \eqref{Airy kernel limit} is true for any $u$, $v$ real, we will only provide explicit computation for $u,v < 0$. Other cases can be dealt with in a similar manner (and are in fact easier). Using \eqref{norm of orthogonal pol} in \eqref{kernel conditional CUE CD}, we obtain
\begin{multline} \label{Kernel u_n for Airy}
\frac{1}{(cn)^{2/3}}K_{n}(e^{iu_{n}},e^{iv_{n}}) = \frac{ie^{-xn}(e^{in(u_{n}-v_{n})}\phi_{n}(e^{-iu_{n}})\phi_{n}(e^{iv_{n}})-\phi_{n}(e^{iu_{n}})\phi_{n}(e^{-iv_{n}}))}{2\pi e^{-n\ell}P_{12}^{(\infty)}(0) (u-v)} \\ \times\ (1+\bigO(n^{-2/3})),\qquad n\to\infty.
\end{multline}

If we choose $c = \sqrt{\frac{|\overline{z_{1}}-z_{1}|}{4|\overline{z_{1}}-z_{0}||\overline{z_{1}}-\overline{z_{0}}|}}$, we have $ n^{2/3} \zeta(e^{iu_{n}}) = u(1+\bigO(\frac{u}{(cn)^{2/3}}))$ as $n\to\infty$. With this choice of $c$, it is straightforward to see from \eqref{def P Airy} and \eqref{E in airy} that $P_{-}(e^{iu_{n}}) = \bigO(n^{1/6})$. Taking the limit as $z \to e^{iu_{n}}$ in \eqref{poly soft edge}, we obtain the large $n$ asymptotics
\begin{align} 
\phi_{n}(e^{iu_{n}}) & =  e^{ng_{-}(e^{iu_{n}})}\left( P_{11,-}(e^{iu_{n}}) - e^{-n\widetilde{\phi}_{-}(e^{iu_{n}})}e^{-n\pi i \Omega}P_{12,-}(e^{iu_{n}}) +\bigO(n^{-5/6})\right), \nonumber\\
& = e^{n(g_{-}(e^{iu_{n}})-\frac{1}{2}\widetilde{\phi}_{-}(e^{iu_{n}}) - \frac{1}{2}\pi i \Omega)}\left(K(e^{iu_{n}})+\bigO(n^{-5/6})\right), \nonumber\\
& = e^{n \frac{-\ell+x}{2}}e^{n i \frac{u_{n}-\pi}{2}} \left(K(e^{iu_{n}})+\bigO(n^{-5/6})\right),
\label{phi_n u_n for Airy}
\end{align}
where $K$ is defined by
\begin{equation*}
K(z) = E_{11}(z) \mathrm{Ai}(n^{2/3}\tilde{\zeta}(z)) + E_{12}(z) \mathrm{Ai}^{\prime}(n^{2/3}\tilde{\zeta}(z)).
\end{equation*}
Note that we have used the analogue of \eqref{g phi relation in kernel Bessel} for $e^{i\theta} \in \widetilde \gamma$ with $-\pi < \theta \leq -\pi + T$, namely
\begin{equation}\label{g phi relation in kernel Airy}
g_{-}(e^{i\theta}) - \frac{1}{2} \widetilde\phi_{-}(e^{i\theta}) - \pi i \frac{\Omega}{2} = -\frac{\ell}{2} + \frac{x}{2} + \frac{i(\theta - \pi)}{2}, \quad -\pi < \theta \leq -\pi + T,
\end{equation}
and the Airy function relation $y_{0}(z) + y_{1}(z) + y_{2}(z) = 0$ (see \cite{NIST}) which implies by \eqref{sol Airy} that, for $-\pi < \arg z < - \frac{2\pi}{3}$, 
\[
\begin{array}{r c l}
\Phi_{11}(z) - \Phi_{12}(z) & = & \mbox{Ai}(z), \\[0.15cm]
\Phi_{21}(z) - \Phi_{22}(z) & = & \mbox{Ai}^{\prime}(z).
\end{array}
\]

Inserting \eqref{phi_n u_n for Airy} in \eqref{Kernel u_n for Airy}, we have as $n\to\infty$,
\[
\frac{1}{(cn)^{2/3}}K_{n}(e^{iu_{n}},e^{iv_{n}}) = \frac{e^{i \frac{n}{2} (u_{n}-v_{n}) }}{\pi P_{12}^{(\infty)}(0)(u-v)} \mbox{Im} (K(e^{iu_{n}})\overline{K(e^{iv_{n}})})+\bigO(n^{-2/3}).
\]
Using the fact that $\mbox{Ai}(x) \in \mathbb{R}$ for $x$ real, we have
\begin{equation}
K(e^{iu_{n}})\overline{K(e^{iv_{n}})} = K_{1} + K_{2} + K_{3} + K_{4},
\end{equation}
with
\[
\begin{array}{r c l}
K_{1} & = & E_{11}(e^{iu_{n}})\overline{E_{11}(e^{iv_{n}})} \mbox{Ai}(n^{2/3}\widetilde\zeta(e^{iu_{n}}))\mbox{Ai}(n^{2/3}\widetilde\zeta(e^{iv_{n}})), \\[0.15cm]
K_{2} & = & E_{12}(e^{iu_{n}})\overline{E_{12}(e^{iv_{n}})} \mbox{Ai}^{\prime}(n^{2/3}\widetilde\zeta(e^{iu_{n}}))\mbox{Ai}^{\prime}(n^{2/3}\widetilde\zeta(e^{iv_{n}})), \\[0.15cm]
K_{3} & = & E_{11}(e^{iu_{n}})\overline{E_{12}(e^{iv_{n}})} \mbox{Ai}(n^{2/3}\widetilde\zeta(e^{iu_{n}}))\mbox{Ai}^{\prime}(n^{2/3}\widetilde\zeta(e^{iv_{n}})), \\[0.15cm]
K_{4} & = & E_{12}(e^{iu_{n}})\overline{E_{11}(e^{iv_{n}})} \mbox{Ai}^{\prime}(n^{2/3}\widetilde\zeta(e^{iu_{n}}))\mbox{Ai}(n^{2/3}\widetilde\zeta(e^{iv_{n}})).\\
\end{array}
\]
Analogously to \eqref{F_1 and F_2}, \eqref{F_3} and \eqref{F_4}, we have
\begin{equation}
\mbox{Im}(K_{j}) = \bigO\left(\frac{|u| + |v|}{n^{1/3}}\right) ,\qquad j=1,2,\qquad n\to\infty,
\end{equation}
and
\[
\mbox{Im}(K(e^{iu_{n}})\overline{K(e^{iv_{n}})}) = (\mbox{Ai}(u)\mbox{Ai}^{\prime}(v) - \mbox{Ai}^{\prime}(u)\mbox{Ai}(v)) \mbox{Im}(E_{11}(\overline{z_{1}})\overline{E_{12}(\overline{z_{1}})})+ \bigO((u-v)n^{-1/3}),
\]
as $n\to\infty$.
Using the explicit expression for $E$ given by \eqref{E in airy} and Proposition \ref{prop Pinf relation}, one gets
\begin{equation}
\mbox{Im}(E_{11}(\overline{z_{1}})\overline{E_{12}(\overline{z_{1}})}) = \pi P_{12}^{(\infty)}(0),
\end{equation}
and we obtain the Airy kernel limit \eqref{Airy kernel limit}.

\subsection{Sine kernel limit}\label{section: sine}

For the sine kernel, we use the notations
\[
u_{n} = w + \frac{u}{cn} \quad \mbox{ and } \quad v_{n} = w + \frac{v}{cn},
\]
and in this section we will complete the computation only for $w \in (-L,L)$; the case $w \in (\pi - T,\pi+T)$ is similar. In this case, the expression for the kernel takes  the asymptotic form as $n\to\infty$,
\begin{multline}\label{Kernel in sine case}
\frac{1}{cn} K_{n}(e^{iu_{n}},e^{iv_{n}}) = \frac{i(e^{in(u_{n}-v_{n})}\phi_{n}(e^{-iu_{n}})\phi_{n}(e^{iv_{n}})-\phi_{n}(e^{iu_{n}})\phi_{n}(e^{-iv_{n}}))}{2\pi e^{-n\ell}P_{12}^{(\infty)}(0)(u-v)}\\\times\ (1+\bigO(n^{-1})).
\end{multline}
Starting from \eqref{poly lens} and taking the limit as $z \to e^{i\theta} \in \gamma$, one has
\begin{equation}\label{phi for sine kernel}
\phi_{n}(e^{i\theta}) = e^{n g_{-}(e^{i\theta})} \left( \widetilde K(e^{i\theta}) +\bigO(n^{-1})  \right),
\end{equation}
where $\widetilde K(e^{i\theta}) = P_{11,-}^{(\infty)}(e^{i\theta}) - e^{-n\phi_{-}(e^{i\theta})}e^{n\pi i \Omega} P_{12,-}^{(\infty)}(e^{i\theta})$. Using \eqref{g properties on gamma} we immediately obtain for $e^{i\theta} \in \gamma$ that
\begin{equation}\label{rien}
e^{ng_{-}(e^{iu_{n}})} = e^{-\frac{n\ell}{2}}e^{\frac{ni}{2}(u_{n}+\pi)} \exp \left(-n \pi i \int_{u_{n}}^{\pi} d\mu(e^{i\alpha})\right).
\end{equation}
Substituting \eqref{rien} and \eqref{phi for sine kernel} in \eqref{Kernel in sine case}, we have
\begin{equation}\label{sine kernel0}
\frac{1}{cn}K_{n}(e^{iu_{n}},e^{iv_{n}}) = \frac{e^{\frac{ni}{2}(u_{n}-v_{n})}}{\pi P_{12}^{(\infty)}(0)(u-v)} \mbox{Im} \left( e^{\alpha_{n}} \widetilde K(e^{iu_{n}}) \overline{\widetilde K(e^{iv_{n}})} \right)  + \bigO(n^{-1}),
\end{equation}
as $n\to\infty$,
where $\displaystyle \alpha_{n} = \frac{n}{2}(\phi_{-}(e^{iu_{n}})-\phi_{-}(e^{iv_{n}})) = n\pi i \int_{v_{n}}^{u_{n}} d\mu(e^{i\theta})$. An explicit computation shows that
\begin{multline*}
\widetilde K(e^{iu_{n}}) \overline{\widetilde K(e^{iv_{n}})} = e^{-\alpha_{n}} \left( e^{\alpha_{n}} P_{11,-}^{(\infty)}(e^{iu_{n}}) \overline{P_{11,-}^{(\infty)}(e^{iv_{n}})} +e^{-\alpha_{n}}P_{12,-}^{(\infty)}(e^{iu_{n}}) \overline{P_{12,-}^{(\infty)}(e^{iv_{n}})} \right. \\
\left.  - e^{\beta_{n}}e^{-n\pi i \Omega} P_{11,-}^{(\infty)}(e^{iu_{n}}) \overline{P_{12,-}^{(\infty)}(e^{iv_{n}})} - e^{-\beta_{n}}e^{n\pi i \Omega } P_{12,-}^{(\infty)}(e^{iu_{n}}) \overline{P_{11,-}^{(\infty)}(e^{iv_{n}})} \right),
\end{multline*}
where $\beta_{n} = \frac{n}{2} \left( \phi_{-}(e^{iu_{n}})+\phi_{-}(e^{iv_{n}}) \right)$. If we choose $c = \psi_{x,L}(e^{iw})$, then $e^{\alpha_{n}} = e^{i\pi(u-v)}(1+\bigO(n^{-1}))$ as $n\to\infty$ and we obtain finally
\begin{align*}\mbox{Im} \left( e^{\alpha_{n}} \widetilde K(e^{iu_{n}}) \overline{\widetilde K(e^{iv_{n}})} \right) & =  \mbox{Im} \left( e^{\pi i (u-v)}|P_{11,-}^{(\infty)}(e^{iw})|^{2} + e^{-\pi i (u-v)}|P_{12,-}^{(\infty)}(e^{iw})|^{2} \right)\\&\hspace{5cm} +\bigO((u-v)n^{-1}), \\
& = \sin(\pi(u-v))P_{12}^{(\infty)}(0)+\bigO((u-v)n^{-1}),  
\end{align*}
as $n\to\infty$,
where we have again used Proposition \ref{prop Pinf relation}. Using this in \eqref{sine kernel0}, we obtain 
\begin{equation}
\lim_{n\to\infty} \frac{1}{cn} K_{n} \left( e^{i\left( w+\frac{u}{cn} \right)},e^{i\left( w+\frac{v}{cn} \right)} \right) = e^{\frac{i(u-v)}{2c}} \frac{\sin \pi(u-v)}{\pi(u-v)},
\end{equation}
which corresponds to \eqref{sine kernel limit} with $c=\psi_{x,L}(e^{iw})$.

\subsection{Cases I-III}\label{section: proof zeros other}

\subsubsection*{Zeros of $\phi_n$}
In Case III, it was shown in \cite{ChCl} that for any $\epsilon > 0$ fixed, if $|z| \geq 1+\epsilon$ or $|z| \leq 1-\epsilon$ we have
\[
\phi_{n}(z) = e^{n \int_{\gamma} \log(z-e^{i\theta})d\mu_{\infty,L}(e^{i\theta})} \frac{1}{2} \left( \left( \frac{z-\overline{z_{0}}}{z-z_{0}} \right)^{1/4}+\left( \frac{z-\overline{z_{0}}}{z-z_{0}} \right)^{-1/4} \right)\left( 1 + \bigO(n^{-1}) \right),
\]
as $n\to\infty$,
where the branch cuts are chosen on $\gamma$. In  a similar way as in Section \ref{section: proof zeros IV}, one has
\begin{equation}
\lim_{n\to \infty} \frac{1}{n} \mathrm{Re}(\log \phi_{n}(z))  = \int_{\supp (\mu_{\infty,L})} \log |z-e^{i\theta}| d\mu_{\infty,L}(e^{i\theta}), \qquad |z|\neq 1,
\end{equation}
and one shows exactly as in Section \ref{section: proof zeros IV} that
$\nu_{n,s,L}$ converges weakly to $\mu_{\infty,L}$.
In Case I, one can use \cite[formulas (4.1), (4.7), (4.56)]{DIK} to prove that
\begin{equation}
\lim_{n\to \infty} \frac{1}{n} \mbox{Re}(\log \phi_{n}(z))  =\begin{cases}\log |z|,&|z|>1\\
0,&|z|<1\end{cases}= \frac{1}{2\pi}\int_{0}^{2\pi} \log |z-e^{i\theta}| d\theta.
\end{equation}
The same holds in Case II by results from \cite{ClKr}.
This implies, by similar arguments as in Section \ref{section: proof zeros IV}, that $\nu_{n,s,L}$ converges weakly to the uniform measure on the circle.

Similar to Case IV, we have the following stronger result in Case I and in Case II, illustrated in Figure \ref{fig: zeros OP}.
\begin{figure}[t] \includegraphics[width=1\textwidth]{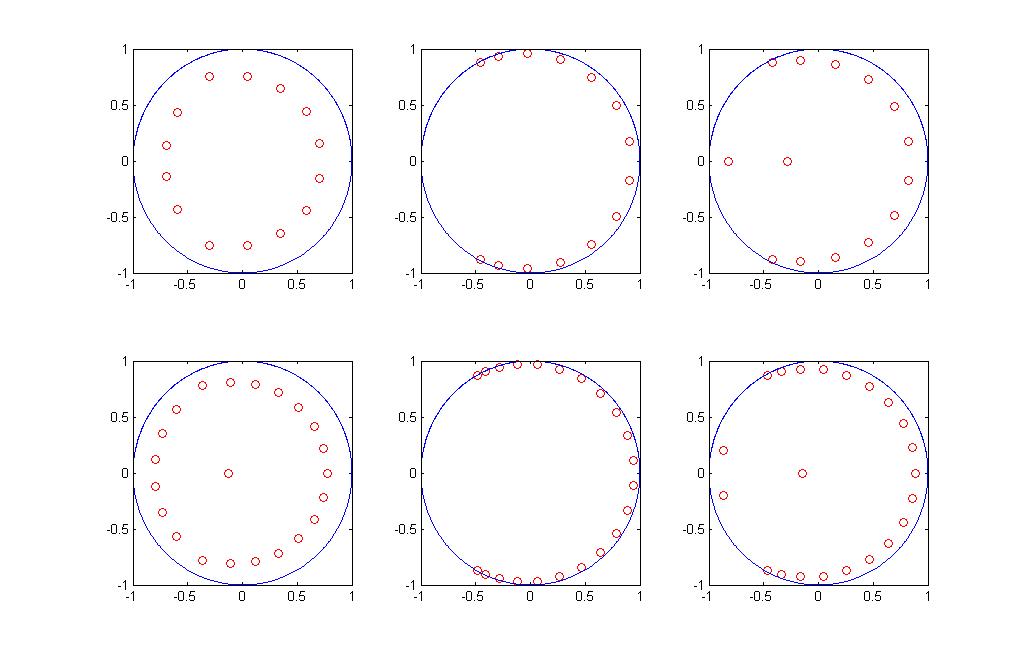} 
\caption{\label{fig: zeros OP} Zeros of $\phi_{14}(z)$ on the first row and of $\phi_{22}(z)$ on the second one, with $L = \frac{2\pi}{3}$. The first column corresponds to Case I with $s =1/2$, the second column to Case III with $x = 2x_{c}$ and the third column to Case IV with $x = x_{c}/2$.}
\end{figure}  
\begin{proposition} \label{prop: zeros phin2}
\begin{enumerate}
\item Fix $0<s<1$ and $0<L<\pi$. For any $\epsilon>0$, there exists $n_{0} \in \mathbb{N}$ such that for $n\geq n_{0}$, $\phi_{n}(z)$ has at most one zero in $\{z\in\mathbb C: |z| \leq 1 -\epsilon\}$. \item Fix $0<L<\pi$ and let $s=e^{-xn}$, $x \geq x_{c}$. For any $\epsilon>0$, there exists $n_{0} \in \mathbb{N}$ such that for $n\geq n_{0}$, $\phi_{n}(z)$ has no zeros in $\{z\in\mathbb C: |z| \leq 1 -\epsilon\}$.
\end{enumerate}
\end{proposition}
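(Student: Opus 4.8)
I would follow the same scheme as in the proof of Proposition~\ref{prop: zeros phin}. In each of the two cases, the relevant large-$n$ asymptotics give, uniformly for $z$ in compact subsets of the open unit disk (in particular uniformly on $\{|z|\le 1-\epsilon\}$),
\[
\phi_n(z)=G_n(z)\bigl(\Phi(z)+o(1)\bigr),\qquad n\to\infty ,
\]
where $G_n$ is an explicit, nowhere-vanishing analytic function (always an $n$-th power of an exponential) and $\Phi$ is an explicit analytic function on the open unit disk, independent of $n$. Since $G_n$ has no zeros, the zeros of $\phi_n$ in $\{|z|\le 1-\epsilon\}$ are exactly the zeros of $\Phi(z)+o(1)$ there, so the statement reduces to: (i) bounding the number of zeros of $\Phi$ in the open unit disk, and (ii) a standard Hurwitz/Rouch\'e perturbation argument.

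For (ii), suppose we know $\Phi$ has at most $m$ zeros in the open unit disk. The zero set of $\Phi$ in $\{|z|\le 1-\epsilon/2\}$ is finite, so there is $\epsilon'\in(\epsilon/2,\epsilon)$ with $\Phi$ nonvanishing on the circle $C_{\epsilon'}=\{|z|=1-\epsilon'\}$; hence $|\Phi|\ge c>0$ on $C_{\epsilon'}$. By uniformity of the asymptotics on $C_{\epsilon'}$, for $n$ large the $o(1)$-term has sup-norm $<c$ on $C_{\epsilon'}$, so Rouch\'e's theorem gives that $\Phi(z)+o(1)$, and therefore $\phi_n$, has in $\{|z|<1-\epsilon'\}$ exactly the same number of zeros as $\Phi$, namely at most $m$. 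Since $\{|z|\le 1-\epsilon\}\subset\{|z|<1-\epsilon'\}$, we conclude that $\phi_n$ has at most $m$ zeros in $\{|z|\le 1-\epsilon\}$ for all large $n$.

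It remains to carry out (i). For Part~2 ($s=e^{-xn}$, $x\ge x_c$, $0<L<\pi$ fixed), the asymptotics recalled in Section~\ref{section: proof zeros other} (from \cite{ChCl}) give $G_n(z)=\exp\bigl(n\int_{\gamma}\log(z-e^{i\theta})\,d\mu_{\infty,L}(e^{i\theta})\bigr)$ and $\Phi(z)=\tfrac12\bigl(\beta(z)+\beta(z)^{-1}\bigr)$ with $\beta(z)=\bigl((z-\overline{z_0})/(z-z_0)\bigr)^{1/4}$, branch cut on $\gamma$ and $\beta(\infty)=1$. A zero of $\Phi$ would force $\beta(z)^2=-1$, hence $1=\beta(z)^4=(z-\overline{z_0})/(z-z_0)$, i.e. $z_0=\overline{z_0}$, which is impossible for $0<L<\pi$. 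Thus $\Phi$ is zero-free on the open unit disk and $m=0$; by (ii) this yields Part~2. For Part~1 ($0<s<1$ and $0<L<\pi$ fixed), the asymptotics of \cite[formulas (4.1), (4.7), (4.56)]{DIK} give $G_n(z)=e^{ng(z)}$ with $g(z)=\frac{1}{2\pi}\int_0^{2\pi}\log(z-e^{i\theta})\,d\theta$, and an explicit prefactor $\Phi$ assembled from the interior Szeg\H{o} function of the symbol $f$ and from the Fisher--Hartwig factors attached to the two jumps of $f$ at $e^{\pm iL}$, with exponents $\pm\beta$, $\beta=\tfrac{\log s}{2\pi i}$. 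The interior Szeg\H{o} function is analytic and zero-free on the open disk; the Fisher--Hartwig factors are powers $(1-z/e^{\pm iL})^{\pm\beta}$ with purely imaginary exponent, hence also zero-free; so $\Phi$ equals a zero-free function times at most one further elementary factor that can vanish in the disk, i.e. $\Phi$ has at most one zero there, $m\le 1$, and (ii) yields Part~1.

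The only non-routine step is the bookkeeping in (i) for Part~1: one has to extract the prefactor $\Phi$ from the (fairly heavy) asymptotic formulas of \cite{DIK} and check, with the correct branch choices for the Szeg\H{o} function and the Fisher--Hartwig factors, that beyond the manifestly zero-free pieces at most one elementary factor remains which can vanish inside $\{|z|\le 1-\epsilon\}$; everything else is routine complex analysis. The same argument applies to Case~II ($s$ fixed, $L\to\pi$): the prefactor then comes from the global parametrix constructed in \cite{ClKr}, which away from $z=-1$ --- and in particular on $\{|z|\le 1-\epsilon\}$, since $|-1|=1>1-\epsilon$ --- is again a zero-free Szeg\H{o}-type function times at most one elementary factor, so that the analogue of Part~1 holds there as well.
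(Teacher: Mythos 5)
Your overall strategy coincides with the paper's: substitute the known uniform asymptotics for $\phi_n$ on $\{|z|\le 1-\epsilon\}$ and count the zeros of the leading term. Part~2 is correct and complete: the paper quotes the same formula from \cite{ChCl}, and your verification that $\beta(z)+\beta(z)^{-1}=0$ would force $\beta(z)^4=1$, hence $z_0=\overline{z_0}$, is exactly the computation the paper leaves to the reader.

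Part~1, however, has a genuine gap, and your structural ansatz is wrong in a way that matters. You posit $\phi_n(z)=G_n(z)\bigl(\Phi(z)+o(1)\bigr)$ with $\Phi$ \emph{independent of $n$}, and you defer to ``bookkeeping'' the identification of $\Phi$, asserting only that it should be a zero-free Szeg\H{o}-type function times ``at most one further elementary factor.'' But the actual DIK asymptotics used in the paper give, for $|z|\le 1-\epsilon$,
\[
\phi_{n}(z) = - \exp \left( -\frac{\log s}{2\pi i} \int_{L}^{2\pi-L} \frac{dw}{w-z} \right) \left( \frac{c}{z-z_{0}} + \frac{\overline{c}}{z-\overline{z_{0}}} \right) \frac{1}{n} \left( 1 + \bigO(n^{-1}) \right),
\qquad c = z_{0}^{n+1} \beta_{0} \tfrac{\Gamma(\beta_{0})}{\Gamma(-\beta_{0})} |z_{0}-\overline{z_{0}}|^{-2\beta_{0}}n^{-2\beta_{0}},
\]
with $\beta_0=-\tfrac{\log s}{2\pi i}$ purely imaginary. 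Thus $\phi_n$ is of size $\bigO(1/n)$ in the disk, and the nontrivial factor $\tfrac{c}{z-z_0}+\tfrac{\overline c}{z-\overline{z_0}}$ depends on $n$: $|c|$ is fixed but $\arg c$ rotates with $n$, so the unique zero $\tfrac{c\overline{z_0}+\overline c z_0}{c+\overline c}$ of the degree-one numerator moves as $n$ varies. This breaks your step (ii): you cannot choose a single circle $C_{\epsilon'}$ on which an $n$-independent lower bound $|\Phi|\ge c>0$ holds, since the moving zero may approach any prescribed circle along a subsequence. The conclusion ``at most one zero'' is instead read off directly from the multiplicative structure of the asymptotics --- a nonvanishing exponential, times a ratio whose numerator is a not-identically-zero polynomial of degree at most one, times $(1+\bigO(n^{-1}))$ --- with no Rouch\'e argument needed. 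In short: the key content of Part~1 is precisely the identification of this $n$-dependent leading factor, which your plan does not carry out and whose shape it mispredicts.
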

\begin{proof}
\begin{enumerate}
\item In the first case, we have for $|z|\leq 1-\epsilon$ by \cite[formulas (4.1), (4.3), (4.7), (4.8), (4.54), (4.56), (4.71)]{DIK} that
\[
\phi_{n}(z) = - \exp \left( -\frac{\log s}{2\pi i} \int_{L}^{2\pi-L} \frac{1}{w-z}dw \right) \left( \frac{c}{z-z_{0}} + \frac{\overline{c}}{z-\overline{z_{0}}} \right) \frac{1}{n} \left( 1 + \bigO(n^{-1}) \right),
\]
as $n\to\infty$,
where 
\[ 
c = z_{0}^{n+1} \beta_{0} \frac{\Gamma(\beta_{0})}{\Gamma(-\beta_{0})} |z_{0}-\overline{z_{0}}|^{-2\beta_{0}}n^{-2\beta_{0}}, \qquad \beta_{0} = \frac{-1}{2\pi i }\log s.
\]
Since this expression has just one root for sufficiently large $n$, the conclusion follows.
\item Here, we can use \cite[formulas (4.5), (4.13), (4.19), (4.36), (4.37)]{ChCl} as $n\to\infty$ for $|z|\leq 1-\epsilon$:
\[
\phi_{n}(z) = e^{n \int \log(z-e^{i\theta})d\mu_{\infty,L}(e^{i\theta})} \frac{1}{2} \left( \left( \frac{z-\overline{z_{0}}}{z-z_{0}} \right)^{1/4}+\left( \frac{z-\overline{z_{0}}}{z-z_{0}} \right)^{-1/4} \right)\left( 1 + \bigO(n^{-1}) \right),
\]
where the branch cut lies on $\gamma$. We easily verify that this expression vanishes nowhere for sufficiently large $n$.
\end{enumerate}
\end{proof}

\subsubsection*{Limiting mean eigenvalue density and correlation kernel}
Asymptotics for the mean eigenvalue density and for the correlation kernel can be obtained in Cases I-III using similar straightforward but lengthy calculations as in Section \ref{section: proof Bessel}, Section \ref{section: Airy}, and Section \ref{section: sine}, based on the asymptotic results in the papers \cite{DIK}, \cite{ClKr}, and \cite{ChCl}. In this way, one can prove for example formulas \eqref{psi s fixed}, \eqref{sine kernel limit}, \eqref{eq measure}, and \eqref{lim density}. For the sake of brevity, we do not present details here.

\appendix
\section{Proof of identity \eqref{identity intOmega}}
\label{appendix: proofOmega}

Recall the definition of the equilibrium measure $\mu_{x,L}$ in \eqref{def muxL}, and the related quantities $\Omega_{x,L}=\int_{\pi-T}^{\pi+T}d\mu(e^{i\theta})$ given in \eqref{def Omega} and $\ell_{x,L}$ given in \eqref{def ell}. We will use the identity $\ell_{\infty,L} = -2\log\sin \frac{L}{2}$ (see \cite{ChCl} for a proof of it).

\begin{proposition}\label{proposition for Omega}
Let $0<L<\pi$ and $0<x<x_c$. There holds a relation between $\Omega_{x,L}$, the Euler-Lagrange constant $\ell_{x,L}$, and $x$:
\begin{equation} \label{Equation for Omega}
\Omega_{x,L} = \frac{\partial \ell_{x,L} }{\partial x} + x \frac{\partial \Omega_{x,L}}{\partial x} = - \frac{\partial}{\partial x} \left[ -\frac{\ell_{\infty,L}}{2} \left( \frac{x}{x_{c}} \right)^{2} - \int_{\frac{x}{x_c}}^{1} \zeta \ell_{\frac{x}{\zeta},L}d\zeta \right].
\end{equation}
\end{proposition}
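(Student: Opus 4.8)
The plan is to establish Proposition~\ref{proposition for Omega} in two stages. First I would prove the differential identity $\Omega_{x,L} = \partial_x \ell_{x,L} + x\,\partial_x \Omega_{x,L}$, which is really a statement about how the equilibrium energy varies with the external field strength $x$. The natural tool is to differentiate the Euler--Lagrange conditions \eqref{EulerLagrange} with respect to $x$, together with the normalization $\int d\mu_{x,L}=1$. Writing $U^{\mu}(z) = \int \log|z-e^{i\theta}|\,d\mu_{x,L}(e^{i\theta})$, the variational equality reads $2U^{\mu}(z) - V(z) + \ell_{x,L} = 0$ on $\supp\mu_{x,L}$, and since $V$ has the simple form \eqref{V} (equal to $0$ on $\gamma$ and $x$ on $\gamma^{c}$), an $x$-derivative gives $2\,\partial_x U^{\mu}(z) - \chi_{\gamma^{c}}(z) + \partial_x\ell_{x,L} = 0$ on the support. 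Pairing this relation against $\mu_{x,L}$ itself (integrating $d\mu_{x,L}(z)$ over the support) and using $\int \partial_x U^{\mu}\,d\mu_{x,L} = \int U^{\mu}\,d(\partial_x\mu_{x,L})$ by symmetry of the logarithmic kernel, together with the fact that $\partial_x\mu_{x,L}$ has total mass $0$ so the constant $\ell_{x,L}$ integrates away against it, one extracts $\int \chi_{\gamma^{c}}\,d\mu_{x,L} = \Omega_{x,L}$ on one side and a term involving $\partial_x\ell_{x,L}$ and $x\,\partial_x\Omega_{x,L}$ on the other. Care is needed with the boundary terms coming from the moving endpoints $e^{\pm iL}$ (fixed) and $e^{i(\pi\pm T)}$ ($T$ depends on $x$): these should vanish because $d\mu_{x,L}$ vanishes like a square root at $e^{i(\pi\pm T)}$, exactly as in the soft-edge behaviour exploited for the Airy parametrix.

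The second stage is to integrate the differential identity. Observing that $\partial_x\ell_{x,L} + x\,\partial_x\Omega_{x,L} = \partial_x(\ell_{x,L} + x\,\Omega_{x,L}) - \Omega_{x,L}$ would only circle back, so instead I would use the first-stage identity to note that $\Omega_{x,L} = \partial_x\bigl(x\,\partial_x(\text{something})\bigr)$; more precisely, the scaling structure is the key. The external field is $V = x\,\chi_{\gamma^{c}}$, so the equilibrium problem at field strength $x$ and the one at strength $x'$ are related by rescaling, which should yield a homogeneity relation of the form $\ell_{x,L}$ expressed through $\ell_{x/\zeta,L}$; this is precisely what the right-hand side of \eqref{Equation for Omega} encodes, with the split at $x_c$ reflecting that for field strengths $\geq x_c$ the equilibrium measure freezes to $\mu_{\infty,L}$ (as recorded in the remark after Theorem~\ref{theorem: Toeplitz}) and $\ell_{x,L}=\ell_{\infty,L}$ there. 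Concretely I would verify \eqref{Equation for Omega} by differentiating the bracketed expression $-\frac{\ell_{\infty,L}}{2}(x/x_c)^2 - \int_{x/x_c}^{1}\zeta\,\ell_{x/\zeta,L}\,d\zeta$ directly: the first term contributes $-\frac{\ell_{\infty,L}}{x_c}\cdot\frac{x}{x_c} = \frac{x}{x_c^2}\ell_{\infty,L}$, the Leibniz rule on the integral produces a boundary term at $\zeta = x/x_c$ (which, since $\ell_{x_c/1 \cdot \dots}$, i.e.\ $\ell_{\infty,L}$, appears there, should cancel against the first term), and the remaining integrand derivative $\int_{x/x_c}^1 \zeta\,\partial_x\ell_{x/\zeta,L}\,d\zeta = \int_{x/x_c}^1 \partial_2\ell_{\cdot,L}(x/\zeta)\,d\zeta$ should, after the substitution $\eta = x/\zeta$, reassemble into $-\ell_{x,L}$-type and $-x\partial_x$-type terms matching $\partial_x\ell_{x,L} + x\partial_x\Omega_{x,L}$ via the first-stage identity. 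Finally, evaluating \eqref{identity intOmega} follows by integrating $\Omega_{\xi,L}$ in $\xi$ from $0$ to $x_c$: by the proposition, $\int_0^{x_c}\Omega_{\xi,L}\,d\xi = -\bigl[\text{bracket}\bigr]_0^{x_c}$, and plugging $x=x_c$ (bracket $= -\ell_{\infty,L}/2$) and $x=0$ (bracket $= -\int_0^1\zeta\,\ell_{0/\zeta,L}\,d\zeta$, which should vanish since $\ell_{0,L}=0$ as the field is trivial and the equilibrium measure is the uniform one with logarithmic energy $0$) gives $\int_0^{x_c}\Omega_{\xi,L}\,d\xi = -\ell_{\infty,L}/2 = \log\sin\frac{L}{2}$ reversed in sign; wait---since $\ell_{\infty,L} = -2\log\sin\frac{L}{2}$, the bracket at $x_c$ equals $\log\sin\frac{L}{2}$, hence $\int_0^{x_c}\Omega_{\xi,L}d\xi = -\log\sin\frac{L}{2}$, which is exactly \eqref{identity intOmega}.

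The main obstacle I anticipate is the bookkeeping in the first stage: justifying the interchange $\int \partial_x U^{\mu}\,d\mu = \int U^{\mu}\,\partial_x d\mu$ and controlling the endpoint contributions when $T$ varies with $x$. One must show the variational derivative $\partial_x\ell_{x,L}$ and $\partial_x\Omega_{x,L}$ exist and are continuous on $(0,x_c)$ — this follows from the explicit formulas \eqref{def Omega}, \eqref{x theta_1 equation}, \eqref{def ell} and the implicit function theorem applied to \eqref{x theta_1 equation}, using that its solution $T$ is unique (the remark after Theorem~\ref{theorem: Toeplitz}, citing \cite[Proposition 3.1]{ChCl}) and depends smoothly on $x$ for $0<x<x_c$. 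An alternative, possibly cleaner route avoiding the variational manipulation altogether would be to prove \eqref{Equation for Omega} by a direct computation starting from the closed-form expressions \eqref{def Omega} and \eqref{def ell}, differentiating under the integral sign and using $\frac{1}{\pi}\int_{[-L,L]\cup[\pi-T,\pi+T]}\log\bigl|\tfrac{1+e^{i\theta}}{1-e^{i\theta}}\bigr|\,d\mu_{x,L} = x$ from \eqref{x theta_1 equation} to simplify; I would pursue whichever of these turns out to have the lighter algebra, but in either case the scaling/homogeneity identity for $\ell_{x,L}$ encoded in the right-hand side of \eqref{Equation for Omega} is the conceptual heart, and verifying it is where the real work lies.
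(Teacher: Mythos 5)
Your proposal is correct in outline, but your route to the first equality $\Omega_{x,L}=\partial_x\ell_{x,L}+x\,\partial_x\Omega_{x,L}$ is genuinely different from the paper's. The paper never touches the variational conditions directly: it introduces the meromorphic differential $\widehat{\omega}=\sqrt{\tfrac{(z-z_1)(z-\overline{z_1})}{(z-z_0)(z-\overline{z_0})}}\tfrac{dz}{iz}$ on the elliptic curve $X$, applies the Riemann bilinear identity to the pair $(\omega,\widehat{\omega})$, computes $\int_X\omega\wedge\widehat{\omega}$ by Stokes and residues to get $\Omega_{x,L}=2\int_{1_-^{(1)}}^{\infty^{(1)}}\omega+\tfrac{i\tau}{2\pi}x$, and then identifies $\partial_x\Omega_{x,L}=\tfrac{i\tau}{2\pi}$ and $\partial_x\ell_{x,L}=2\int_{1_-^{(1)}}^{\infty^{(1)}}\omega$ by direct differentiation through $T$. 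Your argument instead differentiates the Euler--Lagrange equality in $x$, pairs against $\mu_{x,L}$, and uses symmetry of the logarithmic kernel plus $\int d(\partial_x\mu_{x,L})=0$ to evaluate $\int U^{\mu}\,d(\partial_x\mu_{x,L})=\tfrac{x}{2}\partial_x\Omega_{x,L}$; I checked that the signs work out and this does yield the first equality. What the paper's approach buys is that all the analytic care is front-loaded into explicit period integrals (the elliptic-curve machinery is already in place for the global parametrix, and the formula for $\tau$ is reused in Appendix B); what yours buys is a more conceptual, potential-theoretic proof that does not need the Riemann surface at all, at the price of justifying differentiability of $\mu_{x,L}$ in $x$ and the vanishing of the moving-endpoint contributions --- which you correctly identify as the delicate points and which do go through because $T(x)$ is smooth by the implicit function theorem and the density vanishes like a square root at $e^{i(\pi\pm T)}$. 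Your treatment of the second equality (Leibniz on the bracket, cancellation of the boundary term at $\zeta=x/x_c$ against the derivative of the $(x/x_c)^2$ term using $\ell_{x_c,L}=\ell_{\infty,L}$, then the substitution $u=x/\zeta$ and an integration by parts using $\Omega_{x_c,L}=0$) is essentially identical to the paper's, as is the deduction of \eqref{identity intOmega} from $\ell_{0,L}=0$. The only blemishes are a transient sign slip in the derivative of the first bracket term and in the final evaluation, both of which you self-correct.
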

\begin{proof}
We consider the following meromorphic differential of the third kind on the Riemann surface $X$,
\begin{equation}
\widehat{\omega} = \sqrt{\frac{(z-z_{1})(z-\overline{z_{1}})}{(z-z_{0})(z-\overline{z_{0}})}} \frac{dz}{iz}.
\end{equation}
By the Riemann bilinear identity we have
\begin{equation} \label{Riemann_bilinear}
\int_{X} \omega \wedge \widehat{\omega} = A_{1}\widehat{B}_{1} - \widehat{A}_{1}B_{1},
\end{equation}
where $\omega$ is given by \eqref{def omega}, and
\begin{align*}
&A_{1} = \int_{A} \omega = 1, &&B_{1} =  \int_{B} \omega = \tau,\\
&\widehat{A}_{1} = \int_{A} \widehat{\omega} = -2ix,&& \widehat{B}_{1} = \int_{B} \widehat{\omega} = -4\pi \Omega_{x,L}.
\end{align*}

As $z \to 0$ on the first (respectively, second) sheet of $X$, we have
\begin{equation}
\widehat{\omega} = \pm \frac{i}{z} (1+\bigO(z)) dz, \qquad \omega = \mp c_{0}(1+\bigO(z))dz.
\end{equation}
In terms of the local coordinate $\xi = z^{-1}$, as $\xi \to 0$ on the first (respectively, second) sheet of $X$, we have
\begin{equation}
\widehat{\omega} = \pm \frac{i}{\xi} (1+\bigO(\xi)) d\xi, \qquad \omega = \mp c_{0}(1+\bigO(\xi))d\xi.
\end{equation}
The left hand side of \eqref{Riemann_bilinear} can now be computed using Stokes' theorem and the residue theorem. 
After a long calculation, we get $\int_{X} \omega \wedge \widehat{\omega} = -8\pi \int_{1^{(1)}_{-}}^{\infty^{(1)}} \omega$, where the notation $1_{-}^{(1)}$ means that we start the integration path from $1$ on the first sheet at the side $|z| > 1$. We obtain
\begin{equation}
\Omega_{x,L} = 2 \int_{1^{(1)}_{-}}^{\infty^{(1)}} \omega + \frac{i\tau}{2\pi}x.
\end{equation}
Now a direct computation shows that
\begin{equation}
\frac{\partial \Omega_{x,L}}{\partial x} = \frac{\partial \Omega_{x,L}}{\partial T} \frac{\partial T}{\partial x} = \frac{i\tau}{2\pi} \quad \mbox{ and } \quad \frac{\partial \ell_{x,L}}{\partial x} = \frac{\partial \ell_{x,L}}{\partial T} \frac{\partial T}{\partial x} = 2 \int_{1^{(1)}_{-}}^{\infty^{(1)}} \omega.
\end{equation}
This proves the first equality in \eqref{Equation for Omega}. Now, note that
\[
- \frac{\partial}{\partial x} \left[ -\frac{\ell_{\infty,L}}{2} \left( \frac{x}{x_{c}} \right)^{2} - \int_{\frac{x}{x_c}}^{1} \zeta \ell_{\frac{x}{\zeta},L}d\zeta \right] = \int_{\frac{x}{x_{c}}}^{1} \left.\frac{\partial \ell_{u,L}}{\partial u}\right|_{u = \frac{x}{\zeta}} d\zeta = x \int_{x}^{x_{c}} \frac{\Omega_{u,L} - u \frac{\partial \Omega_{u,L}}{\partial u} }{u^{2}}du,
\]
where we proceeded by substitution and used the first equality in \eqref{Equation for Omega}. Integrating by parts, we obtain
\[
\int_{x}^{x_{c}} \frac{\Omega_{u,L}}{u^{2}}du = \frac{\Omega_{x,L}}{x} + \int_{x}^{x_{c}} \frac{ 1}{u}\frac{\partial \Omega_{u,L}}{\partial u}du.
\]
This proves the second equality.
\end{proof}

\begin{corollary}
\begin{equation}
- \int_{0}^{x_{c}} \Omega_{x,L} dx = \frac{-\ell_{\infty,L}}{2} = \log \sin \frac{L}{2}
\end{equation}
\end{corollary}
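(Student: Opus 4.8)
The plan is to integrate the second expression for $\Omega_{x,L}$ in Proposition~\ref{proposition for Omega}, which presents it as minus an exact $x$-derivative. Introduce
\[
G(x) = -\frac{\ell_{\infty,L}}{2}\left(\frac{x}{x_{c}}\right)^{2} - \int_{x/x_{c}}^{1}\zeta\,\ell_{x/\zeta,L}\,d\zeta ,
\]
so that Proposition~\ref{proposition for Omega} reads $\Omega_{x,L} = -G'(x)$ for $0<x<x_{c}$. Since $\Omega_{x,L}=\mu_{x,L}(\widetilde\gamma)\in[0,1]$ is bounded, and since $G$ extends continuously to the closed interval $[0,x_{c}]$, the fundamental theorem of calculus gives
\[
\int_{0}^{x_{c}}\Omega_{x,L}\,dx = -\bigl(G(x_{c})-G(0)\bigr).
\]
It therefore suffices to compute the two boundary values $G(0)$ and $G(x_{c})$.

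For $x=x_{c}$ the ratio $x/x_{c}$ equals $1$, so the integral in $G$ is over the empty interval and $G(x_{c})=-\ell_{\infty,L}/2$. For $x=0$ the first term of $G$ vanishes, and one checks that the integral term tends to $0$ as $x\to0^{+}$, so $G(0)=0$. The key input is that $\ell_{0,L}=0$: at $x=0$ the external field $V$ in \eqref{V} is identically zero, hence $\mu_{0,L}$ is the uniform measure on $S^{1}$; equivalently $T=\pi-L$, so $\cos T=-\cos L$ and the integrand in \eqref{def ell} vanishes identically, giving $\ell_{0,L}=0$. Combining this with the continuity (hence boundedness) of $x\mapsto\ell_{x,L}$ on $[0,x_{c}]$ and splitting the range of integration at, say, $\zeta=\sqrt{x}$, one sees that $\int_{x/x_{c}}^{1}\zeta\,\ell_{x/\zeta,L}\,d\zeta=o(1)$ as $x\to 0$; this also shows $G$ is continuous at $0$, as used above.

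Putting the pieces together, $\int_{0}^{x_{c}}\Omega_{x,L}\,dx = -\bigl(-\ell_{\infty,L}/2-0\bigr)=\ell_{\infty,L}/2$, whence $-\int_{0}^{x_{c}}\Omega_{x,L}\,dx=-\ell_{\infty,L}/2$. Finally invoking the identity $\ell_{\infty,L}=-2\log\sin\frac{L}{2}$ recalled at the beginning of this appendix yields $-\int_{0}^{x_{c}}\Omega_{x,L}\,dx=\log\sin\frac{L}{2}$, which is the assertion of the corollary. The only step that is not a direct quotation of Proposition~\ref{proposition for Omega} is verifying that $G$ is continuous up to $x=0$, i.e.\ controlling the improper integral $\int_{x/x_{c}}^{1}\zeta\,\ell_{x/\zeta,L}\,d\zeta$ near the degeneration $T\to\pi-L$ of the underlying Riemann surface; this is the main (mild) obstacle, and it is handled by the vanishing of $\ell_{0,L}$ together with the boundedness of $\ell_{x,L}$.
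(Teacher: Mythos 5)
Your proof is correct and follows essentially the same route as the paper, which simply integrates the second equality of Proposition \ref{proposition for Omega} from $0$ to $x_c$ and notes that $\ell_{0,L}=0$ (together with the identity $\ell_{\infty,L}=-2\log\sin\frac{L}{2}$ recalled at the start of the appendix). Your additional verification that the antiderivative is continuous at $x=0$ is a welcome bit of extra care, but it is not a different method.
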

\begin{proof}
It suffices to integrate \eqref{Equation for Omega} from $0$ to $x_{c}$ and to note that $\ell_{0,L} = 0$.
\end{proof}

\section{Extension of the RH analysis as $x\to 0$: shrinking Airy and Bessel disks}
\label{appendix: shrinking disks}

In this section we extend the RH analysis, up to now justified for $x$ in compact subsets of $(0,x_c)$, to the case of small $x$. This is the intermediate regime between Case IV and Case I. We first have a closer look at the behaviour of $T=T(x)$ as $x\to 0$.

\begin{proposition}\label{theta1_as_x_to_0}
As $\alpha := \pi - T - L \to 0_{+}$, we have
\begin{equation}\label{eq lambda}
\begin{array}{r c l}
\displaystyle x & = & \displaystyle \frac{\pi}{2} \alpha + \bigO(\alpha^{2}), \\[0.2cm]
\displaystyle \tau & = & - \displaystyle \frac{2i}{\pi} \log(\alpha) + \bigO(1).
\end{array}
\end{equation}
\end{proposition}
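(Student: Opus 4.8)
The plan is to reduce both statements to elementary one–variable integrals by restricting the holomorphic and third–kind differentials to the unit circle, and then to read off the leading behaviour by rescaling near the two branch points $z_0=e^{iL}$ and $z_1=e^{i(\pi-T)}=e^{i(L+\alpha)}$, which collide as $\alpha\to0_+$ (so that $\overline{z_0}$ and $\overline{z_1}$ collide simultaneously, by conjugation symmetry). The basic observation is that for $z=e^{i\theta}$ one has $(z-z_0)(z-\overline{z_0})=2e^{i\theta}(\cos\theta-\cos L)$ and $(z-z_1)(z-\overline{z_1})=2e^{i\theta}(\cos\theta+\cos T)$, hence $R(e^{i\theta})=4e^{2i\theta}(\cos\theta-\cos L)(\cos\theta+\cos T)$, so that along $S^1$
\[
\frac{dz}{\sqrt{R(z)}}=\frac{i\,d\theta}{2\sqrt{(\cos\theta-\cos L)(\cos\theta+\cos T)}},\qquad \widehat\omega=\sqrt{\frac{\cos\theta+\cos T}{\cos\theta-\cos L}}\,d\theta,
\]
with $\widehat\omega$ as in the proof of Proposition~\ref{proposition for Omega}. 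I will use throughout that $\pi-T=L+\alpha$ and $\cos T=-\cos(L+\alpha)$.

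\textbf{Asymptotics of $x$.} Starting from the identity $\int_A\widehat\omega=-2ix$ from the proof of Proposition~\ref{proposition for Omega} and deforming the cycle $A$ onto the two sides of the arc $\Sigma_1$ --- on which $(\cos\theta+\cos T)/(\cos\theta-\cos L)<0$, so that the two sides add up --- one gets
\[
x=\int_L^{L+\alpha}\sqrt{\frac{\cos\theta-\cos(L+\alpha)}{\cos L-\cos\theta}}\,d\theta.
\]
Substituting $\theta=L+\alpha s$ and using $\cos(L+\alpha s)-\cos(L+\alpha)=\alpha\int_s^1\sin(L+\alpha u)\,du=\alpha\sin L\,(1-s)\bigl(1+\bigO(\alpha)\bigr)$ and $\cos L-\cos(L+\alpha s)=\alpha\sin L\, s\bigl(1+\bigO(\alpha)\bigr)$, both uniformly in $s\in[0,1]$, the integrand equals $\sqrt{(1-s)/s}\,\bigl(1+\bigO(\alpha)\bigr)$, whence
\[
x=\alpha\bigl(1+\bigO(\alpha)\bigr)\int_0^1\sqrt{\tfrac{1-s}{s}}\,ds=\frac{\pi}{2}\,\alpha+\bigO(\alpha^2),
\]
since $\int_0^1\sqrt{(1-s)/s}\,ds=\pi/2$. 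No logarithmic correction appears because the two endpoints of integration coincide exactly with the two zeros of the integrand, so after rescaling one is left with a fixed convergent integral.

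\textbf{Asymptotics of $\tau$.} Write $\tau=\int_B\omega=c_0\int_B\frac{dz}{\sqrt{R(z)}}$ with $c_0=\bigl(\int_A\frac{dz}{\sqrt{R(z)}}\bigr)^{-1}$. Deforming $A$ onto $\Sigma_1$ as above gives $\int_A\frac{dz}{\sqrt{R(z)}}=\int_L^{L+\alpha}\bigl((\cos L-\cos\theta)(\cos\theta+\cos T)\bigr)^{-1/2}d\theta$, and the same substitution $\theta=L+\alpha s$ shows this equals $\frac{\pi}{\sin L}\bigl(1+\bigO(\alpha)\bigr)$, so $c_0=\frac{\sin L}{\pi}\bigl(1+\bigO(\alpha)\bigr)$. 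For the $B$–period, deforming $B$ onto the two sides of $\widetilde\gamma$ --- on which $(\cos\theta-\cos L)(\cos\theta+\cos T)>0$ --- yields $\int_B\frac{dz}{\sqrt{R(z)}}=i\int_{\pi-T}^{\pi+T}\bigl((\cos\theta-\cos L)(\cos\theta+\cos T)\bigr)^{-1/2}d\theta$ (the sign of $i$ being fixed by $\tau\in i\mathbb R^+$). I would split this integral into a bulk part, over $\theta$ at a fixed distance from the endpoints, which stays bounded as $\alpha\to0$, and two endpoint parts. Near $\theta=\pi-T=L+\alpha$ the integrand equals $\frac1{\sin L}\bigl((\theta-L)(\theta-L-\alpha)\bigr)^{-1/2}\bigl(1+\bigO(\theta-L)\bigr)$, and the elementary evaluation $\int_{L+\alpha}^{L+\rho}\bigl((\theta-L)(\theta-L-\alpha)\bigr)^{-1/2}d\theta=\log\frac{4\rho}{\alpha}+\bigO(\alpha/\rho)$ produces a contribution $\frac1{\sin L}\log\frac1\alpha+\bigO(1)$; the endpoint $\theta=\pi+T$ contributes the same by the symmetry $\theta\mapsto2\pi-\theta$. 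Hence $\int_B\frac{dz}{\sqrt{R(z)}}=\frac{2i}{\sin L}\log\frac1\alpha+\bigO(1)$, and multiplying by $c_0$,
\[
\tau=\frac{\sin L}{\pi}\bigl(1+\bigO(\alpha)\bigr)\Bigl(\frac{2i}{\sin L}\log\tfrac1\alpha+\bigO(1)\Bigr)=-\frac{2i}{\pi}\log\alpha+\bigO(1).
\]

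The routine parts are the two rescalings and the one explicit logarithmic integral. The part that needs care --- and the main obstacle --- is the bookkeeping of the branches of the square roots and of the orientations of the cycles $A$ and $B$, which is exactly what pins down the factor $i$ and the overall sign in the $B$–period, together with checking that the error terms are genuinely $\bigO(\alpha)$ (resp.\ $\bigO(1)$) rather than $\bigO(\alpha\log\frac1\alpha)$; here it is essential that, for $x$ and for $c_0$, the integration runs precisely between two consecutive zeros of the integrand, while for $\tau$ the logarithmic divergence is confined to two well–separated endpoints where the local model is the explicit integral above.
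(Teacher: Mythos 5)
Your proof is correct and follows essentially the same route as the paper: both reduce the $A$- and $B$-periods to real integrals over the arcs $\Sigma_1$ and $\widetilde\gamma$ and extract the leading behaviour from the local model near the colliding branch points $z_0$ and $z_1$ (the paper uses the exact elementary antiderivatives $\int_a^b\frac{dy}{\sqrt{(b-y)(y-a)}}=\pi$ and $\int_a^b\frac{dy}{\sqrt{(b-y)(c-y)}}=-\log(c-b)+2\log(\sqrt{b-a}+\sqrt{c-a})$ where you rescale and split off the logarithmic endpoint contributions, but this is the same computation). The only other difference is that the starting identity $x=\int_L^{L+\alpha}\sqrt{\tfrac{\cos\theta-\cos(L+\alpha)}{\cos L-\cos\theta}}\,d\theta$, which you rederive from $\int_A\widehat\omega=-2ix$, is simply quoted from \cite{ChCl} in the paper.
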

\begin{proof}
It was shown in \cite[proof of Proposition 3.1]{ChCl} that
\begin{equation}
x = \int_{L}^{L+\alpha} \sqrt{\frac{\cos \theta - \cos (L + \alpha)}{\cos L - \cos \theta}}d\theta.
\end{equation}
The equation for $x$ in \eqref{eq lambda} is therefore straightforward, since for any $a < b$, 
\begin{equation}
\int_{a}^{b} \sqrt{\frac{y-a}{b-y}}dy = \frac{\pi}{2} (b-a).
\end{equation}
To prove the second one, one starts from \eqref{tau def}, which gives
\begin{equation} \label{tau in alpha}
\tau = i\frac{\displaystyle \int_{L+\alpha}^{2\pi - (L+\alpha)} \frac{d\theta}{\sqrt{\cos L - \cos \theta} \sqrt{\cos (L+\alpha) - \cos \theta}} }{\displaystyle \int_{L}^{L+\alpha} \frac{d\theta}{\sqrt{\cos L - \cos \theta} \sqrt{\cos \theta - \cos (L+\alpha)}}}.
\end{equation} 
Since for any $a < b < c$, 
\[
\int_{a}^{b} \frac{dy}{\sqrt{b-y} \sqrt{y-a}} = \pi \] and\[\int_{a}^{b} \frac{dy}{\sqrt{b-y}\sqrt{c-y}} = - \log(c-b) + 2 \log(\sqrt{b-a}+\sqrt{c-a}),
\]
the denominator in \eqref{tau in alpha} is $ \frac{\pi}{\sin L} + \bigO(\alpha)$  and the numerator is $ \frac{-2}{\sin L} \log(\alpha) + \bigO(1)$ as $\alpha \to 0_{+}$, and the result follows.
\end{proof}

If we choose the radii of the disks of the four local parametrices to be, say, $r = \frac{x}{2\pi}$, it follows from Proposition \ref{theta1_as_x_to_0} that the disks do not intersect if $x$ is sufficiently small.

We now estimate the jump matrices for $R$ as $x\to 0$.
\begin{proposition}\label{R_as_x_to_0}
We let $r= \frac{x}{2\pi}$, and we let $R$ be as constructed in Section \ref{section: R}. As $n\to\infty$ and $x\to 0$ in such a way that $xn\to\infty$, we have
\begin{align}
&\label{jump R1} R_{+}(z) = R_{-}(z) \left(I + \bigO(x^{-1}n^{-1})\right), & \mbox{ for } z \mbox{ on $\partial D(\zeta,r)$, $\zeta=z_0,\overline{z_0}, z_1, \overline{z_1}$}, \\[0.2cm]
&\label{jump R2} R_{+}(z) = R_{-}(z) (I+\bigO(e^{-cxn})), & \mbox{ for } z \mbox{ elsewhere on } \Sigma_{R},
\end{align}
and $c > 0$ is a constant independent of $x$ and $n$.
\end{proposition}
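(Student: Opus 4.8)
The plan is to revisit the error analysis of the final transformation $S\mapsto R$ from Section \ref{section: R}, tracking explicitly how the estimates depend on $x$ when the disk radii shrink like $r=\frac{x}{2\pi}$. There are two sources of jumps for $R$: the four circles $\partial D(\zeta,r)$, and the parts of $\Sigma_S\setminus\supp\mu$ lying outside the disks. I would treat these separately, exactly as in the $x$ fixed case, but keeping $x$ as a parameter.

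First I would handle the jumps off the circles, i.e.\ establish \eqref{jump R2}. On $\gamma_\pm$ and $\widetilde\gamma_\pm$ outside the disks, the jump matrix for $S$ differs from the identity by a term of size $e^{-n\Re\phi(z)}$ (resp.\ $e^{-n\Re\widetilde\phi(z)}$), and on $\Sigma_1,\Sigma_2$ outside the disks by $e^{n(\Re\phi(z)-x)}$ (resp.\ $e^{n\Re\widetilde\phi(z)}$). Conjugating by $P^{(\infty)}$ as in \eqref{jump R lenses}, I need the key inequality: for $z$ on the lens boundary at distance $\geq r=\frac{x}{2\pi}$ from the nearest edge, one has $\Re\phi(z)\geq c\,x$ for some $c>0$ independent of $x$ and $n$ (and similarly for the other three exponents, using $\Re(\phi(z)-x)\leq -c\,x$ on $\Sigma_1$, etc.). This should follow from the local behaviour of $\phi$ near the Bessel edge $z_0$, where $\phi(z)\sim \mathrm{const}\cdot (z-z_0)^{1/2}$, so that $\Re\phi$ on the lens at distance $r$ from $z_0$ is of order $\sqrt{r}=\sqrt{x/2\pi}$, which is $\gg x$ for small $x$; near the Airy edge $z_1$, $\phi$ (resp.\ $\widetilde\phi$) vanishes like $(z-z_1)^{3/2}$, giving $\Re\widetilde\phi\sim r^{3/2}\sim x^{3/2}$, again $\gg x$. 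Also I must control $P^{(\infty)}$ and its inverse away from the edges: from the explicit formula \eqref{final expression Pinf}, $P^{(\infty)}=\bigO(|z-\zeta|^{-1/4})$ near each edge, so on the lens boundary at distance $\geq r$ it is $\bigO(r^{-1/4})=\bigO(x^{-1/4})$, and the theta-quotients $\Theta_{jk}$ remain bounded (the denominators $\theta(u(z)+d)$ stay away from zero on the lens boundary away from edges). Multiplying these polynomial-in-$x^{-1}$ prefactors against the exponentially small $e^{-c\,xn}$ (which, since $xn\to\infty$, dominates any negative power of $x$) yields \eqref{jump R2}.

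Next, for \eqref{jump R1} I would re-examine the matching conditions \eqref{matching P z0} and \eqref{matching P z1} on the shrinking circles. In the Bessel case, from \eqref{P local para Bessel}, \eqref{E in Bessel} and the large-$\zeta$ expansion of $\widehat\Psi$, the jump $P(z)P^{(\infty)}(z)^{-1}=I+\bigO\big((n^2\zeta(z))^{-1/2}\big)$ plus the contribution of the nilpotent term $A(n^2\zeta)$ which carries a factor $e^{-nx}$. On $\partial D(z_0,r)$ with $r=\frac{x}{2\pi}$ we have $|\zeta(z)|\asymp r^2\asymp x^2$, so $(n^2\zeta)^{-1/2}\asymp (n x)^{-1}$, and $e^{-nx}\log(n^2\zeta)$ is also $\bigO((nx)^{-1})$ since $nx\to\infty$; conjugating by the bounded-enough factors gives $\bigO(x^{-1}n^{-1})$. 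In the Airy case, from \eqref{def P Airy}, \eqref{E in airy}, the matching error on $\partial D(\overline{z_1},r)$ is $\bigO\big((n^{2/3}\widetilde\zeta(z))^{-3/2}\big)=\bigO\big((n\widetilde\phi)^{-1}\big)$; since $|\widetilde\phi(z)|\asymp r^{3/2}\asymp x^{3/2}$ on the circle, this is $\bigO\big((n x^{3/2})^{-1}\big)=\bigO(x^{-1}n^{-1})$ using $x^{-3/2}n^{-1}\leq x^{-1}n^{-1}$ for small $x$. Collecting the worst of the two gives \eqref{jump R1}.

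The main obstacle I anticipate is the uniform control of the prefactors coming from $P^{(\infty)}$ and the theta functions as $x\to 0$: by Proposition \ref{theta1_as_x_to_0}, $\tau\sim -\frac{2i}{\pi}\log\alpha\to i\infty$ as $\alpha\sim\frac{2x}{\pi}\to0$, so the Riemann surface degenerates and the elliptic theta-function $\theta(\cdot;\tau)$ must be analyzed in this limit. One needs that $\theta(0;\tau)$, $\theta(n\Omega;\tau)$ and the relevant quotients $\Theta_{jk}$ stay bounded and bounded away from zero, uniformly in $n$, as $\tau\to i\infty$; since $\theta(v;\tau)=1+2e^{\pi i\tau}\cos(2\pi v)+\cdots\to 1$ uniformly for real $v$ as $\Im\tau\to\infty$, this is fine, but it has to be checked that $n\Omega$ does not create a problem (it is real, so $\theta(n\Omega;\tau)\to 1$), and that the conformal maps $\zeta$, $\widetilde\zeta$ have derivatives at the edges that remain bounded away from $0$ and $\infty$ as $x\to 0$ — which again follows from Proposition \ref{theta1_as_x_to_0} and the explicit formulas for $\phi$, $\widetilde\phi$. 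Once this degeneration is under control, the rest is a routine repetition of the small-norm RH estimates with $x$ carried along.
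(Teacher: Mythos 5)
Your overall strategy is the same as the paper's (rerun the small-norm estimates for $R$ keeping track of $x$, and control the degeneration of the global parametrix as $\tau\to i\infty$), but several of your concrete estimates are wrong because you ignore that the ``constants'' in the local expansions of $\phi$, $\widetilde\phi$ and $\beta$ themselves degenerate as $x\to 0$. Since $|z_0-z_1|\asymp\alpha\asymp x$ by Proposition \ref{theta1_as_x_to_0}, the coefficient of $(z-z_0)^{1/2}$ in $\phi$ is of order $x^{1/2}$ and the coefficient of $(z-\overline{z_1})^{3/2}$ in $\widetilde\phi$ is of order $x^{-1/2}$; hence on the circles of radius $r=\frac{x}{2\pi}$ one has $|\phi|\asymp|\widetilde\phi|\asymp r\asymp x$ (this is exactly the paper's claim $|\phi(z)/r|\geq\epsilon$), \emph{not} $\sqrt{x}$ at the Bessel edge and \emph{not} $x^{3/2}$ at the Airy edge. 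Your Airy step is then doubly wrong: the premise $|\widetilde\phi|\asymp x^{3/2}$ is false, and the inequality ``$x^{-3/2}n^{-1}\leq x^{-1}n^{-1}$ for small $x$'' is backwards, so as written that step would yield the insufficient bound $\bigO(x^{-3/2}n^{-1})$.

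The second genuine gap is your control of $P^{(\infty)}$. The crude bound $P^{(\infty)}(z)=\bigO(|z-\zeta|^{-1/4})=\bigO(x^{-1/4})$ at distance $r$ from an edge is not good enough: conjugation by $P^{(\infty)}$ and its inverse then costs a factor $x^{-1/2}$, turning the circle estimate into $\bigO(x^{-3/2}n^{-1})$ and the lens estimate into $\bigO(x^{-1/2}e^{-c'xn})$, and the latter \emph{cannot} be absorbed into $e^{-cxn}$ under the sole hypothesis $xn\to\infty$ (take $x\asymp\log\log n/n$: then $\log(1/x)\asymp\log n\gg xn$). The paper avoids this by proving that $P^{(\infty)}$ is uniformly \emph{bounded} on $\Sigma_R$: in $\beta(z)^4=\frac{(z-\overline{z_0})(z-z_1)}{(z-z_0)(z-\overline{z_1})}$ the small factor $|z-z_1|\asymp x$ compensates the small denominator $|z-z_0|=r\asymp x$, so $\beta^{\pm 1}=\bigO(1)$ on the shrinking circles. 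Finally, for the theta quotients, the observation that $\theta(y;\tau)\to 1$ uniformly for \emph{real} $y$ handles $\theta(0)/\theta(n\Omega)$ but not $\theta(u(z)\pm d\mp n\Omega)/\theta(u(z)\pm d)$, since $\Im(u(z)\pm d)$ is of order $|\tau|$ on the shrinking circles; one needs the quasi-periodicity \eqref{periodicity of theta} together with the localization of $\Im(u(z)\pm d)$ in bands around $\pm\frac{1}{4}|\tau|$ and $\frac{3}{4}|\tau|$, as in the paper's proof. You correctly flag this as the main obstacle, but the resolution you sketch does not cover the complex-argument case.
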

\begin{proof}
By \eqref{jump R circles} and \eqref{P local para Bessel}--\eqref{def P Airy}, the jump matrices for $R$ are 
\begin{equation}\label{jump Rbis}
\begin{array}{l l}
I + P^{(\infty)}(z) \bigO \left( \frac{1}{n \phi(z)} \right) P^{(\infty)}(z)^{-1}, & \mbox{ for } z \in \partial D(z_{0},r),   \\
I + P^{(\infty)}(z) \bigO \left( \frac{1}{n \widetilde{\phi}(z)} \right) P^{(\infty)}(z)^{-1}, & \mbox{ for } z \in \partial D(\overline{z_{1}},r), \\
\end{array}
\end{equation}
and similarly on $\partial D(\overline{z_{0}},r)$ and $\partial D(z_{1},r)$. Uniformly for $z \in \mathcal{D}_{x} := \bigcup_{\zeta=z_{0},\overline{z_{0}},z_{1},\overline{z_{1}}} \partial D(\zeta,r)$, we have from \eqref{def_of_phi} and \eqref{def_of_tilde_phi}, as $x \to 0$, that there exists $\epsilon > 0$ such that $|\frac{\phi(z)}{r}| \geq \epsilon > 0$ and $|\frac{\widetilde{\phi}(z)}{r}| \geq \epsilon > 0$. To obtain \eqref{jump R1} from \eqref{jump Rbis}, it is now sufficient to prove that $
P^{(\infty)}(z)$ is uniformly bounded for $n$ large, $x$ small and for $z \in \Sigma_{R}$.

For $z$ outside the four disks, since the zero of $\theta(u(z)-d)$ is cancelled by the one of $\beta(z)-\beta^{-1}(z)$, it is straightforward to see that $P^{(\infty)}(z) = \bigO(1)$ outside of a neighbourhood of the four edge points, as $n\to \infty$ and $x\to 0$. On $\partial D(z_{0},r)$, from \eqref{equation for beta} and \eqref{eq lambda}, we have 
\begin{equation}
|\beta(z)|^{4} \leq 2 \frac{\frac{2}{\pi} x + r}{r} = 10 \qquad \mbox{ and } \qquad |\beta(z)^{-1}|^{4} \leq 2 \frac{r}{\frac{2}{\pi} x - r} = \frac{2}{3}.
\end{equation}
A similar computation shows that $\beta$ and $\beta^{-1}$ are also bounded on the other disks.

By \eqref{final expression Pinf}, we still need to show that
\begin{equation}\label{limsupTheta}
\sup_{\substack{z \in \mathcal{D}_{x} \\ y \in \mathbb{R}}} \Theta_{kl}(z), \qquad \mbox{ with } 1 \leq k,l \leq 2,
\end{equation}
are uniformly bounded for $n$ large and $x$ small.
From Proposition \ref{theta1_as_x_to_0}, we have that $|\tau| \to \infty$ as $x \to 0$.
It is clear from \eqref{def theta} that
\begin{equation}
\lim_{\tau \to i\infty} \sup_{y \in \mathbb{R}} \left| \theta(y;\tau) - 1 \right| = 0,
\end{equation} and it follows that  $ \frac{\theta(0;\tau)}{\theta(n\Omega;\tau)} \to 1$ as $n \to \infty$, $x\to 0$.

A next important observation is that there exist $x_{0} > 0$ and $0<\epsilon<\frac{1}{4}$ such that for all $x \leq x_{0}$ and $z \in \mathcal{D}_{x}$ we have
\begin{multline}
\Im (u(z) \pm d) \in \left[-\left( \frac{1}{4} + \epsilon \right) |\tau|,-\left( \frac{1}{4} - \epsilon \right) |\tau| \right] \cup \left[ \left( \frac{1}{4} - \epsilon \right) |\tau|, \left( \frac{1}{4} + \epsilon \right) |\tau| \right] \\ \cup \left[ \left( \frac{3}{4} - \epsilon \right) |\tau|, \left( \frac{3}{4} + \epsilon \right) |\tau| \right].
\end{multline}
From the periodicity properties  \eqref{periodicity of theta} of $\theta$, we have also that for any $y \in \mathbb{R}$,
\begin{equation}
\frac{|\theta(u(z) - d + y)|}{|\theta(u(z) - d)|} = \frac{|\theta(u(z) - d + y - \tau)|}{|\theta(u(z) - d - \tau)|}.
\end{equation}
Combining these two facts with the following estimate for any $0 \leq \delta < 1$,
\begin{equation}
\sup_{-\delta \frac{|\tau|}{2} \leq \Im (z) \leq \delta \frac{|\tau|}{2}} |\theta(z;\tau) - 1|  \leq 2 e^{-\pi |\tau| (1-\delta)} + \frac{2 e^{-2\pi |\tau|}}{1-e^{-\pi |\tau|}} \to 0, \qquad \mbox{ as } \tau \to i\infty,
\end{equation} we obtain that \eqref{limsupTheta} is bounded, which leads the first estimate in \eqref{jump R1}. By \eqref{jump R lenses}, we similarly obtain \eqref{jump R2}.
\end{proof}

If we choose $x$ such that $x\to 0$ and $x n \to \infty$ as $n \to \infty$, we thus have that $R$ exists for sufficiently large $n$ and satisfies
\begin{equation}
R(z) = I + \bigO(x^{-1}n^{-1}),\qquad R^{\prime}(z) = \bigO(x^{-1}n^{-1}).  \label{as R general}
\end{equation}
This allows one to generalize Theorem \ref{theorem: Toeplitz} and Theorem \ref{theorem macro} to the case where $x\to 0$ and $n\to\infty$ in such a way that $xn\to\infty$.
From these asymptotics for $R$, we can obtain asymptotics for the orthogonal polynomials $\phi_n$, for the Toeplitz determinants, the mean eigenvalue density, and the eigenvalue correlation kernel. In particular, as $x\to 0$ and $n\to\infty$ in such a way that $xn\to\infty$, we have that the zero counting measure $\nu_{n,s,L}$ of $\phi_n$ and the mean eigenvalue distribution $\mu_{n,s,L}$ converge weakly to the uniform measure on the unit circle. For the eigenvalue correlation kernel $K_n$, we have convergence to the sine kernel everywhere except near the points $e^{\pm i L}$.

\section*{Acknowledgements}
CC was supported by FNRS, TC was supported by the European Research Council under the European Union's Seventh Framework Programme (FP/2007/2013)/ ERC Grant Agreement n.\, 307074. Both authors also acknowledge support by the Belgian Interuniversity Attraction Pole P07/18.

\end{document}